\numberwithin{equation}{section}
\def\bn{{\mathbb N}}
\def\r{\rho}
\def\itm#1{\item{$(#1)$}}
\DeclareMathOperator{\Var}{Var} 
\DeclareMathOperator{\Cov}{Cov}
\def\de{\partial}
\newtheorem{theorem}{Theorem}[section]
\newtheorem{conjecture}{Conjecture}[section]
\newtheorem{proposition}[theorem]{Proposition}
\newtheorem{corollary}[theorem]{Corollary}
\newtheorem{definition}{Definition}[section]
\newtheorem{example}{Example}[section]
\newtheorem{lemma}[theorem]{Lemma}
\theoremstyle{definition}
\newtheorem{remark}{Remark}[section]
\begin{document}

\title{A volume inequality for quantum Fisher information and the
uncertainty principle}

\author{ Paolo Gibilisco\footnote{Dipartimento SEFEMEQ, Facolt\`a di
Economia, Universit\`a di Roma ``Tor Vergata", Via Columbia 2, 00133
Rome, Italy.  Email: gibilisco@volterra.uniroma2.it -- URL:
http://www.economia.uniroma2.it/sefemeq/professori/gibilisco}, Daniele
Imparato\footnote{Dipartimento di Matematica, Politecnico di Torino,
Corso Duca degli Abruzzi 24, 10129 Turin, Italy.  Email:
daniele.imparato@polito.it} \ and Tommaso Isola\footnote{Dipartimento
di Matematica, Universit\`a di Roma ``Tor Vergata", Via della Ricerca
Scientifica, 00133 Rome, Italy.  Email: isola@mat.uniroma2.it -- URL:
http://www.mat.uniroma2.it/$\sim$isola} }

\maketitle

\begin{abstract}

Let $A_1,...,A_N$ be complex self-adjoint matrices and let $\rho$ be a
density matrix.  The Robertson uncertainty principle
$$
{\rm det}\left\{ {\rm Cov}_{\rho}(A_h,A_j) \right\} \geq {\rm det}
\left\{ - \frac{i}{2} {\rm Tr}(\rho [A_h,A_j])\right\}
$$
gives a bound for the quantum generalized covariance in terms of the
commutators $ [A_h,A_j]$.  The right side matrix is antisymmetric and
therefore the bound is trivial (equal to zero) in the odd case
$N=2m+1$.

Let $f$ be an arbitrary normalized symmetric operator monotone
function and let $\langle \cdot, \cdot \rangle_{\rho,f}$ be the
associated quantum Fisher information.  In this paper we conjecture
the inequality
$$
{\rm det}\left\{ {\rm Cov}_{\rho}(A_h,A_j) \right\} \geq {\rm det} \left\{ \frac{f(0)}{2} \langle i[\rho, A_h],i[\rho,A_j] \rangle_{\rho,f}\right\}
$$
that gives a non-trivial bound for any $N \in {\mathbb N}$ using the
commutators $i[\rho, A_h]$.  The inequality has been proved in the
cases $N=1,2$ by the joint efforts of many authors (see the
Introduction).  In this paper we prove the (real) case $N=3$.

\smallskip

\noindent 2000 {\sl Mathematics Subject Classification.} Primary
62B10, 94A17; Secondary 46L30, 46L60.

\noindent {\sl Key words and phrases.} Generalized variance,
uncertainty principle, operator monotone functions, matrix means,
quantum Fisher information.

\end{abstract}


\section{Introduction}

Let $(V, g(\cdot,\cdot))$ be a real inner-product vector space and
suppose that $v_1, ...  , v_N \in V$.  The real $N \times N$ matrix
$G:=\{g(v_h,v_j) \}$ is positive semidefinite and one can define ${\rm
Vol}^g(v_1, ...  , v_N):=\sqrt{{\rm det}\{g(v_h,v_j) \}}$.  If the
inner product depends on a further parameter in such a way that
$g(\cdot,\cdot)=g_{\rho}(\cdot,\cdot)$, we write ${\rm Vol}^g(v_1, ... 
, v_N)={\rm Vol}_{\rho}^g(v_1, ...  , v_N)$.

As an example, consider a probability space $(\Omega, {\cal G},\rho)$
and let $V={\cal L}^2_{\mathbb R}(\Omega, {\cal G},\rho)$ be the space
of square integrable real random variables endowed with the scalar
product given by the covariance ${\rm Cov}_{\rho}(A,B):={\rm
E}_{\rho}(AB)-{\rm E}_{\rho}(A){\rm E}_{\rho}(B)$.  For $A_1, ..., A_N
\in {\cal L}^2_{\mathbb R}(\Omega, {\cal G},\rho)$, $G$ is the well
known covariance matrix and one has
\begin{equation}
{\rm Vol}_{\rho}^{\rm Cov}(A_1, ... , A_N) \geq 0. \label{gvar}
\end{equation}
The expression ${\rm det} \{{\rm Cov}_{\rho}(A_h,A_j\}$ is known as
the generalized variance of the random vector $(A_1,...,A_N)$ and, in
general, one cannot expect a stronger inequality.  For instance, when
$N=1$, $(\ref{gvar})$ just reduces to ${\rm Var}_{\rho}(A) \geq 0$.

In non-commutative probability the situation is quite different due to
the possible non-triviality of the commutators $[A_h,A_j] $.  Let
$M_{n,sa}:=M_{n,sa}(\mathbb{C})$ be the space of all $n \times n$
self-adjoint matrices and let ${\cal D}_n^1$ be the set of strictly
positive density matrices (faithful states).  For $A,B \in M_{n,sa}$
and $\rho \in {\cal D}_n^1$ define the (symmetrized) covariance as
${\rm Cov}_{\rho}(A,B):=1/2[{\rm Tr}(\rho A B)+{\rm Tr}(\rho B
A)]-{\rm Tr}(\rho A)\cdot{\rm Tr}(\rho B)$.  If $A_1, ...  , A_N$ are
self-adjoint matrices one has
\begin{equation}
    {\rm Vol}_{\rho}^{\rm Cov}(A_1, ...  , A_N) \geq \begin{cases} 0,
    & N=2m+1,\\
    {\rm det} \{-\frac{i}{2}{\rm Tr}(\rho [A_h,A_j])\}^{\frac{1}{2}},
    & N=2m.
\end{cases}\label{nup}
\end{equation}
Let us call (\ref{nup}) the ``standard" uncertainty principle to
distinguish it from other inequalities like the ``entropic"
uncertainty principle and similar inequalities.  Inequality
(\ref{nup}) is due to Heisenberg, Kennard, Robertson and Schr\"odinger
for $N=2$ (see \cite{Heisenberg:1927} \cite{Kennard:1927}
\cite{Robertson:1929} \cite{Schroedinger:1930}).  The general case is
due to Robertson (see \cite{Robertson:1934}).  Examples of recent
references where inequality (\ref{nup}) plays a role are given by
\cite{Trifonov:1994} \cite{Trifonov:2000} \cite{Trifonov:2002}
\cite{DodonovDodonovMizrahi:2005} \cite{Daoud:2006}
\cite{JarvisMorgan:2006}.

Suppose one is looking for a general inequality of type (\ref{nup})
giving a bound also in the odd case $N=2m+1$. If one considers the
case $N=1$, it is natural to seek such an inequality in terms of the
commutators $[\rho,A_h]$.

One of the purposes of the present paper is to state a conjecture
regarding an inequality similar to (\ref{nup}) but not trivial for any
$N \in {\mathbb N}$.  Let ${\cal F}_{op}$ be the family of symmetric
normalized operator monotone functions.  To each element $f \in {\cal
F}_{op}$ one may associate a $\rho$-depending scalar product $\langle
\cdot , \cdot \rangle_{\rho,f}$ on the self-adjoint (traceless)
matrices, which is a quantum version of the Fisher information (see
\cite{Petz:1996}).  Let us denote the associated volume by ${\rm
Vol}_{\rho}^f$.  We conjecture that for any $N \in {\mathbb N}^+$
(this is one of the main differences from (1.2)) and for arbitrary
self-adjoint matrices $A_1, ...  , A_N$ one has
\begin{equation}
    {\rm Vol}_{\rho}^{\rm Cov}(A_1, ...  , A_N) \geq
    \left(\frac{f(0)}{2}\right)^{\frac{N}{2}}{\rm
    Vol}_{\rho}^{f}(i[\rho,A_1], ...  , i[\rho,A_N]).  \label{conj}
\end{equation}

The cases $N=1,2$ of inequality (\ref{conj}) have been proved by the
joint efforts of a number of authors in several papers: S. Luo, Q.
Zhang, Z. Zhang (\cite{Luo:2000} \cite{Luo:2003b}
\cite{LuoZZhang:2004} \cite{LuoQZhang:2004} \cite{LuoQZhang:2005}); H.
Kosaki (\cite{Kosaki:2005}); K. Yanagi, S. Furuichi, K. Kuriyama (see
\cite{YanagiFuruichiKuriyama:2005}); F. Hansen (\cite{Hansen:2006b});
P. Gibilisco, D. Imparato, T. Isola (\cite{GibiliscoIsola:2007}
\cite{GibiliscoImparatoIsola:2007a}).

In this paper we discuss the inequality (\ref{conj}) when $N=3$ and
prove it in the real case and for some classes of complex self-adjoint
matrices (including Pauli and generalized Gell-Mann matrices).

It is well known that standard uncertainty principle is a simple
consequence of the Cauchy-Schwartz inequality for $N=2$.  It is worth
to note that for the inequality (\ref{conj}) the same role is played
by the Kubo-Ando inequality
$$
2(A^{-1}+B^{-1})^{-1} \leq m_f(A,B)\leq \frac{1}{2}(A+B)
$$
saying that any operator mean is larger than the harmonic mean and
smaller than the arithmetic mean.

The scheme of the paper is as follows.  In Section \ref{prel} we
describe the preliminary notions of operator monotone functions,
matrix means and quantum Fisher information.  In Section \ref{tilde}
we discuss a correspondence between regular and non-regular operator
monotone functions that is needed in the sequel.  In Section \ref{con}
we state our conjecture, namely the inequality (\ref{conj}); we also
state other two conjectures concerning how the right side depends on
$f \in {\cal F}_{op}$ and the conditions to have equality in
(\ref{conj}).  In Section \ref{L} we discuss the case $N=1$ of
(\ref{conj}) presenting the different available proofs.  In Section
\ref{A} we discuss the case $N=2$; here we prove that, while the
technique employed in \cite{GibiliscoImparatoIsola:2007a} works in
both cases $N=1,2$, the technique used in \cite{Hansen:2006b} does
not.  To this purpose, we show that the generalized variance is not a
concave (neither a convex) function of the state.  In Section
\ref{core} we treat the case $N=3$; we are able to prove the
conjectures for real self-adjoint matrices and in other significant
cases.  In \cite{LuoZZhang:2004} it has been proved that the
Wigner-Yanase metric (correlation) has some advantages on covariance
when one aims to measure entanglement; in Section \ref{ent} we show,
for the sake of completeness, that the above argument holds true for
any regular quantum Fisher information.

\section{Operator monotone functions, matrix means and quantum Fisher
information} \label{prel}

Let $M_n:=M_n(\mathbb{C})$ (resp.  $M_{n,sa}:=M_{n,sa}(\mathbb{C})$)
be the set of all $n \times n$ complex matrices (resp.  all $n \times
n$ self-adjoint matrices).  We shall denote general matrices by
$X,Y,...$ while letters $A,B,...$ will be used for self-adjoint
matrices, endowed with the Hilbert-Schmidt scalar product $\langle A,B
\rangle={\rm Tr}(A^*B)$.  The adjoint of a matrix $X$ is denoted by
$X^{\dag}$ while the adjoint of a superoperator $T:(M_n,\langle
\cdot,\cdot \rangle) \to (M_n ,\langle \cdot,\cdot \rangle)$ is
denoted by $T^*$.  Let ${\cal D}_n$ be the set of strictly positive
elements of $M_n$ and ${\cal D}_n^1 \subset {\cal D}_n$ be the set of
strictly positive density matrices, namely $ {\cal D}_n^1=\{\rho \in
M_n \vert {\rm Tr} \rho=1, \, \rho>0 \} $.  If it is not otherwise
specified, from now on we shall treat the case of faithful states,
namely $\rho>0$.

A function $f:(0,+\infty)\to \mathbb{R}$ is said {\it operator
monotone (increasing)} if, for any $n\in \bn$, and $A$, $B\in M_n$
such that $0\leq A\leq B$, the inequalities $0\leq f(A)\leq f(B)$
hold.  An operator monotone function is said {\it symmetric} if
$f(x)=xf(x^{-1})$ and {\it normalized} if $f(1)=1$.

\begin{definition}
    ${\cal F}_{op}$ is the class of functions $f: (0,+\infty) \to
    (0,+\infty)$ such that

    \itm{i} $f(1)=1$,

    \itm{ii} $tf(t^{-1})=f(t)$,

    \itm{iii} $f$ is operator monotone.
\end{definition}

\begin{example}
    Examples of elements in ${\cal F}_{op}$ are given by the following
    list
    \[
    \begin{array}{rcllrcl}
	f_{RLD}(x)&:=&\frac{2x}{x+1},&&
	f_{WY}(x)&:=&\left(\frac{1+\sqrt{x}}{2}\right)^2,\\[12pt]
	f_{SLD}(x)&:=&\frac{1+x}{2},&& 
	f_{WYD(\beta)}(x)&:=& \beta (1-
	\beta) \frac{(x-1)^2}{(x^{\beta}-1) (x^{1-\beta}-1)},\qquad
	\beta \in \Bigl(0,\frac{1}{2}\Bigr).  
    \end{array}
    \]
\end{example}

 We now report Kubo-Ando theory of matrix means (see
\cite{KuboAndo:1979/80}) as exposed in \cite{PetzTemesi:2005}.

\begin{definition}
 A {\sl mean} for pairs of positive matrices is a function
$m:{\cal D}_n \times {\cal D}_n \to {\cal D}_n$ such that

(i) $m(A,A)=A$,

(ii) $m(A,B)=m(B,A)$,

(iii) $A <B  \quad \Longrightarrow \quad A<m(A,B)<B$,

(vi) $A<A', \quad B<B' \quad \Longrightarrow \quad m(A,B)<m(A',B')$,

(v) $m$ is continuous,

(vi) $Cm(A,B)C^* \leq m(CAC^*,CBC^*)$, for every $ C \in M_n$.
\end{definition}

Property $(vi)$ is known as the transformer inequality. We denote by
$\displaystyle {\cal M}_{op}$ the set of matrix means. The
fundamental result, due to Kubo and Ando, is the following.

\begin{theorem}
    There exists a bijection between ${\cal M}_{op}$ and ${\cal
    F}_{op}$ given by the formula
    $$
    m_f(A,B):= A^{\frac{1}{2}}f(A^{-\frac{1}{2}} B
    A^{-\frac{1}{2}})A^{\frac{1}{2}}.
    $$
\end{theorem}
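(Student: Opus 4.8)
The plan is to establish the correspondence $f \mapsto m_f$ in both directions: that each $f \in \mathcal{F}_{op}$ produces via the stated formula a genuine mean $m_f$ (well-definedness), that distinct $f$ give distinct means (injectivity), and that every $m \in \mathcal{M}_{op}$ arises in this way (surjectivity). The unifying observation is that the function attached to a mean is simply its scalar restriction $f(x) := m(1,x)$, obtained by applying $m$ to the commuting pair $(I, xI)$. First I would show that the scalar restriction of $m_f$ recovers $f$ (so evaluating at scalar pairs gives injectivity at once), and then that knowing a mean on such pairs already determines it everywhere.

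For the surjective direction I would lean entirely on the transformer inequality (vi), used as a two-sided tool: applying (vi) to a congruence by $C$ and then by $C^{-1}$ turns it into an equality whenever $C$ is invertible, which is legitimate here since we work with faithful states $\rho>0$ and continuity (v) reduces the general case to invertible $C$. Taking $C=\sqrt{\lambda}\,I$ yields the homogeneity $m(\lambda A,\lambda B)=\lambda\, m(A,B)$; taking $C=U$ unitary yields the covariance $U^* m(A,B) U = m(U^*AU, U^*BU)$. Specializing the first slot to $I$ shows that $m(I,\cdot)$ commutes with every spectral projection of its argument, so $m(I,B)=f(B)$ with $f(x)=m(1,x)$ extended by functional calculus. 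Finally, taking $C=A^{-1/2}$ and $C=A^{1/2}$ gives the congruence reduction
\[
A^{-1/2} m(A,B) A^{-1/2} = m\bigl(I, A^{-1/2} B A^{-1/2}\bigr) = f\bigl(A^{-1/2} B A^{-1/2}\bigr),
\]
which is precisely $m=m_f$ after conjugating back by $A^{1/2}$. That $f\in\mathcal{F}_{op}$ is then read off from the axioms: normalization $f(1)=m(1,1)=1$ from (i); symmetry $t f(t^{-1})=m(t,1)=m(1,t)=f(t)$ from (ii) together with homogeneity; and operator monotonicity from the monotonicity axiom (iv), since $A\le B$ forces $f(A)=m(I,A)\le m(I,B)=f(B)$.

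For the well-definedness direction I must verify that $m_f$ really satisfies (i)--(vi). Properties (i), (ii) (via the symmetry $t f(t^{-1})=f(t)$), and continuity (v) are direct computations with the formula; the order property (iii) follows from the pinching $\frac{2x}{x+1}\le f(x)\le\frac{1+x}{2}$, itself a consequence of monotonicity, normalization and symmetry. The substance, and what I expect to be the main obstacle, is the transformer inequality (vi). The clean route is to invoke the fact that a normalized operator monotone function on $(0,\infty)$ is \emph{operator concave} (L\"owner's theorem), from which the transformer inequality for $m_f$ follows by expressing the congruence $C(\cdot)C^*$ through the operator-concavity inequality for $f$ under the maps $X\mapsto A^{-1/2}XA^{-1/2}$. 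A secondary delicate point is that the $f$ produced in the surjective direction must be operator monotone in the sense demanded of elements of $\mathcal{F}_{op}$, i.e.\ for \emph{all} $n$, not merely monotone on one fixed $M_n$; I would therefore treat $m$ as a compatible family of means across all matrix sizes and use this compatibility when verifying operator monotonicity of the associated $f$.
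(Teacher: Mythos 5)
The paper itself contains no proof of this theorem: it is quoted from Kubo and Ando \cite{KuboAndo:1979/80} as exposed in \cite{PetzTemesi:2005}, so your proposal can only be judged against the classical argument, whose skeleton you reproduce correctly: the scalar restriction $f(x):=m(1,x)$, the upgrade of the transformer inequality to an equality under invertible congruences (apply it with $C$ and then with $C^{-1}$), homogeneity and unitary covariance as special cases, the reduction $A^{-1/2}m(A,B)A^{-1/2}=m(I,A^{-1/2}BA^{-1/2})$, and the reading-off of normalization, symmetry (your use of homogeneity in $m(t,1)=t\,m(1,t^{-1})$ is right) and monotonicity of $f$. Two steps, however, hide genuine gaps. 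First, ``$m(I,\cdot)$ commutes with every spectral projection of its argument'' does not by itself yield $m(I,B)=f(B)$. Unitary covariance applied to unitaries commuting with $B$ gives $m(I,B)\in\{B\}''$, i.e.\ $m(I,B)=g(B)$ for \emph{some} function $g$ a priori depending on $B$; what must be shown is that the scalar by which $m(I,B)$ acts on the $\lambda$-eigenspace equals $m(1,\lambda)$ independently of the rest of the spectrum and of the ambient dimension. That identification needs a direct-sum reduction (or the across-sizes compatibility you mention only in passing), and in Kubo--Ando it is secured by defining means on all positive operators with continuity from above --- a resource the axioms here, with $m$ defined only on $\mathcal{D}_n\times\mathcal{D}_n$, do not hand you for free. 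You flagged the right issue but parked it as a footnote; it is where the work sits.

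Second, and more seriously, your treatment of the transformer inequality for $m_f$ is not a derivation. The gesture ``operator concavity of $f$ expressed through the maps $X\mapsto A^{-1/2}XA^{-1/2}$'' names the wrong congruence: for non-invertible $C$ the operator $(CAC^*)^{-1/2}$ does not exist and no such reduction survives. What actually closes this step is either (a) the Hansen--Pedersen operator Jensen inequality $f(V^*XV)\ge V^*f(X)V$ for contractions $V$ (valid since positive operator monotone functions are operator concave with $f(0)\ge 0$), applied to a contraction built from $C$ and $A^{1/2}$ via generalized inverses, or (b) L\"owner's integral representation, which writes $m_f$ as an average of weighted harmonic means, each satisfying the transformer inequality through the variational characterization of the parallel sum. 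Note also that you have located the difficulty in the wrong place relative to this paper's framework: since means here are defined only on strictly positive matrices, the transformer inequality makes literal sense only for invertible $C$, where the direct computation with the unitary $D=(CAC^*)^{-1/2}CA^{1/2}$ gives \emph{equality} with no L\"owner-theoretic input at all; it is only in the full Kubo--Ando setting (singular $C$, boundary states) that the concavity or integral-representation machinery becomes indispensable, and there your sketch has a real hole that (a) or (b) must fill.
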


\begin{example}
    The arithmetic, geometric and harmonic (matrix) means are given
    respectively by
    \[
    \begin{array}{rcl}
	A \nabla B&:=&\frac{1}{2}(A+B),\\[12pt]
	A\# B&:=&A^{\frac{1}{2}}(A^{-\frac{1}{2}} B
	A^{-\frac{1}{2}})^{\frac{1}{2}}A^{\frac{1}{2}}, \\[12pt]
	A{\rm !}B&:=&2(A^{-1}+B^{-1})^{-1}.
    \end{array}
    \]
    They correspond respectively to the operator monotone functions
    $\frac{x+1}{2},\sqrt{x},\frac{2x}{x+1}$.
\end{example}

Kubo and Ando \cite{KuboAndo:1979/80} proved that, among matrix means,
arithmetic is the largest while harmonic is the smallest. 

\begin{proposition}
    For any $f \in {\cal F}_{op}$ one has
    $$
    2(A^{-1}+B^{-1})^{-1} \leq m_f(A,B)\leq \frac{1}{2}(A+B).
    $$
\end{proposition}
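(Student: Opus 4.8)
The plan is to use the Kubo--Ando representation theorem just stated to collapse the two operator inequalities into a single \emph{scalar} inequality for $f$, and then to read that scalar inequality off from the concavity of operator monotone functions together with the normalization and symmetry axioms defining ${\cal F}_{op}$.

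First I would reduce to scalars. Fix $A,B \in {\cal D}_n$ and set $T := A^{-1/2}BA^{-1/2} > 0$. Since the congruence $X \mapsto A^{-1/2}XA^{-1/2}$ preserves the L\"owner order (it is implemented by an invertible self-adjoint matrix), I may conjugate the whole chain by $A^{-1/2}$. Direct computation gives
$$
A^{-1/2}\Big(\tfrac12(A+B)\Big)A^{-1/2} = \tfrac12(I+T), \qquad A^{-1/2}\,m_f(A,B)\,A^{-1/2} = f(T),
$$
and, writing $2(A^{-1}+B^{-1})^{-1} = 2A^{1/2}(I+T^{-1})^{-1}A^{1/2}$,
$$
A^{-1/2}\Big(2(A^{-1}+B^{-1})^{-1}\Big)A^{-1/2} = 2(I+T^{-1})^{-1} = \frac{2T}{I+T}.
$$
By the spectral theorem applied to $T$, the two operator inequalities are thus equivalent to the pointwise scalar bounds $\frac{2x}{1+x} \le f(x) \le \frac{1+x}{2}$ for all $x \in (0,+\infty)$.

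Next I would prove the upper (arithmetic) bound, for which the substantive input is the classical fact from L\"owner's theory that an operator monotone function on $(0,+\infty)$ is concave and smooth. Differentiating the symmetry relation $f(x)=xf(x^{-1})$ at $x=1$ and using $f(1)=1$ gives $f'(1) = 1 - f'(1)$, hence $f'(1)=\tfrac12$; concavity then forces the graph below its tangent at $x=1$, namely $f(x) \le f(1) + f'(1)(x-1) = \frac{1+x}{2}$. For the lower (harmonic) bound I would \emph{not} argue directly but instead apply the arithmetic bound to a companion function. Set $\hat f(x) := x/f(x)$. Using $f(x^{-1})=f(x)/x$ one checks $\hat f(1)=1$, $\hat f(x) = 1/f(x^{-1})$, and $x\hat f(x^{-1}) = \hat f(x)$, so $\hat f$ is normalized and symmetric; moreover $\hat f$ is operator monotone, since $t \mapsto t^{-1}$ is operator decreasing and $f$ is operator increasing and positive, so the composition $x \mapsto f(x^{-1})^{-1} = \hat f(x)$ is operator increasing. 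Hence $\hat f \in {\cal F}_{op}$, and the arithmetic bound applied to $\hat f$ yields $x/f(x) \le (1+x)/2$, which rearranges to $f(x) \ge \frac{2x}{1+x}$.

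The computational steps are routine; the real content is the concavity/smoothness of operator monotone functions and the closure of ${\cal F}_{op}$ under $f \mapsto x/f(x)$. I expect the only point requiring care is verifying that $x/f(x)$ is again operator monotone --- that is, that reciprocation composed with inversion preserves operator monotonicity --- since it is precisely this closure property that makes the harmonic bound drop out of the arithmetic one instead of demanding a separate argument.
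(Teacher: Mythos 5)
Your proof is correct, but note that the paper offers no proof of this proposition at all: it is quoted as a known theorem, with the sentence preceding it citing Kubo and Ando \cite{KuboAndo:1979/80} for the fact that among matrix means the arithmetic is the largest and the harmonic the smallest. So you have supplied a genuine argument where the paper relies on the literature, and your argument is sound throughout: the congruence by $A^{-1/2}$ and the spectral theorem correctly reduce the operator chain to the scalar bounds $\frac{2x}{1+x}\le f(x)\le\frac{1+x}{2}$; differentiating the symmetry relation $f(x)=xf(x^{-1})$ at $x=1$ does give $f'(1)=\tfrac12$, so concavity (a classical consequence of L\"owner's representation, as is smoothness) yields the tangent-line bound $f(x)\le\frac{1+x}{2}$; and your verification that $\hat f(x)=x/f(x)=1/f(x^{-1})$ lies in ${\cal F}_{op}$ is correct, since sandwiching the operator increasing $f$ between two order-reversing inversions gives an operator increasing function, so the arithmetic bound applied to $\hat f$ delivers the harmonic bound. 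Two remarks on the comparison. First, your scalar reduction proves exactly the paper's Corollary \ref{basic}, which the paper instead \emph{deduces} from this proposition; you have inverted that logical order, establishing the corollary first and the proposition from it --- this is legitimate and arguably cleaner, since the scalar statement carries all the content. Second, the dual map $f\mapsto x/f(x)$ is precisely the adjoint operation on means in Kubo--Ando theory (it exchanges the arithmetic and harmonic means), so your observation that the harmonic bound ``drops out'' of the arithmetic one is the elementary shadow of a structural duality that Kubo and Ando's original proof, via the integral representation of operator monotone functions and the transformer inequality, exploits in greater generality. What your route buys is self-containedness modulo two standard facts (concavity and smoothness of operator monotone functions, and operator antitonicity of inversion); what the citation buys the paper is brevity and access to the full mean-theoretic machinery it uses elsewhere.
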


\begin{corollary} \label{basic}
    For any $f \in {\cal F}_{op}$ and for any $x,>0$ one has
    $$
    \frac{2x}{1+x}\leq f(x) \leq \frac{1+x}{2}.
    $$
\end{corollary}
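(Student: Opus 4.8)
The plan is to obtain the corollary as the scalar ($n=1$) specialization of the preceding Proposition. Since the Kubo--Ando inequality $2(A^{-1}+B^{-1})^{-1}\leq m_f(A,B)\leq \frac12(A+B)$ holds for positive matrices in ${\cal D}_n$ for \emph{every} $n$, I would first restrict to $n=1$, where ${\cal D}_1$ is just the set of positive reals and each mean $m_f$ becomes an ordinary function of two positive scalars.

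Next I would evaluate the representing formula $m_f(A,B)=A^{\frac12}f(A^{-\frac12}BA^{-\frac12})A^{\frac12}$ in this scalar case. For positive reals $a,b$ all factors commute and the square roots are ordinary, so the expression collapses to $m_f(a,b)=a\,f(b/a)$. Taking $a=1$ and $b=x$ then gives precisely $m_f(1,x)=f(x)$.

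Finally I would compute the two bounding means at the same pair. The arithmetic mean of $1$ and $x$ is $\frac12(1+x)$, while the harmonic mean $2(1^{-1}+x^{-1})^{-1}$ simplifies to $\frac{2x}{1+x}$. Substituting $A=1$, $B=x$ into the Proposition's chain of inequalities therefore yields $\frac{2x}{1+x}\leq f(x)\leq\frac{1+x}{2}$, which is exactly the assertion.

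There is essentially no obstacle: the entire content of the corollary is already packaged in the Proposition, and the only thing to confirm is that the matrix means degenerate to the expected scalar quantities when $n=1$. The sole point deserving a moment's care is verifying that the Kubo--Ando representing function $f$ and the scalar value $m_f(1,x)$ coincide, but this is immediate from the defining formula, and the admissibility of $A=1$, $B=x$ follows because both lie in ${\cal D}_1$.
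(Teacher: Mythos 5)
Your proof is correct and is exactly the argument the paper intends: the corollary is stated without proof as the immediate scalar specialization of the preceding Kubo--Ando proposition, and your evaluation $m_f(1,x)=f(x)$ together with the scalar forms $\frac{1+x}{2}$ and $\frac{2x}{1+x}$ of the arithmetic and harmonic means is precisely that specialization.
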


In what follows, if ${\cal N}$ is a differential manifold we denote by
$T_{\rho} \cal N$ the tangent space to $\cal N$ at the point $\rho \in
{\cal N}$.  Recall that there exists a natural identification of
$T_{\rho}{\cal D}^1_n$ with the space of self-adjoint traceless
matrices; namely, for any $\rho \in {\cal D}^1_n $
$$
T_{\rho}{\cal D}^1_n =\{A \in M_n|A=A^* \, , \, \hbox{Tr}(A)=0 \}.
$$

 A Markov morphism is a completely positive and trace preserving
 operator $T: M_n \to M_m$.  A {\sl monotone metric} is a family of
 Riemannian metrics $g=\{g^n\}$ on $\{{\cal D}^1_n\}$, $n \in
 \mathbb{N}$, such that
 $$
 g^m_{T(\rho)}(TX,TX) \leq g^n_{\rho}(X,X)
 $$
 holds for every Markov morphism $T:M_n \to M_m$, for every $\rho \in
 {\cal D}^1_n$ and for every $X \in T_\rho {\cal D}^1_n$.  Usually
 monotone metrics are normalized in such a way that $[A,\rho]=0$
 implies $g_{\rho} (A,A)={\rm Tr}({\rho}^{-1}A^2)$.  A monotone metric
 is also said a {\sl quantum Fisher information} (QFI) because of
 Chentsov uniqueness theorem for commutative monotone metrics (see
 \cite{Chentsov:1982}).

 Define $L_{\rho}(A):= \rho A$, and $R_{\rho}(A):= A\rho$, and observe
 that they are commuting self-adjoint (positive) superoperators on
 $M_{n,sa}$.  For any $f\in {\cal F}_{op}$ one can define the positive
 superoperator $m_f(L_{\rho},R_{\rho})$.  Now we can state the
 fundamental theorem about monotone metrics.

\begin{theorem} {\rm \cite{Petz:1996} }

    There exists a bijective correspondence between monotone metrics
    (quantum Fisher informations) on ${\cal D}^1_n$ and normalized
    symmetric operator monotone functions $f\in {\cal F}_{op}$.  This
    correspondence is given by the formula
    $$
   \langle A,B \rangle_{\rho,f}:={\rm Tr}(A\cdot
    m_f(L_{\rho},R_{\rho})^{-1}(B)).
    $$
\end{theorem}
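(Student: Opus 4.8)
The plan is to prove Petz's correspondence in two stages: first to show that \emph{every} monotone metric must have the asserted form $\langle A,B\rangle_{\rho,f}={\rm Tr}(A\cdot m_f(L_\rho,R_\rho)^{-1}(B))$ for some normalized symmetric function $f$, and second to determine exactly which such $f$ actually yield a monotone metric (namely the operator monotone ones), thereby establishing the bijection. I would begin by writing an arbitrary Riemannian metric on ${\cal D}^1_n$ in the form $g^n_\rho(A,B)={\rm Tr}(A^* K_\rho(B))$, where $K_\rho$ is a positive superoperator on $M_n$; the Hilbert-Schmidt structure guarantees such a representation. Using a Kraus/Stinespring description of a Markov morphism $T$ and its HS-adjoint $T^*$, the monotonicity requirement $g^m_{T(\rho)}(TX,TX)\le g^n_\rho(X,X)$ translates verbatim into the single operator inequality $T^* K_{T(\rho)} T\le K_\rho$, required for every Markov morphism $T:M_n\to M_m$.

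Next I would extract the structure of $K_\rho$. Since every unitary conjugation ${\rm Ad}_U$ is a Markov morphism whose inverse ${\rm Ad}_{U^*}$ is again one, monotonicity forces equality, so $K_\rho$ is covariant: $K_{U\rho U^*}={\rm Ad}_U\,K_\rho\,{\rm Ad}_{U^*}$. Diagonalizing $\rho=\sum_i\lambda_i E_{ii}$ and testing covariance against diagonal unitaries shows that $K_\rho$ preserves each matrix-unit line, $K_\rho(E_{ij})=c(\lambda_i,\lambda_j)E_{ij}$, with $c>0$ symmetric (symmetry coming from $E_{ji}=E_{ij}^*$ and the fact that $K_\rho$ preserves self-adjointness). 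The normalization convention $[A,\rho]=0\Rightarrow g_\rho(A,A)={\rm Tr}(\rho^{-1}A^2)$ forces $c(x,x)=1/x$, while comparing $\rho$ with the embedding $\rho\mapsto\rho\otimes\tfrac{1}{k}I_k$ and the partial trace back (all Markov morphisms) forces $c$ to be homogeneous of degree $-1$. A positive, symmetric, degree $-1$ homogeneous $c$ with $c(x,x)=1/x$ is exactly of the form $c(x,y)=1/\bigl(y f(x/y)\bigr)$ for a normalized symmetric $f$; since $L_\rho,R_\rho$ are commuting positive superoperators diagonalized by the $E_{ij}$ with eigenvalues $\lambda_i,\lambda_j$, this says precisely $K_\rho=m_f(L_\rho,R_\rho)^{-1}$, which is the claimed formula.

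It remains to decide for which $f$ the resulting metric is monotone. For sufficiency, assume $f\in{\cal F}_{op}$. Then $m_f$ is a Kubo-Ando mean and obeys the transformer inequality $Cm_f(A,B)C^*\le m_f(CAC^*,CBC^*)$; equivalently, $(A,B)\mapsto m_f(A,B)^{-1}$ is jointly operator convex. Feeding the Kraus operators of $T$ into this joint convexity, via an operator-Jensen/Stinespring-dilation argument, yields exactly $T^* K_{T(\rho)} T\le K_\rho$, so the metric is monotone. For necessity, I would test the inequality $T^* K_{T(\rho)} T\le K_\rho$ against a family of explicit channels --- pinchings and partial traces reduced to a single qubit --- converting it into a scalar inequality on $c$, which by Löwner's integral representation is equivalent to operator monotonicity of $f$. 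Finally, distinct $f$ give distinct functions $c(x,y)=1/\bigl(y f(x/y)\bigr)$, hence distinct metrics (injectivity), while the sufficiency direction supplies surjectivity; thus $f\mapsto\langle\cdot,\cdot\rangle_{\rho,f}$ is a bijection.

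The main obstacle is the equivalence of the last paragraph, and within it the necessity direction: one must show that the \emph{single} condition ``$f$ operator monotone'' already encodes the \emph{entire} family of monotonicity constraints indexed by all Markov morphisms in all dimensions. This requires choosing the test channels economically and carrying out the reduction to Löwner's theorem, whereas the sufficiency direction, though it rests on the nontrivial joint convexity of the inverse mean, follows rather mechanically once the Kubo-Ando transformer inequality already recorded in the excerpt is in hand.
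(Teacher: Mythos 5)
First, a point of reference: the paper does not prove this statement at all --- it is imported verbatim from Petz \cite{Petz:1996} as background material, so your proposal can only be compared with the original proof. Measured against that, your skeleton is faithful to the Morozova--Chentsov/Petz analysis: representing the metric as $g_\rho(A,B)={\rm Tr}(A^*K_\rho(B))$, translating monotonicity into $T^*K_{T(\rho)}T\le K_\rho$, extracting unitary covariance (invertible Markov morphisms force equality), reducing to a kernel $K_\rho(E_{ij})=c(\lambda_i,\lambda_j)E_{ij}$, and pinning down $c$ by normalization and the ancilla/partial-trace trick is exactly the right route to the formula $K_\rho=m_f(L_\rho,R_\rho)^{-1}$. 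The sufficiency direction is also sound in outline: joint concavity of Kubo--Ando means plus operator convexity and antitonicity of the inverse gives joint operator convexity of $(A,B)\mapsto m_f(A,B)^{-1}$, and the Stinespring/operator-Jensen argument you invoke is precisely how monotonicity of the quasi-entropies is proved. One smaller omission: you must also show that $c(\lambda_i,\lambda_j)$ depends \emph{only} on the pair $(\lambda_i,\lambda_j)$, not on the rest of the spectrum or on $n$; covariance at fixed $\rho$ does not give this, and it requires the same dimension-coherence of the family $\{g^n\}$ that you use only for homogeneity.

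The genuine gap is in the necessity direction, and it is not merely an unexecuted computation: as described, the step would fail. Testing $T^*K_{T(\rho)}T\le K_\rho$ only against pinchings and partial traces ``reduced to a single qubit'' can at best yield matrix-monotonicity conditions of fixed small order, and these are strictly weaker than operator monotonicity: for every $n$ there exist functions that are matrix monotone of order $n$ but not of order $n+1$, so no family of qubit-level inequalities can be equivalent to $f\in{\cal F}_{op}$; likewise, appealing to ``L\"owner's integral representation'' presupposes what must be proved --- one needs positivity of L\"owner-type kernels of \emph{every} order, or an argument establishing the Pick property. The economical test family that actually works (and is essentially Petz's) is different: pass from $T^*K_{T(\rho)}T\le K_\rho$ to the equivalent dual inequality $T\,m_f(L_\rho,R_\rho)\,T^*\le m_f(L_{T(\rho)},R_{T(\rho)})$ (a Schur-complement exercise), then feed in block-matrix channels on $M_{2n}$ with block-diagonal states, so that on the off-diagonal block the left and right multiplications carry two genuinely independent positive matrices; this produces the transformer inequality for $m_f$ on arbitrary noncommuting pairs in all dimensions, and the Kubo--Ando correspondence \cite{KuboAndo:1979/80} recalled in Section 2 of the paper then forces $f$ to be operator monotone. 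Until your necessity step is replaced by an argument of this kind, the proposal establishes surjectivity onto a subclass but not the asserted bijection.
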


 The metrics associated with the functions $f_{\beta}$ are very
 important in information geometry and are related to
 Wigner-Yanase-Dyson information (see for example
 \cite{GibiliscoIsola:2001} \cite{GibiliscoIsola:2003}
 \cite{GibiliscoIsola:2004} \cite{GibiliscoIsola:2005}
 \cite{GibiliscoImparatoIsola:2007} and references therein).

\section{The function $\tilde f$ and its properties} \label{tilde}

For $f \in {\cal F}_{op}$ define $f(0):=\lim_{x\to 0} f(x)$.  The
condition $f(0)\not=0$ is relevant because it is a necessary and
sufficient condition for the existence of the so-called radial
extension of a monotone metric to pure states (see
\cite{PetzSudar:1996}).  Following \cite{Hansen:2006b} we say that a
function $f \in {\cal F}_{op}$ is {\sl regular} iff $f(0) \not= 0$. 
The corresponding operator mean, associated QFI, etc.  are said
regular too.

\begin{definition}
    We introduce the sets
    $$
    {\cal F}_{op}^{\, r}:=\{f\in {\cal F}_{op}| \quad f(0) \not= 0 \},
    \quad {\cal F}_{op}^{\, n}:=\{f\in {\cal F}_{op}| \quad f(0) = 0
    \}.
    $$
\end{definition}

Trivially one has ${\cal F}_{op}={\cal F}_{op}^{\, r}\dot{\cup}{\cal
F}_{op}^{\, n}$.

\begin{proposition} {\rm \cite{GibiliscoImparatoIsola:2007}}
    For $f \in {\cal F}_{op}^{\, r}$ and $x>0$ set
    $$
    \tilde{f}(x):=\frac{1}{2}\left[ (x+1)-(x-1)^2 \frac{f(0)}{f(x)}
    \right].
    $$
    Then ${\tilde f} \in {\cal F}_{op}^{\, n}$.
\end{proposition}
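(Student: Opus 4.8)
The plan is to verify the three defining conditions of ${\cal F}_{op}$ for $\tilde f$ together with the extra condition $\tilde f(0)=0$ that places it in ${\cal F}_{op}^{\,n}$. Three of these are routine. Normalization: at $x=1$ the factor $(x-1)^2$ vanishes, so $\tilde f(1)=\frac12(1+1)=1$. The value at $0$: since $f$ is regular, $f(x)\to f(0)\neq 0$ and $(x-1)^2\to 1$ as $x\to 0^+$, hence $\tilde f(0)=\frac12\bigl(1-1\cdot\tfrac{f(0)}{f(0)}\bigr)=0$, which is exactly what makes $\tilde f$ non-regular. Symmetry: substituting the identity $f(x^{-1})=f(x)/x$ (coming from condition (ii) for $f$) into $\tilde f(x^{-1})$ and using $(x^{-1}-1)^2=(x-1)^2/x^2$, a one-line simplification gives $x\,\tilde f(x^{-1})=\tilde f(x)$.

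The real content is condition (iii), operator monotonicity of $\tilde f$. First I would reduce it to an operator convexity statement. Writing $\tilde f(x)=\frac{x+1}{2}-\frac{f(0)}{2}\,g(x)$ with $g(x):=\frac{(x-1)^2}{f(x)}$, and recalling that $\frac{x+1}{2}$ is operator affine while $f(0)>0$, the function $\tilde f$ is operator concave as soon as $g$ is operator convex. Since a non-negative operator concave function on $(0,+\infty)$ is automatically operator monotone, it then remains only to check $\tilde f\ge 0$: operator concavity gives ordinary concavity, and from $\tilde f(0^+)=0$, $\tilde f(1)=1$ the chord inequality yields $\tilde f(x)\ge x>0$ on $(0,1)$, while the symmetry $\tilde f(x)=x\,\tilde f(x^{-1})$ propagates positivity to $(1,+\infty)$. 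Thus everything hinges on the operator convexity of $g$.

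This last point is the main obstacle. I would establish it via the Hansen--Pedersen characterization: $g$ is operator convex on $(0,+\infty)$ if and only if $x\mapsto\frac{g(x)-g(1)}{x-1}$ is operator monotone; since $g(1)=0$ this quotient is exactly $\frac{x-1}{f(x)}$, so the task becomes showing that $\frac{x-1}{f(x)}$ is operator monotone for every $f\in{\cal F}_{op}$. Equivalently---and this is the formulation I expect to be most tractable---one shows the joint operator convexity of $(A,B)\mapsto (A-B)\,m_f(A,B)^{-1}(A-B)$ and specializes to $A=I$, where $m_f(I,B)=f(B)$ gives back $g(B)=(B-I)^2 f(B)^{-1}$. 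This joint convexity is the analytic heart of the matter; it rests on the operator convexity of $1/f$ (the reciprocal of the positive operator concave function $f$) through Effros's theorem that the non-commutative perspective of an operator convex function is jointly operator convex, and it is precisely the kind of convexity underlying the quantum Fisher information.

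I expect the bookkeeping in passing from the perspective of $1/f$ to the quadratic expression $(A-B)\,m_f(A,B)^{-1}(A-B)$, and in checking that the convexity survives the specialization $A=I$, to be the step requiring the most care. The worked instances $f_{SLD}\mapsto\tilde f=f_{RLD}$ (where $\tilde f(x)=\frac{2x}{1+x}$) and $f_{WY}\mapsto\tilde f(x)=\sqrt{x}$ both land on genuine non-regular elements of ${\cal F}_{op}^{\,n}$, and I would use them throughout as consistency checks on the reductions above.
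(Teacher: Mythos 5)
The paper itself states this proposition without proof, importing it from \cite{GibiliscoImparatoIsola:2007}, so your attempt has to be judged on its own merits rather than against an in-document argument. Your skeleton is sound: the checks $\tilde f(1)=1$, $\tilde f(0)=0$, the symmetry computation via $f(x^{-1})=f(x)/x$, the decomposition $\tilde f(x)=\frac{x+1}{2}-\frac{f(0)}{2}g(x)$ with $g(x)=(x-1)^2/f(x)$, the standard fact that a non-negative operator concave function on $(0,+\infty)$ is operator monotone, the chord-plus-symmetry argument for positivity of $\tilde f$, and your consistency checks $f_{SLD}\mapsto f_{RLD}$, $f_{WY}\mapsto\sqrt{x}$ are all correct, as is the Hansen--Pedersen reduction of operator convexity of $g$ (base point $1$, where $g(1)=0$) to operator monotonicity of $h(x)=(x-1)/f(x)$. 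The genuine gap is in how you propose to discharge that last step. Effros's perspective theorem applied to the operator convex function $1/f$ gives joint convexity of $A^{1/2}f(A^{-1/2}BA^{-1/2})^{-1}A^{1/2}=A\,m_f(A,B)^{-1}A$, which is not the map $(A,B)\mapsto(A-B)\,m_f(A,B)^{-1}(A-B)$ you need; worse, that map is itself the perspective of $g$, so invoking perspective-convexity for it directly is circular. If you insist on this route, the missing ingredients are the Kubo--Ando joint concavity of $(A,B)\mapsto m_f(A,B)$, the antitonicity of $T\mapsto X^{*}T^{-1}X$ in $T$, and the Lieb--Ruskai/Kiefer joint operator convexity of $(X,T)\mapsto X^{*}T^{-1}X$ with $X=A-B$ affine; chaining these three facts does yield the joint convexity of your $\Phi$ and hence, at $A=I$, the operator convexity of $g$. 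As written, ``rests on the operator convexity of $1/f$ through Effros's theorem'' is not a proof, and the step you dismiss as bookkeeping is exactly where the content sits.

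Ironically, you had already reduced the problem to something elementary and then walked past it: once the task is to show $h(x)=(x-1)/f(x)$ operator monotone, the symmetry of $f$ finishes in two lines. From $f(x)=xf(x^{-1})$ one gets $x/f(x)=1/f(1/x)$, which is operator monotone ($x\mapsto x^{-1}$ reverses the order, $f$ preserves it, and inversion reverses it again), while $-1/f(x)$ is operator monotone because $f$ is positive and operator monotone; hence $h(x)=x/f(x)-1/f(x)$ is a sum of two operator monotone functions. Substituting this for the perspective machinery turns your outline into a complete, correct, and self-contained proof, with no joint-convexity apparatus needed at all.
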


By the very definition one has the following result.

\begin{proposition} {\rm (\cite{GibiliscoImparatoIsola:2007},
Proposition 5.7)} \label{min}

Let $f \in {\cal F}^r_{op}$.  The following three conditions are
equivalent:

 \itm{1} $\qquad \tilde{f} \leq \tilde{g}$;

 \itm{2} $\qquad m_{\tilde{f}} \leq m_{\tilde{g}}$;

 \itm{3} $\qquad \frac{f(0)}{f(t)} \geq \frac{g(0)}{g(t)} \qquad
 \forall t >0$.
\end{proposition}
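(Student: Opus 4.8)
The plan is to establish the cyclic equivalence by first disposing of $(1)\Leftrightarrow(3)$ through a direct computation with the explicit formula for $\tilde f$, and then handling $(1)\Leftrightarrow(2)$ via the general principle that pointwise comparison of operator monotone functions is equivalent to comparison of the induced Kubo-Ando means.

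First I would subtract the two explicit expressions. From the definition of $\tilde f$ one obtains, for every $x>0$,
$$
\tilde g(x)-\tilde f(x)=\frac{(x-1)^2}{2}\left[\frac{f(0)}{f(x)}-\frac{g(0)}{g(x)}\right].
$$
Since $(x-1)^2\geq 0$, with strict inequality for $x\neq 1$ and both sides vanishing at $x=1$, this identity shows at once that $\tilde f(x)\leq\tilde g(x)$ for all $x>0$ if and only if $\frac{f(0)}{f(x)}\geq\frac{g(0)}{g(x)}$ for all $x>0$. This settles $(1)\Leftrightarrow(3)$.

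For $(2)\Rightarrow(1)$ I would specialize the mean to commuting arguments: taking $A=I$ and $B=xI$ in $m_{\tilde f}(A,B)=A^{\frac12}\tilde f(A^{-\frac12}BA^{-\frac12})A^{\frac12}$ gives $m_{\tilde f}(I,xI)=\tilde f(x)\,I$, so the operator inequality $m_{\tilde f}\leq m_{\tilde g}$ forces the scalar inequality $\tilde f(x)\leq\tilde g(x)$. For the converse $(1)\Rightarrow(2)$, I fix positive $A,B$ and set $T:=A^{-\frac12}BA^{-\frac12}$, a single fixed positive matrix. Applying the spectral theorem to $T$, the pointwise inequality $\tilde f\leq\tilde g$ lifts to the operator inequality $\tilde f(T)\leq\tilde g(T)$, because the two functional calculi share the same spectral projections, so no operator monotonicity is invoked at this step. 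Conjugating by $A^{\frac12}$ and using that $X\leq Y$ implies $CXC^*\leq CYC^*$ then yields $m_{\tilde f}(A,B)\leq m_{\tilde g}(A,B)$ for all $A,B$, which is precisely $(2)$.

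Neither step presents a serious obstacle: $(1)\Leftrightarrow(3)$ is pure algebra, and the only point requiring care in $(1)\Leftrightarrow(2)$ is to recognize that lifting a scalar inequality to a single \emph{fixed} self-adjoint matrix is a spectral-theorem fact rather than an operator-monotonicity fact. If anything, the subtle bookkeeping is at $x=1$, where the factor $(x-1)^2$ makes $\tilde f(1)=\tilde g(1)=1$ and the ratio comparison becomes vacuous; I would note explicitly that this single point does not affect any of the three global conditions.
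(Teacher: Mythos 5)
Your proof is correct, but there is nothing in the paper to compare it against: the paper states this proposition without proof, importing it verbatim (with citation) from Proposition 5.7 of \cite{GibiliscoImparatoIsola:2007}. Your argument is the natural self-contained one, and its key algebraic step is in fact an identity the paper does use elsewhere: in the proof of Proposition \ref{mon} the authors invoke $\frac{x+y}{2}-m_{\tilde f}(x,y)=\frac{(x-y)^2}{2y}\cdot\frac{f(0)}{f(x/y)}$, whose normalized case $y=1$ is exactly your display $\tilde g(x)-\tilde f(x)=\frac{(x-1)^2}{2}\bigl[\frac{f(0)}{f(x)}-\frac{g(0)}{g(x)}\bigr]$; so your $(1)\Leftrightarrow(3)$ is the same computation the paper relies on implicitly. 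Your $(1)\Leftrightarrow(2)$ is the standard Kubo--Ando fact that ordering of means is equivalent to pointwise ordering of the representing functions, and you prove it the right way: scalar specialization $m_{\tilde f}(I,xI)=\tilde f(x)I$ for one direction, and for the other, functional calculus on the \emph{single fixed} positive matrix $T=A^{-1/2}BA^{-1/2}$ (where $\tilde f(T)\leq\tilde g(T)$ because both functions act through the same spectral projections of $T$, so indeed no operator monotonicity is needed) followed by conjugation with $A^{1/2}$.

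Two small points to tighten. First, at $t=1$ condition $(3)$ is not ``vacuous'': it asserts $f(0)\geq g(0)$, and this does \emph{not} follow pointwise from $(1)$ at $x=1$, since the factor $(x-1)^2$ destroys all information there; you recover it from the inequality at $x\neq 1$ by letting $x\to 1$, using continuity of $f$ and $g$ together with $f(1)=g(1)=1$. State that limit explicitly rather than dismissing the point. Second, the proposition as printed quantifies only $f\in{\cal F}^{\,r}_{op}$; your argument (correctly) also needs $g\in{\cal F}^{\,r}_{op}$ so that $\tilde g$ is defined and $g(0)$ enters nontrivially --- worth noting as an implicit hypothesis.
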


Let us give some more definitions.

\begin{definition}
    Suppose that $\rho \in {\cal D}_n^1$ is fixed.  Define
    $X_0:=X-{\rm Tr}(\rho X) I$.
\end{definition}

\begin{definition} 
    For $A,B \in M_{n,sa}$ and $\rho \in {\cal D}_n^1$ define
    covariance and variance as 
    \begin{equation} 
	{\rm Cov}_{\rho}(A,B):=\frac{1}{2}[{\rm Tr}(\rho A B)+{\rm
	Tr}(\rho BA)]-{\rm Tr}(\rho A)\cdot{\rm Tr}(\rho B)=
	\frac{1}{2}[{\rm Tr}(\rho A_0 B_0)+{\rm Tr}(\rho B_0
	A_0)]={\rm Re}\{{\rm Tr}(\rho A_0 B_0)\},\label{cov}
    \end{equation}
    \[ 
    \!{\rm Var}_{\rho}(A):={\rm Cov}_{\rho}(A,A)={\rm Tr}(\rho
    A^2)-{\rm Tr}(\rho A)^2 = {\rm Tr}(\rho A^2_0). 
    \quad\quad\quad\quad\quad\quad\quad\quad\quad\quad
    \quad\quad\quad\quad\quad\quad\quad
    \]
\end{definition}
Suppose, now, that $A,B \in M_{n,sa}$, $\rho \in {\cal D}^1_n$ and
$f \in {\cal F}^r_{op}$.
The fundamental theorem for our present purpose is given by
Proposition 6.3 in \cite{GibiliscoImparatoIsola:2007}, which is stated as
follows.

\begin{theorem} \label{!}
$$
\frac{f(0)}{2} \langle i[\rho,A], i[\rho,B] \rangle_{\rho,f}={\rm
Cov}_{\rho}(A,B)-{\rm Tr}(m_{\tilde f}(L_{\rho},R_{\rho})(A_0)B_0).
$$
\end{theorem}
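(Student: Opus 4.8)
The plan is to reduce everything to a computation in an eigenbasis of $\rho$, where the commuting superoperators $L_\rho$ and $R_\rho$ are simultaneously diagonal. Concretely, I would fix an orthonormal basis $\{e_k\}$ of eigenvectors of $\rho$, with $\rho e_k=\lambda_k e_k$ and $\lambda_k>0$, and pass to the matrix units $E_{jk}:=|e_j\rangle\langle e_k|$, which form an orthonormal basis of $M_n$ for the Hilbert--Schmidt product and satisfy $\Tr(E_{jk}E_{lm})=\delta_{kl}\delta_{jm}$. On this basis $L_\rho E_{jk}=\lambda_j E_{jk}$ and $R_\rho E_{jk}=\lambda_k E_{jk}$, so for every $g\in\cf_{op}$ one has $m_g(L_\rho,R_\rho)E_{jk}=m_g(\lambda_j,\lambda_k)E_{jk}$ with $m_g(\lambda_j,\lambda_k)=\lambda_j\,g(\lambda_k/\lambda_j)$; note $m_g$ is symmetric in its two arguments because $g$ is symmetric. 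Writing $A=\sum_{jk}a_{jk}E_{jk}$ and $B=\sum_{jk}b_{jk}E_{jk}$ with $a_{kj}=\overline{a_{jk}}$ and $b_{kj}=\overline{b_{jk}}$, the whole identity becomes a comparison of two scalar expressions indexed by pairs $(j,k)$.

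For the left-hand side I would use $i[\rho,A]=i(L_\rho-R_\rho)(A)=i\sum_{jk}(\lambda_j-\lambda_k)a_{jk}E_{jk}$ (which is unchanged if $A$ is replaced by $A_0$, since $[\rho,I]=0$), apply $m_f(L_\rho,R_\rho)^{-1}$ to $i[\rho,B]$, and pair the result with $i[\rho,A]$ via the trace. The orthogonality relation for the $E_{jk}$ collapses the double sum and, after the sign $i^2(\lambda_j-\lambda_k)(\lambda_k-\lambda_j)=(\lambda_j-\lambda_k)^2$ is accounted for, yields
\[
\frac{f(0)}{2}\langle i[\rho,A],i[\rho,B]\rangle_{\rho,f}=\sum_{jk}\frac{f(0)(\lambda_j-\lambda_k)^2}{2\,m_f(\lambda_j,\lambda_k)}\,a_{jk}b_{kj}.
\]

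For the right-hand side I would expand both terms in the same basis. Using $\Cov_\rho(A,B)=\frac12[\Tr(\rho A_0B_0)+\Tr(\rho B_0A_0)]$ gives $\Cov_\rho(A,B)=\frac12\sum_{jk}(\lambda_j+\lambda_k)(a_0)_{jk}(b_0)_{kj}$, while $\Tr(m_{\tilde f}(L_\rho,R_\rho)(A_0)B_0)=\sum_{jk}m_{\tilde f}(\lambda_j,\lambda_k)(a_0)_{jk}(b_0)_{kj}$. Subtracting, the coefficient of $(a_0)_{jk}(b_0)_{kj}$ is $\frac{\lambda_j+\lambda_k}{2}-m_{\tilde f}(\lambda_j,\lambda_k)$. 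Here I would substitute the defining formula $\tilde f(x)=\frac12[(x+1)-(x-1)^2 f(0)/f(x)]$ with $x=\lambda_k/\lambda_j$ and simplify: the two $\frac12(\lambda_j+\lambda_k)$ contributions cancel, and using $\lambda_j(x-1)^2=(\lambda_j-\lambda_k)^2/\lambda_j$ together with $f(x)=m_f(\lambda_j,\lambda_k)/\lambda_j$ one is left with exactly $\frac{f(0)(\lambda_j-\lambda_k)^2}{2\,m_f(\lambda_j,\lambda_k)}$.

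The decisive---and essentially the only delicate---point is that this coefficient carries a factor $(\lambda_j-\lambda_k)^2$, so it vanishes on the diagonal $j=k$. Consequently the diagonal entries of $A_0,B_0$, which are precisely where centering differs from $A,B$, never contribute, and on the off-diagonal $(a_0)_{jk}=a_{jk}$, $(b_0)_{kj}=b_{kj}$. The right-hand side therefore reduces to the identical sum obtained for the left-hand side, proving the identity. I do not anticipate any genuine obstacle beyond organizing this bookkeeping carefully; the one place to stay alert is tracking the conjugation $b_{kj}=\overline{b_{jk}}$ and the symmetry $m_g(\lambda_j,\lambda_k)=m_g(\lambda_k,\lambda_j)$, so that the two sides are matched index-for-index rather than merely in aggregate.
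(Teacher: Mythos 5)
Your proof is correct and takes essentially the same route as the source: the paper itself only quotes this identity (Proposition 6.3 of \cite{GibiliscoImparatoIsola:2007}), and the underlying argument---reflected here in Proposition \ref{?}---is precisely your eigenbasis computation, resting on the scalar identity $\frac{x+y}{2}-m_{\tilde f}(x,y)=\frac{f(0)}{2}\frac{(x-y)^2}{m_f(x,y)}$ for $x=\lambda_j$, $y=\lambda_k$, together with the observation that the factor $(\lambda_j-\lambda_k)^2$ annihilates the diagonal terms, so the centering $A\mapsto A_0$ is immaterial. Your bookkeeping is sound, including the points that actually need care: faithfulness of $\rho$ so that $m_f(L_\rho,R_\rho)$ is invertible, the symmetry $m_f(\lambda_j,\lambda_k)=m_f(\lambda_k,\lambda_j)$ inherited from $f(t)=tf(t^{-1})$, and the sign $i^2(\lambda_j-\lambda_k)(\lambda_k-\lambda_j)=(\lambda_j-\lambda_k)^2$.
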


As a consequence of the spectral theorem and of Theorem \ref{!} one
has the following relations.

\begin{proposition}{\rm \cite{GibiliscoImparatoIsola:2007}} \label{?}
    Let $\left\{\varphi_h\right\}$ be a complete orthonormal base
    composed of eigenvectors of $\rho$, and $\{ {\lambda}_h \}$ be the
    corresponding eigenvalues.  To self-adjoint matrices $A$, $B$ we
    associate matrices $a=a(\rho)$, $b=b(\rho)$ whose entries are
    given respectively by $a_{hj} \equiv \langle {A_0} {\varphi}_h
    |{\varphi}_j \rangle $, $b_{hj} \equiv \langle B_0 \varphi_h |
    {\varphi_j } \rangle $.  We have the following identities.
    \begin{align*}
	{\rm Cov}_{\rho} (A,B)&= {\rm Re} \{{\rm Tr} (\rho A_0 B_0)
	\}= \frac{1}{2} \sum_{h,j}({\lambda}_h+{\lambda}_j) {\rm Re}
	\{a_{hj}b_{jh} \} \\
	\frac{f(0)}{2}\langle i[\rho,A], i[\rho,B]\rangle_{\rho,f} &=
	\frac{1}{2} \sum_{h,j}({\lambda}_h+{\lambda}_j) {\rm Re} \{
	a_{hj} b_{jh} \} - \sum_{h,j} m_{\tilde
	f}(\lambda_h,\lambda_j) {\rm Re} \{ {a_{hj} b_{jh} } \} .
    \end{align*}
\end{proposition}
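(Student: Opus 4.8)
The plan is to prove both identities by a direct computation in a basis that simultaneously diagonalizes $\rho$ and the commuting superoperators $L_\rho$, $R_\rho$. I fix the orthonormal eigenbasis $\{\varphi_h\}$ of $\rho$ with eigenvalues $\{\lambda_h\}$ and introduce the matrix units $E_{hj}:=|\varphi_h\rangle\langle\varphi_j|$. The crucial observation is that $L_\rho(E_{hj})=\rho E_{hj}=\lambda_h E_{hj}$ and $R_\rho(E_{hj})=E_{hj}\rho=\lambda_j E_{hj}$, so the $E_{hj}$ form a common eigenbasis of $L_\rho$ and $R_\rho$. Since these superoperators commute and are self-adjoint and positive, their joint functional calculus applies and gives $m_{\tilde f}(L_\rho,R_\rho)(E_{hj})=m_{\tilde f}(\lambda_h,\lambda_j)E_{hj}$, where $m_{\tilde f}(\lambda_h,\lambda_j)$ is the scalar Kubo--Ando mean. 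Because $A_0$ and $B_0$ are self-adjoint, their expansions $A_0=\sum_{h,j}a_{hj}E_{hj}$, $B_0=\sum_{h,j}b_{hj}E_{hj}$ have coefficients $a_{hj}=\langle\varphi_h|A_0|\varphi_j\rangle=\langle A_0\varphi_h|\varphi_j\rangle$, $b_{hj}=\langle\varphi_h|B_0|\varphi_j\rangle$ (matching the definition in the statement via self-adjointness), and Hermiticity gives $a_{jh}=\overline{a_{hj}}$, $b_{jh}=\overline{b_{hj}}$.

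First I would establish the covariance identity. Inserting $\langle\varphi_h|\rho=\lambda_h\langle\varphi_h|$ and a resolution of the identity into $\Tr(\rho A_0 B_0)$ yields $\Tr(\rho A_0 B_0)=\sum_{h,j}\lambda_h a_{hj}b_{jh}$. Taking real parts and exploiting the symmetry $\re\{a_{jh}b_{hj}\}=\re\{\overline{a_{hj}b_{jh}}\}=\re\{a_{hj}b_{jh}\}$ coming from Hermiticity, I relabel $h\leftrightarrow j$ in half of the sum to symmetrize the weight $\lambda_h$ into $\frac12(\lambda_h+\lambda_j)$. This produces $\re\{\Tr(\rho A_0 B_0)\}=\frac12\sum_{h,j}(\lambda_h+\lambda_j)\re\{a_{hj}b_{jh}\}$, which together with the definition ${\rm Cov}_\rho(A,B)=\re\{\Tr(\rho A_0 B_0)\}$ is the first identity.

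Next I would compute the second trace appearing in Theorem \ref{!}. Using the functional calculus above, $m_{\tilde f}(L_\rho,R_\rho)(A_0)=\sum_{h,j}a_{hj}m_{\tilde f}(\lambda_h,\lambda_j)E_{hj}$, and multiplying by $B_0$ and using $E_{hj}E_{kl}=\delta_{jk}E_{hl}$ together with $\Tr(E_{hl})=\delta_{hl}$ collapses the sum to $\Tr(m_{\tilde f}(L_\rho,R_\rho)(A_0)B_0)=\sum_{h,j}m_{\tilde f}(\lambda_h,\lambda_j)a_{hj}b_{jh}$. Because the scalar mean is symmetric, $m_{\tilde f}(\lambda_h,\lambda_j)=m_{\tilde f}(\lambda_j,\lambda_h)$, the same $\re\{a_{hj}b_{jh}\}=\re\{a_{jh}b_{hj}\}$ symmetry shows that the real part of this sum equals $\sum_{h,j}m_{\tilde f}(\lambda_h,\lambda_j)\re\{a_{hj}b_{jh}\}$. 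Substituting both computed quantities into Theorem \ref{!} and taking real parts (the left-hand side being real) then yields the second identity.

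The computation is routine once the spectral picture is in place, so the only genuine point requiring care is the functional calculus step: one must justify that $m_{\tilde f}(L_\rho,R_\rho)$ acts diagonally on the matrix units with the scalar mean $m_{\tilde f}(\lambda_h,\lambda_j)$ as eigenvalue, which rests on $L_\rho$ and $R_\rho$ being commuting positive superoperators simultaneously diagonalized by $\{E_{hj}\}$. After that, the bookkeeping of complex conjugates when passing to real parts --- in particular verifying $\re\{a_{jh}b_{hj}\}=\re\{a_{hj}b_{jh}\}$, which is what legitimizes the symmetrization in $h,j$ --- is the main, and only minor, obstacle.
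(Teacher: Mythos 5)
Your proof is correct and takes essentially the same route the paper indicates: the paper states this proposition as ``a consequence of the spectral theorem and of Theorem \ref{!}'' (deferring details to the cited reference), and your argument fills in exactly that computation, diagonalizing $L_\rho$ and $R_\rho$ on the matrix units $E_{hj}$ so that $m_{\tilde f}(L_\rho,R_\rho)$ acts with scalar eigenvalues $m_{\tilde f}(\lambda_h,\lambda_j)$, then substituting into Theorem \ref{!}. Both delicate points --- the joint functional calculus on the commuting positive superoperators and the symmetrization using ${\rm Re}\{a_{jh}b_{hj}\}={\rm Re}\{a_{hj}b_{jh}\}$ from Hermiticity --- are handled correctly.
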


In what follows, capital letters will denote self-adjoint matrices and
the corresponding lower-case letters will be used for the above
transformation.

We also need the following result.

\begin{proposition} {\rm (\cite{GibiliscoImparatoIsola:2007}, Corollary
11.5) }\label{puri}

On pure states
$$
{\rm Tr}(m_{\tilde f}(L_{\rho},R_{\rho})(A_0)B_0)=0.
$$
\end{proposition}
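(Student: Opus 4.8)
The plan is to compute the quantity directly in the eigenbasis of $\rho$ and show that, on a pure state, the superoperator $m_{\tilde f}(L_\rho,R_\rho)$ already annihilates $A_0$, so the trace vanishes for trivial reasons. Since a pure state is a rank-one projection, I would fix a complete orthonormal eigenbasis $\{\varphi_h\}$ of $\rho$ with $\lambda_1=1$ and $\lambda_2=\cdots=\lambda_n=0$. Because $L_\rho$ and $R_\rho$ commute and are simultaneously diagonalized in this basis, the functional calculus gives, for any matrix $X$, the entrywise action $(m_{\tilde f}(L_\rho,R_\rho)X)_{hj}=m_{\tilde f}(\lambda_h,\lambda_j)\,X_{hj}$; this is exactly the mechanism behind the closed formula of Proposition~\ref{?}.

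First I would evaluate the scalar means $m_{\tilde f}(\lambda_h,\lambda_j)$ at the only two available eigenvalues, $0$ and $1$. Using $m_g(a,b)=a\,g(b/a)$ together with the symmetry of the mean and the crucial property $\tilde f\in{\cal F}_{op}^{\,n}$, i.e. $\tilde f(0)=0$, one finds $m_{\tilde f}(1,0)=m_{\tilde f}(0,1)=\tilde f(0)=0$ and $m_{\tilde f}(0,0)=0$, while the only surviving coefficient is $m_{\tilde f}(1,1)=\tilde f(1)=1$, occurring solely at $(h,j)=(1,1)$. Hence $m_{\tilde f}(L_\rho,R_\rho)(A_0)$ can have at most one nonzero entry, namely its $(1,1)$ entry $a_{11}$. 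Then I would note that this entry vanishes as well: with $\rho=|\varphi_1\rangle\langle\varphi_1|$ one has $a_{11}=\langle A_0\varphi_1|\varphi_1\rangle={\rm Tr}(\rho A_0)={\rm Tr}(\rho A)-{\rm Tr}(\rho A)\,{\rm Tr}(\rho)=0$, since ${\rm Tr}\rho=1$. Therefore $m_{\tilde f}(L_\rho,R_\rho)(A_0)=0$ identically, and a fortiori ${\rm Tr}(m_{\tilde f}(L_\rho,R_\rho)(A_0)B_0)=0$, which is the claim.

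The main obstacle is one of legitimacy rather than of calculation: Proposition~\ref{?} and the underlying entrywise functional calculus were established for faithful states $\rho>0$, whereas a pure state is singular, so I must justify that $m_{\tilde f}(L_\rho,R_\rho)$ is still well defined at the boundary. This is precisely where non-regularity is used. Since $\tilde f(0)=0$, the two-variable mean $m_{\tilde f}(x,y)$ extends continuously to the value $0$ as either variable tends to $0$, so $m_{\tilde f}(L_\rho,R_\rho)$ remains a bounded superoperator and one may pass to the limit $\rho_\varepsilon\to\rho$ of faithful states inside the formula of Proposition~\ref{?}. Had $f$ itself, rather than $\tilde f$, been non-regular, this radial extension could fail, which is exactly the content of the $f(0)\neq0$ discussion in Section~\ref{tilde}. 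Once this continuity point is dispatched, the remainder is the short eigenvalue bookkeeping above.
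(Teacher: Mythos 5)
Your proof is correct and is essentially the same argument as the one behind the statement: this paper gives no proof of Proposition \ref{puri} but imports it as Corollary 11.5 of \cite{GibiliscoImparatoIsola:2007}, where the proof is exactly your spectral bookkeeping --- in an eigenbasis of the pure state the only surviving coefficient is $m_{\tilde f}(1,1)=1$ since $\tilde f(0)=0$ kills $m_{\tilde f}(1,0)=m_{\tilde f}(0,1)=m_{\tilde f}(0,0)=0$, and the $(1,1)$ entry vanishes because $(A_0)_{11}={\rm Tr}(\rho A_0)=0$ by centering. Your closing paragraph on extending $m_{\tilde f}(L_\rho,R_\rho)$ to singular $\rho$ via the continuity forced by $\tilde f(0)=0$ is precisely the ``obvious extension of the definition'' the paper invokes (see also Remark \ref{portmanteau} $(v)$), so no step is missing.
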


\section{The $N$-volume conjectures for quantum Fisher informations}
\label{con}

Let $(V, g(\cdot,\cdot))$ be a real inner-product vector space.  By
$\langle u,v \rangle$ we denote the standard scalar product for
vectors $u,v \in {\mathbb R}^N$.

\begin{proposition}
    Let $v_1, ...  , v_N \in V$.  The real $N \times N$ matrix
    $G:=\{g(v_h,v_j) \}$ is positive semidefinite and therefore ${\rm
    det}\{g(v_h,v_j) \} \geq 0$.
\end{proposition}
\begin{proof}
    Let $x:=(x_1,...,x_N)\in {\mathbb R}^N$.  We have
    $$
    0 \leq g\big(\sum_j x_jv_j, \sum_j x_jv_j \big)=\sum_{h,j}
    x_hx_jg(v_h,v_j)= \langle x, G(x) \rangle.
    $$
\end{proof}

Motivated by the case $(V, g(\cdot,\cdot))=({\mathbb R}^N, \langle
\cdot, \cdot \rangle)$ one can give the following definition.

\begin{definition}
    $$
    {\rm Vol}^g(v_1, ... , v_N):=\sqrt{{\rm det}\{g(v_h,v_j) \}}.
    $$
\end{definition}

\begin{remark}
    \itm{i} Obviously, ${\rm Vol}^g(v_1, ...  , v_N) \geq 0$, where
    the equality holds if and only if $v_1, ...  , v_N \in V$ are
    linearly dependent.

    \itm{ii} If the inner product depends on a further parameter so
    that $g(\cdot,\cdot)=g_{\rho}(\cdot,\cdot)$, we write ${\rm
    Vol}_{\rho}^g(v_1, ...  , v_N)={\rm Vol}^g(v_1, ...  , v_N)$.

    \itm{iii} In the case $(V,g_{\rho}(\cdot,\cdot))=({\cal
    L}^2_{\mathbb R}(\Omega, {\cal G},\rho), {\rm
    Cov}_{\rho}(\cdot,\cdot))$ the number ${\rm Vol}_{\rho}^{\rm
    Cov}(A_1, ...  , A_N)^2$ is also known as the {\it generalized
    variance} of the random vector $(A_1, ...  , A_N)$.
\end{remark}

In what follows we move to the noncommutative case.  Here $A_1,...A_N$
are self-adjoint matrices, $\rho$ is a (faithful) density matrix and
$g(\cdot,\cdot)={\rm Cov}_\rho(\cdot,\cdot)$ has been defined in
(\ref{cov}).  By ${\rm Vol}_{\rho}^f$ we denote the volume associated
to the quantum Fisher information $\langle
\cdot,\cdot\rangle_{\rho,f}$ given by the (regular) normalized
symmetric operator monotone function $f$.

Let $N \in {\mathbb N}$, $ f \in {\cal F}_{op}^{\, r}$, $\rho \in
{\cal D}^1_n$ and $A_1,...,A_N \in M_{n,sa}$ be arbitrary.  We
conjecture the following results.

\begin{conjecture} \label{main}
    \begin{equation}
	{\rm Vol}_{\rho}^{{\rm Cov}}(A_1, ...  , A_N) \geq \left(
	\frac{f(0)}{2}\right)^{\frac{N}{2}}{\rm
	Vol}_{\rho}^f(i[\rho,A_1], \ldots ,i[\rho,A_N]). 
	\label{con1}
    \end{equation}
\end{conjecture}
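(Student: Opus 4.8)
The plan is to square the inequality and recast it as a comparison of two ordinary determinants, then to feed in the decomposition of Theorem \ref{!}. Put $C_{hj}:={\rm Cov}_\rho(A_h,A_j)$ and $F_{hj}:=\frac{f(0)}{2}\langle i[\rho,A_h],i[\rho,A_j]\rangle_{\rho,f}$. Since $F$ is $\frac{f(0)}{2}$ times the Gram matrix of the tangent vectors $i[\rho,A_h]$ with respect to $\langle\cdot,\cdot\rangle_{\rho,f}$, one has $\det F=(f(0)/2)^N\det\{\langle i[\rho,A_h],i[\rho,A_j]\rangle_{\rho,f}\}$, so that (\ref{con1}) is equivalent to the single inequality
$$
\det\{C_{hj}\}\ \geq\ \det\{F_{hj}\}.
$$
Both matrices are real symmetric (covariances and quantum Fisher informations of self-adjoint matrices are real) and positive semidefinite, being Gram matrices of genuine inner products.

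The heart of the argument is the identity of Theorem \ref{!}, read entrywise: with $M_{hj}:={\rm Tr}(m_{\tilde f}(L_\rho,R_\rho)((A_h)_0)(A_j)_0)$ it says $F=C-M$, i.e. $C=F+M$. The matrix $M$ is again a Gram matrix, now for the bilinear form $(X,Y)\mapsto {\rm Tr}(m_{\tilde f}(L_\rho,R_\rho)(X)Y)$ on self-adjoint matrices. Because $\tilde f\in{\cal F}_{op}^{\,n}\subset{\cal F}_{op}$ is a bona fide symmetric operator monotone function, $m_{\tilde f}$ is a Kubo-Ando mean of the commuting positive superoperators $L_\rho,R_\rho$, hence $m_{\tilde f}(L_\rho,R_\rho)$ is a positive superoperator and $M\geq0$. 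Concretely, in the spectral coordinates of Proposition \ref{?} one gets $\sum_{h,j}x_hx_jM_{hj}=\sum_{k,l}m_{\tilde f}(\lambda_k,\lambda_l)\,|\alpha_{kl}|^2\geq0$, where $\alpha$ is the $\rho$-eigenbasis matrix of $\sum_hx_h(A_h)_0$. This is exactly where the Kubo-Ando inequality plays the role that Cauchy-Schwarz plays for the classical uncertainty principle: positivity of the mean gives $M\geq0$, while the upper bound $m_{\tilde f}\leq$ arithmetic mean (Corollary \ref{basic} applied to $\tilde f$) simultaneously gives $F\geq0$.

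With $C=F+M$ and $0\leq F$, $0\leq M$ in hand, I would conclude $0\leq F\leq C$ and invoke monotonicity of the determinant on the positive semidefinite cone to obtain $\det F\leq\det C$, which is the desired inequality. In the real case $N=3$ everything is literally real symmetric and this route is completely transparent; as an alternative, one can avoid the monotonicity lemma and expand $\det(F+M)$ directly, checking via Proposition \ref{?} and the two-sided bound $\frac{2\lambda_k\lambda_l}{\lambda_k+\lambda_l}\leq m_{\tilde f}(\lambda_k,\lambda_l)\leq\frac{\lambda_k+\lambda_l}{2}$ that each contribution to $\det C-\det F$ is nonnegative. This direct expansion is also the route that would deliver the accompanying equality analysis and the monotone dependence on $f$.

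The step I would scrutinize most carefully, and where I expect the genuine content to sit, is the positivity of the correction matrix $M$ rather than the final determinant comparison: since $\det F\le\det C$ is immediate for positive semidefinite matrices, any slack in $M\geq0$ would be fatal, so the real work is concentrated in that operator-mean estimate. Establishing $M\geq0$ is precisely the matrix refinement of the $N=1$ case, being equivalent to ${\rm Var}_\rho(A)\geq\frac{f(0)}{2}\|i[\rho,A]\|^2_{\rho,f}$, and it is the only place that uses the structure of $\tilde f$; everything afterwards is linear algebra. For a hands-on $N=3$ treatment the delicate part is the bookkeeping of the $3\times 3$ minors, where one must dominate the mixed terms of $\det(F+M)-\det F$ using the Cauchy-Schwarz-type minor inequalities $F_{hh}F_{jj}\geq F_{hj}^2$ and $M_{hh}M_{jj}\geq M_{hj}^2$ furnished by $F,M\geq0$; it is the growth of this combinatorics with $N$, not a breakdown of the mechanism, that makes a purely elementary proof hard to push beyond $N=3$.
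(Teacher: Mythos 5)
Your proposal is correct, and it takes a genuinely different --- and markedly stronger --- route than the paper. You work at the matrix level: writing $C_{hj}=\Cov_\rho(A_h,A_j)$, $F_{hj}=\frac{f(0)}{2}\langle i[\rho,A_h],i[\rho,A_j]\rangle_{\rho,f}$, $M_{hj}=\Tr\bigl(m_{\tilde f}(L_\rho,R_\rho)((A_h)_0)(A_j)_0\bigr)$, Theorem \ref{!} does give the entrywise decomposition $C=F+M$; your spectral computation $\sum_{h,j}x_hx_jM_{hj}=\sum_{k,l}m_{\tilde f}(\lambda_k,\lambda_l)|X_{kl}|^2\ge 0$ with $X=\sum_h x_h(A_h)_0$ is exactly right, $F\succeq 0$ as a (scaled) Gram matrix, and $0\preceq F\preceq C$ yields $\det F\le\det C$ by Weyl monotonicity of eigenvalues. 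The paper, by contrast, expands $F(f)=\det C-\det(C-M)$ in the eigenbasis of $\rho$ (Lemma \ref{lem}) as $\frac16\sum H^f_{ijhklm}K_{ijhklm}$ and needs the \emph{termwise} bound $K_{ijhklm}\ge 0$, which it can only establish when $K$ is a squared $3\times 3$ determinant (real matrices, equation (\ref{kr})) or in the special complex families of Proposition \ref{penultima}; that is precisely why its result is limited to the real case of $N=3$. Your argument requires only aggregate positivity of quadratic forms, never termwise positivity, so it proves Conjecture \ref{main} for \emph{every} $N$ and arbitrary complex self-adjoint matrices --- this is essentially the mechanism by which the conjecture was in fact later settled in the literature (Andai, J.~Math.~Phys.~49 (2008) 012106, and the authors' own subsequent work in Linear Algebra Appl.~428 (2008) 1706--1724). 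Two caveats: your route does not by itself address Conjectures \ref{equality} and \ref{monot}, for which the $H\cdot K$ expansion remains useful; and your closing remark misidentifies the obstruction --- the difficulty with the expansion method is not combinatorial growth in $N$ (mixed discriminants of positive semidefinite matrices are nonnegative in every dimension, so your ``direct expansion'' alternative also works for all $N$) but the possible negativity of individual $K_{ijhklm}$ for complex matrices, which your matrix-level formulation simply bypasses.
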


\begin{conjecture} \label{equality}

    The above inequality is an equality if and only if ${A_1}_0, ... 
    , {A_N}_0$ are linearly dependent.
\end{conjecture}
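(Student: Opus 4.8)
The plan is to read the equality case directly off the additive decomposition underlying Conjecture \ref{main}. Write $C:=\{\Cov_\rho(A_h,A_j)\}$, $P:=\{\tfrac{f(0)}{2}\langle i[\rho,A_h],i[\rho,A_j]\rangle_{\rho,f}\}$ and $M:=\{\Tr(m_{\tilde f}(L_\rho,R_\rho)({A_h}_0)\,{A_j}_0)\}$ for the three real symmetric $N\times N$ matrices in play. Theorem \ref{!} gives the entrywise identity $C=P+M$, so that (\ref{con1}) is exactly $\det C\ge\det P$ and the question is when this is tight. Here $P$ is $\tfrac{f(0)}{2}$ times a Gram matrix of the quantum Fisher information, hence positive semidefinite, and $M$ is positive semidefinite because $m_{\tilde f}(L_\rho,R_\rho)$ is a positive superoperator: $\sum_{h,j}x_hx_jM_{hj}=\langle Y,m_{\tilde f}(L_\rho,R_\rho)Y\rangle\ge 0$ with $Y=\sum_h x_h{A_h}_0$. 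Since $\rho>0$ and $\tilde f>0$, that superoperator is in fact \emph{strictly} positive; this strictness, together with the upper bound $\tilde f(x)\le\tfrac{1+x}{2}$ of Corollary \ref{basic} (which yields $P\succeq0$), is the structural input I will lean on.

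First I would dispose of the ``if'' direction. If ${A_1}_0,\dots,{A_N}_0$ are linearly dependent, say $\sum_h c_h{A_h}_0=0$ with $c\neq0$, then $\sum_h c_hA_h=\Tr(\rho\sum_h c_hA_h)\,I$ is a scalar matrix, so $\sum_h c_h\,i[\rho,A_h]=i[\rho,\sum_h c_hA_h]=0$. Hence the vectors $i[\rho,A_h]$ are dependent as well, giving $\det C=\det P=0$ and an equality in (\ref{con1}).

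For the ``only if'' direction I would argue the contrapositive: if ${A_1}_0,\dots,{A_N}_0$ are linearly independent, the inequality is strict. Independence makes $C$ the Gram matrix of independent vectors for the positive-definite covariance form ($\rho$ being faithful), so $C\succ0$. If $P$ is singular then $\det C>0=\det P$ and we are done; if $P\succ0$, factor $\det C/\det P=\det(I+P^{-1/2}MP^{-1/2})=\prod_k(1+s_k)$ with $s_k\ge0$ the eigenvalues of the positive semidefinite $P^{-1/2}MP^{-1/2}$, a product that equals $1$ only if $M=0$. But strict positivity of $m_{\tilde f}(L_\rho,R_\rho)$ and ${A_h}_0\neq0$ force $M_{hh}=\langle{A_h}_0,m_{\tilde f}(L_\rho,R_\rho){A_h}_0\rangle>0$, so $M\neq0$ and $\det C>\det P$.

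The crux, and where I expect the real work to sit, is this last strictness --- equivalently, the exact description of when $\det(P+M)=\det P$. The clean dichotomy above is only as strong as the two structural facts it uses: that $M$ is a genuine positive semidefinite matrix, and that $m_{\tilde f}(L_\rho,R_\rho)$ annihilates no nonzero ${A_h}_0$. If for the $N$ at hand one does not yet have Conjecture \ref{main} in this decomposed form (so it is available only as an inequality), then the equality analysis must be rebuilt from the eigenvalue expansion of Proposition \ref{?}, tracking each weight $\tfrac{\lambda_h+\lambda_j}{2}-m_{\tilde f}(\lambda_h,\lambda_j)$ and using the strictness of the Kubo--Ando estimate off the diagonal. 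That term-by-term bookkeeping is precisely what becomes case-dependent at $N=3$, and is the natural place where a restriction to real (or otherwise structurally special) matrices would enter.
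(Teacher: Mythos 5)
Your proof is correct, and it takes a genuinely different --- and substantially stronger --- route than the paper. The paper never proves Conjecture \ref{equality} in general: it establishes it only for $N=1$ (where $F(f)=\sum_{i,j}m_{\tilde f}(\lambda_i,\lambda_j)|a_{ij}|^2$ vanishes iff $A_0=0$), for $N=2$ (via the expansion of $F(f)$ into $H^f_{i,j,k,l}K_{i,j,k,l}$ together with Proposition \ref{K=0}), and for $N=3$ only for real matrices (where $K_{ijhklm}$ is the square of a $3\times 3$ determinant, so independence produces a strictly positive term); the general ``if'' half appears in Remark \ref{portmanteau} $(iv)$ and is moreover conditional on Conjecture \ref{main}. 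Your argument is unconditional and uniform in $N$: writing $C=P+M$ via Theorem \ref{!}, with $P$ a Gram matrix for the quantum Fisher information (hence $P\succeq 0$) and $M$ the quadratic form of the superoperator $m_{\tilde f}(L_\rho,R_\rho)$ --- which by Proposition \ref{?} acts diagonally in the eigenbasis of $\rho$ with multipliers $m_{\tilde f}(\lambda_i,\lambda_j)>0$ for faithful $\rho$, so that $M\succeq 0$ and indeed $M\succ 0$ whenever ${A_1}_0,\dots,{A_N}_0$ are independent --- you obtain $\det C\ge\det P$ with strictness exactly off the dependent locus. This settles Conjectures \ref{main} \emph{and} \ref{equality} for every $N$ and arbitrary complex self-adjoint matrices, far beyond what the paper achieves; it is essentially the matrix-ordering argument ($C\succeq P\succeq 0$ plus determinant monotonicity) by which the conjecture was in fact later resolved in full generality by Gibilisco, Hiai and Petz (IEEE Trans. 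Inform. Theory, 2009). Your closing hedge is therefore unnecessary: the two structural inputs you isolate are immediate from Proposition \ref{?} for $\rho\in\mathcal{D}_n^1$ and $f$ regular, and no term-by-term bookkeeping or restriction to real matrices is needed; note too that your ``if'' direction improves on Remark \ref{portmanteau} $(iv)$, since it shows directly that both determinants vanish without invoking Conjecture \ref{main}. The one thing your route does not deliver, and the real payoff of the paper's heavier $H\cdot K$ machinery, is Conjecture \ref{monot}: comparing $V(f)$ and $V(g)$ for $\tilde f\le\tilde g$ requires tracking the $f$-dependence of each coefficient through Propositions \ref{min} and \ref{mon}, and determinant monotonicity alone gives no comparison between two different quantum Fisher informations.
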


\begin{conjecture} \label{monot}

    Fix $N \in {\mathbb N}$, $\rho \in {\cal D}^1_n$ and $A_1,...,A_N
    \in M_{n,sa}$.  Given $f\in\mathcal F_{op}^{\, r}$, define
    $$
    V(f):=\left( \frac{f(0)}{2}\right)^{\frac{N}{2}}{\rm
    Vol}_{\rho}^f(i[\rho,A_1], ...  ,i[\rho,A_N]).
    $$
    Then, for any $f,g\in\mathcal F_{op}^{\, r}$
    $$
    {\tilde f} \leq {\tilde g} \quad \Longrightarrow \quad V(f) \geq
    V(g).
    $$
\end{conjecture}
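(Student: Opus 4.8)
The plan is to turn the comparison of volumes into a comparison of determinants of two positive semidefinite $N\times N$ matrices that differ only through the operator mean $m_{\tilde f}$, and then to invoke monotonicity of the determinant on the positive semidefinite cone. First I would record that, by the definition of ${\rm Vol}_\rho^f$,
$$
V(f)^2=\det\left\{\frac{f(0)}{2}\langle i[\rho,A_h],i[\rho,A_j]\rangle_{\rho,f}\right\}=:\det M^f ,
$$
so that, $V(f),V(g)$ being nonnegative, it is enough to prove $\det M^f\ge\det M^g$.

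Next I would apply Theorem \ref{!} entrywise, which gives
$$
M^f_{hj}={\rm Cov}_\rho(A_h,A_j)-{\rm Tr}(m_{\tilde f}(L_{\rho},R_{\rho})({A_h}_0){A_j}_0)=:C_{hj}-D^f_{hj},
$$
where the covariance matrix $C=\{{\rm Cov}_\rho(A_h,A_j)\}$ is independent of $f$. Hence $M^f=C-D^f$ and $M^g=C-D^g$, and the whole problem reduces to the Loewner comparison $D^f\le D^g$: this yields $M^f\ge M^g$, and since $M^g\ge 0$ (it is $\tfrac{g(0)}{2}>0$ times a Gram matrix for $\langle\cdot,\cdot\rangle_{\rho,g}$), the elementary fact that $0\le M^g\le M^f$ implies $\det M^g\le\det M^f$ finishes the argument.

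The core is therefore to establish $D^f\le D^g$ from the hypothesis $\tilde f\le\tilde g$, and here I would pass to the eigenbasis of $\rho$ by Proposition \ref{?}. Writing $a^{(k)}_{hj}=\langle {A_k}_0\,\varphi_h\,|\,\varphi_j\rangle$ for the matrix attached to $A_k$, self-adjointness of ${A_l}_0$ gives $a^{(l)}_{jh}=\overline{a^{(l)}_{hj}}$, so Proposition \ref{?} yields
$$
D^f_{kl}=\sum_{h,j}m_{\tilde f}(\lambda_h,\lambda_j)\,{\rm Re}\{a^{(k)}_{hj}\overline{a^{(l)}_{hj}}\}.
$$
For each fixed pair $(h,j)$ the real symmetric matrix $P^{(h,j)}=\{{\rm Re}(a^{(k)}_{hj}\overline{a^{(l)}_{hj}})\}_{k,l}$ is the real part of the rank-one Hermitian positive semidefinite matrix $ww^*$ with $w_k=a^{(k)}_{hj}$, hence is itself positive semidefinite. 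Since $\tilde f\le\tilde g$ is equivalent to $m_{\tilde f}\le m_{\tilde g}$ by Proposition \ref{min}, each scalar weight $m_{\tilde g}(\lambda_h,\lambda_j)-m_{\tilde f}(\lambda_h,\lambda_j)$ is nonnegative at the (positive) eigenvalues of $\rho$, so
$$
D^g-D^f=\sum_{h,j}\bigl[m_{\tilde g}(\lambda_h,\lambda_j)-m_{\tilde f}(\lambda_h,\lambda_j)\bigr]\,P^{(h,j)}\ge 0 .
$$

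I expect the only genuinely non-formal point to be this Loewner-ordering step $D^f\le D^g$: the substantive content is that the $f$-independent covariance part cancels in $M^f-M^g$ and that the remainder organizes into a manifestly positive sum of rank-one blocks weighted by the nonnegative differences of scalar means. Once this is secured, $V(f)\ge V(g)$ follows from determinant monotonicity. It is worth noting that, in contrast to Conjecture \ref{main}, this monotonicity statement needs no case analysis: the decomposition into positive rank-one blocks is valid for every $N$ and every matrix size, so the proof is uniform in both.
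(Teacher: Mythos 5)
Your proof is correct, and it in fact does something the paper itself does not do: the statement is left as a \emph{conjecture} there, verified only case by case --- $N=1$ in Section \ref{L}, $N=2$ in Section \ref{A}, and $N=3$ in Section \ref{core} only conditionally on $K_{ijhklm}\ge 0$ (Corollary \ref{cor}), a sign condition the authors can check just for real matrices and for special complex families (Proposition \ref{penultima}, Corollary \ref{comp}). The paper's route writes $F(f)$ as a sum $\frac16\sum H^f_{ijhklm}K_{ijhklm}$ and multiplies the scalar inequality $H^f\le H^g$ (Proposition \ref{mon}) against the $f$-independent coefficients $K_{ijhklm}$, so it stalls exactly where the sign of $K$ is unknown. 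You bypass $K$ entirely by proving the stronger Loewner-order statement: with $M^f=C-D^f$ from Theorem \ref{!} and, via Proposition \ref{?},
$$
D^f_{kl}=\sum_{h,j}m_{\tilde f}(\lambda_h,\lambda_j)\,\mathrm{Re}\bigl\{a^{(k)}_{hj}\overline{a^{(l)}_{hj}}\bigr\},
$$
the difference $D^g-D^f$ is a sum of nonnegative scalar weights $m_{\tilde g}(\lambda_h,\lambda_j)-m_{\tilde f}(\lambda_h,\lambda_j)$ (Proposition \ref{min}, applied at the strictly positive eigenvalues of the faithful state $\rho$) times the matrices $\mathrm{Re}(ww^*)$, each positive semidefinite since $x^{T}\,\mathrm{Re}(ww^*)\,x=\bigl|\sum_k x_k w_k\bigr|^2$ for real $x$; then $0\le M^g\le M^f$ and determinant monotonicity on the positive semidefinite cone (Weyl eigenvalue monotonicity) finish the argument. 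Every step checks out, and the argument is uniform in $N$ and valid for arbitrary complex self-adjoint matrices, so you have settled Conjecture \ref{monot} in full generality, not merely in the cases the paper treats.

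One further observation you should be aware of: your final remark contrasting this with Conjecture \ref{main} undersells your own method. The identical decomposition with the nonnegative weights $m_{\tilde f}(\lambda_h,\lambda_j)$ themselves shows $D^f\ge 0$, hence $C\ge M^f\ge 0$ in the Loewner order, and the same determinant step then yields $\det C\ge\det M^f$, i.e.\ Conjecture \ref{main}, again for every $N$ and without any case analysis. So the Loewner-domination idea resolves both conjectures at once (this is essentially how the conjecture was in fact later settled in the literature); only the equality characterization of Conjecture \ref{equality} requires additional work beyond your argument.
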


\begin{remark} \label{portmanteau}

    \itm{i} Conjecture \ref{main} is equivalent to the following
    inequality
    $$
    {\rm det} \{ {\rm Cov}_{\rho}(A_h,A_j) \} \geq {\rm det} \left\{
    \frac{f(0)}{2} \langle i[\rho,A_h],i[\rho,A_j]\rangle_{\rho,f}
    \right\}.
    $$

    \itm{ii} If $\rho$ and $A_1,...,A_N$ are fixed, set
    $$
    F(f):={\rm det} \{ {\rm Cov}_{\rho}(A_h,A_j) \} - {\rm det}
    \left\{ \frac{f(0)}{2} \langle
    i[\rho,A_h],i[\rho,A_j]\rangle_{\rho,f} \right\}.
    $$
    Because of Theorem \ref{!} one has
    $$
    F(f)={\rm det} \{ {\rm Cov}_{\rho}(A_h,A_j) \} - {\rm det} \left\{
    {\rm Cov}_{\rho}(A_h,A_j)- {\rm Tr}(m_{\tilde
    f}(L_{\rho},R_{\rho})({A_h}_0){A_j}_0)\right\}.
    $$
    Therefore, Conjecture \ref{main} is equivalent to
    $$
    F(f) \geq 0.
    $$

    \itm{iii} Conjecture \ref{monot} is equivalent to
    $$
    \tilde{f} \leq \tilde{g} \qquad \Longrightarrow \quad F(f) \leq
    F(g).
    $$

    \itm{iv} \label{comment} Suppose that Conjecture \ref{main} is
    true.  One can prove the ``if" part of Conjecture \ref{equality}
    in the following way.  Since $(A_0)_0=A_0$ one has
    $$
    {\rm Cov}_{\rho}(A_1,A_2)={\rm Re}\{{\rm Tr}(\rho {A_1}_0
    {A_2}_0)\} ={\rm Cov}_{\rho}({A_1}_0, {A_2}_0).
    $$
    From this it follows
    $$
    {\rm Vol}_{\rho}^{{\rm Cov}}(A_1, ...  , A_N) ={\rm
    Vol}_{\rho}^{{\rm Cov}}({A_1}_0, ...  , {A_N}_0).
    $$
    Therefore, if ${A_1}_0, ...  , {A_N}_0$ are linearly dependent
    then
    $$
    0={\rm Vol}_{\rho}^{{\rm Cov}}({A_1}_0, ...  , {A_N}_0) ={\rm
    Vol}_{\rho}^{{\rm Cov}}(A_1, ...  , A_N) \geq \left(
    \frac{f(0)}{2}\right)^{\frac{N}{2}}{\rm Vol}_{\rho}^f(i[\rho,A_1],
    ...  ,i[\rho, A_N]) \geq 0
    $$
    and we are done.

    \itm{v} The inequality
    $$
    {\rm det} \{ {\rm Cov}_{\rho}(A_h,A_j) \} \geq {\rm det} \left\{
    {\rm Cov}_{\rho}(A_h,A_j)- {\rm Tr}(m_{\tilde
    f}(L_{\rho},R_{\rho})({A_h}_0){A_j}_0)\right\}
    $$
    makes sense also for not faithful states.
\end{remark}

Because of Proposition \ref{puri} one has (by an obvious extension of
the definition) the following result.

\begin{proposition}
    If $\rho$ is a pure state, then for any $ N \in {\mathbb N}, \quad
    f \in {\cal F}_{op}^{\, r}, \quad A_1, ...  , A_N \in M_{n,sa}$
    one has
    $$
    {\rm Vol}_{\rho}^{{\rm Cov}}(A_1, ...  , A_N) = \left(
    \frac{f(0)}{2}\right)^{\frac{N}{2}}{\rm Vol}_{\rho}^f(i[\rho,A_1],
    ...  ,i[\rho,A_N]).
    $$
\end{proposition}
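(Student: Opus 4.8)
The plan is to prove the stronger, entrywise statement that on a pure state the two $N \times N$ Gram matrices whose determinants appear in Remark \ref{portmanteau}(i) are in fact equal entry by entry; the claimed equality of volumes is then immediate. The only ingredients are Theorem \ref{!}, which rewrites each entry of the quantum-Fisher matrix as a covariance minus a correction term, and Proposition \ref{puri}, which asserts that this correction term vanishes on pure states.

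Concretely, I would fix a pure state $\rho$ and arbitrary $A_1,\dots,A_N \in M_{n,sa}$, and apply Theorem \ref{!} to each ordered pair $(A_h,A_j)$ to get
$$
\frac{f(0)}{2}\langle i[\rho,A_h],i[\rho,A_j]\rangle_{\rho,f}
= {\rm Cov}_{\rho}(A_h,A_j) - {\rm Tr}\big(m_{\tilde f}(L_{\rho},R_{\rho})({A_h}_0){A_j}_0\big).
$$
By Proposition \ref{puri} applied with $A={A_h}$, $B={A_j}$, the trace term is $0$ for every pair $h,j$, so the matrices $\big\{\tfrac{f(0)}{2}\langle i[\rho,A_h],i[\rho,A_j]\rangle_{\rho,f}\big\}$ and $\big\{{\rm Cov}_{\rho}(A_h,A_j)\big\}$ coincide. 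Taking determinants and using ${\rm det}(cM)=c^N {\rm det}(M)$ with $c=f(0)/2$ gives
$$
{\rm det}\big\{{\rm Cov}_{\rho}(A_h,A_j)\big\}
= \Big(\frac{f(0)}{2}\Big)^{N}{\rm det}\big\{\langle i[\rho,A_h],i[\rho,A_j]\rangle_{\rho,f}\big\},
$$
and extracting positive square roots (legitimate since $f(0)>0$ for regular $f$) yields exactly ${\rm Vol}_{\rho}^{{\rm Cov}}(A_1,\dots,A_N)=\big(f(0)/2\big)^{N/2}{\rm Vol}_{\rho}^f(i[\rho,A_1],\dots,i[\rho,A_N])$.

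The only genuinely delicate point --- flagged by the phrase ``by an obvious extension of the definition'' --- is that a pure $\rho$ has rank one and is therefore not faithful, so the quantum Fisher information $\langle\cdot,\cdot\rangle_{\rho,f}$, defined through the inverse $m_f(L_{\rho},R_{\rho})^{-1}$, is not literally available. I would handle this exactly as in Section \ref{tilde}: since $f$ is regular ($f(0)\neq 0$), the radial (Petz--Sudar) extension to pure states exists, and in the eigenbasis of $\rho$ the quantity $\frac{f(0)}{2}\langle i[\rho,A],i[\rho,B]\rangle_{\rho,f}$ is given by the manifestly finite formula of Proposition \ref{?}, whose second sum vanishes because $\tilde f \in {\cal F}_{op}^{\,n}$ forces $m_{\tilde f}(\lambda_h,\lambda_j)=0$ as soon as $\lambda_h$ or $\lambda_j$ equals $0$, while the surviving diagonal term is killed by $a_{11}=b_{11}=0$ (the centering makes the pure-state diagonal entry of $A_0$ vanish). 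This is precisely the content of Proposition \ref{puri}, so no computation beyond invoking it is required. I expect this bookkeeping about passing to the non-faithful limit to be the only obstacle, and it is a soft one.
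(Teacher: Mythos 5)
Your proposal is correct and is precisely the argument the paper intends: the paper derives the proposition in one line from Theorem \ref{!} together with Proposition \ref{puri} (``by an obvious extension of the definition''), which is exactly your entrywise identification of the two Gram matrices followed by taking determinants and square roots. Your closing paragraph merely spells out the pure-state extension and why the correction term vanishes (via Proposition \ref{?}, $m_{\tilde f}(\lambda_h,\lambda_j)=0$ when an eigenvalue is $0$, and $a_{11}=0$ by centering), which is the content the paper delegates to Proposition \ref{puri}, so no discrepancy remains.
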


\section{The length inequality} \label{L}

In this section we discuss the case $N=1$ of Conjectures \ref{main},
\ref{equality} and \ref{monot}.  The cases $f=f_{SLD}$ and $f=f_{WY}$
of Conjecture \ref{main} were proved by Luo in \cite{Luo:2000} and
\cite{Luo:2003b}.  The general case of Conjecture \ref{main} was
proved by Hansen in \cite{Hansen:2006b} and shortly after by
Gibilisco, Imparato and Isola with a different technique in
\cite{GibiliscoImparatoIsola:2007}.  Conjectures \ref{equality} and
\ref{monot} have been proved by Gibilisco, Imparato and Isola in
\cite{GibiliscoImparatoIsola:2007} (see also
\cite{GibiliscoImparatoIsola:2007a}).

The proof of Conjecture \ref{main} by Hansen is based on the following
immediate proposition.

\begin{proposition}

    Let $T,S$ be real functions on the state space coinciding on pure
    states.  Suppose that $T$ is convex and $S$ is concave.  Then for
    all states $\rho$
    $$
    T(\rho) \leq S(\rho).
    $$
\end{proposition}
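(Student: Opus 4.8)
The plan is to use the fact that the state space is the convex hull of its extreme points, which are precisely the pure states, and then to exploit the convexity/concavity hypotheses together with the hypothesis that $T$ and $S$ agree on pure states.

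First I would invoke the spectral theorem to record the convex structure: any density matrix $\rho$ admits a decomposition $\rho = \sum_h \lambda_h \pi_h$, where $\lambda_h \geq 0$, $\sum_h \lambda_h = 1$, and each $\pi_h = |\varphi_h\rangle\langle\varphi_h|$ is a rank-one projection, hence a pure state (here $\{\varphi_h\}$ is an orthonormal eigenbasis of $\rho$ with corresponding eigenvalues $\lambda_h$). Thus $\rho$ is literally a convex combination of pure states. Next I would apply convexity of $T$ to obtain $T(\rho) \le \sum_h \lambda_h T(\pi_h)$, and concavity of $S$ to obtain $S(\rho) \ge \sum_h \lambda_h S(\pi_h)$. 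Since by hypothesis $T$ and $S$ coincide on pure states, $T(\pi_h) = S(\pi_h)$ for every $h$, so the two bounds share the same right-hand side. Chaining the inequalities then gives
$$T(\rho) \le \sum_h \lambda_h T(\pi_h) = \sum_h \lambda_h S(\pi_h) \le S(\rho),$$
which is exactly the desired conclusion.

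There is no serious obstacle here; as the paper itself signals by calling it ``immediate,'' the statement is a one-line consequence of (finite) Jensen's inequality once the convex-combination structure of the state space has been written down. The only point that deserves an explicit remark is the identification of the pure states with the extreme points of the state space, so that the spectral decomposition really does exhibit $\rho$ as a convex combination of points on which $T$ and $S$ are known to agree.
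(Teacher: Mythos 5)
Your proof is correct, and it fills in exactly the argument the paper leaves implicit: the paper states this proposition without proof, dismissing it as ``immediate,'' and the intended reasoning is precisely your spectral decomposition of $\rho$ into a convex combination of pure states followed by Jensen's inequality for $T$ (convex) and $S$ (concave), chained through their agreement on the pure states. Nothing is missing; your remark identifying pure states as the extreme points realized by the rank-one spectral projections is the only point of substance, and you handle it correctly.
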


It is well known that the variance is concave.  Hansen was able to
prove that the {\it metric adjusted skew information} (namely
$\frac{f(0)}{2}{\rm Vol}_{\rho}^f(i[\rho,A_1])^2$) is convex and so he
got the conclusion from the above Proposition.  Note that the
convexity of the function $\frac{f(0)}{2}{\rm
Vol}_{\rho}^f(i[\rho,A_1])^2$ is related to the well known Lieb's
concavity theorem (see \cite{Hansen:2006}\cite{Hansen:2006b}). 
Despite the elegance of the above proof its ideas do not apply to
cases different from $N=1$, as we shall see in the next section.

The techniques applied by ourselves in the proof of case $N=1$ in the
paper \cite{GibiliscoImparatoIsola:2007} do not seem to share the same
fate.  Moreover they allow one to prove also Conjectures
\ref{equality} and \ref{monot}.  Let us discuss them.

\begin{theorem}
    Conjectures \ref{main}, \ref{equality} and \ref{monot} are true
    for $N=1$ and for any $f \in {\cal F}_{op}^r$.
\end{theorem}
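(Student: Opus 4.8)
The plan is to reduce all three assertions, in the case $N=1$, to elementary properties of the single scalar
$$
D(f):={\rm Tr}\big(m_{\tilde f}(L_{\rho},R_{\rho})({A_1}_0)\,{A_1}_0\big).
$$
Since for $N=1$ the $1\times1$ Gram determinants are single entries, one has ${\rm Vol}_{\rho}^{{\rm Cov}}(A_1)=\sqrt{{\rm Var}_{\rho}(A_1)}$ and $V(f)^2=\frac{f(0)}{2}\langle i[\rho,A_1],i[\rho,A_1]\rangle_{\rho,f}$. First I would apply Theorem~\ref{!} (equivalently Remark~\ref{portmanteau}(ii) specialized to $N=1$) to obtain the identity
$$
V(f)^2={\rm Var}_{\rho}(A_1)-D(f),
$$
so that Conjecture~\ref{main} for $N=1$ is precisely the statement $D(f)\ge0$.

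The engine for all three conjectures is the spectral expansion of Proposition~\ref{?}. Taking $A=B=A_1$ there (so $b=a$, and $a_{jh}=\overline{a_{hj}}$ since ${A_1}_0$ is self-adjoint) yields
$$
D(f)=\sum_{h,j}m_{\tilde f}(\lambda_h,\lambda_j)\,|a_{hj}|^2,\qquad a_{hj}=\langle {A_1}_0\varphi_h\,|\,\varphi_j\rangle.
$$
Each scalar mean $m_{\tilde f}(\lambda_h,\lambda_j)=\lambda_h\,\tilde f(\lambda_j/\lambda_h)$ is nonnegative, whence $D(f)\ge0$ and Conjecture~\ref{main} follows at once. (Alternatively, $D(f)=\langle{A_1}_0,m_{\tilde f}(L_{\rho},R_{\rho})({A_1}_0)\rangle\ge0$ because $m_{\tilde f}(L_{\rho},R_{\rho})$ is a Kubo--Ando mean of the commuting positive superoperators $L_{\rho},R_{\rho}$, hence itself a positive superoperator.)

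For Conjecture~\ref{equality} I would sharpen this. As $\rho$ is faithful, $\lambda_h>0$ for every $h$, and since $\tilde f$ maps $(0,+\infty)$ into $(0,+\infty)$ (it belongs to ${\cal F}_{op}^{\,n}\subset{\cal F}_{op}$), every weight $m_{\tilde f}(\lambda_h,\lambda_j)$ is \emph{strictly} positive. Hence $D(f)=0$ forces $a_{hj}=0$ for all $h,j$, i.e. ${A_1}_0=0$, which for a single vector is exactly linear dependence; this is the ``only if'' direction. The ``if'' direction is immediate, since ${A_1}_0=0$ makes both ${\rm Var}_{\rho}(A_1)$ and $D(f)$ vanish (cf. Remark~\ref{portmanteau}(iv)).

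For Conjecture~\ref{monot} I would invoke Proposition~\ref{min}, by which $\tilde f\le\tilde g$ is equivalent to $m_{\tilde f}\le m_{\tilde g}$, and therefore $m_{\tilde f}(\lambda_h,\lambda_j)\le m_{\tilde g}(\lambda_h,\lambda_j)$ for every pair of eigenvalues. Weighting by the nonnegative numbers $|a_{hj}|^2$ in the spectral expansion gives $D(f)\le D(g)$, so $V(f)^2={\rm Var}_{\rho}(A_1)-D(f)\ge{\rm Var}_{\rho}(A_1)-D(g)=V(g)^2$, and $V(f)\ge V(g)$ since $V\ge0$. I do not expect any genuine obstacle here: once Theorem~\ref{!} and Propositions~\ref{?} and~\ref{min} are in hand, the three statements are immediate consequences of the nonnegativity, strict positivity, and monotonicity of the scalar weights $m_{\tilde f}(\lambda_h,\lambda_j)$. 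The only delicate point is the strict positivity used for Conjecture~\ref{equality}, which rests squarely on the faithfulness of $\rho$ together with $\tilde f>0$ on $(0,+\infty)$.
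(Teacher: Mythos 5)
Your proposal is correct and takes essentially the same route as the paper's proof: both use Theorem~\ref{!} (via Remark~\ref{portmanteau}) and the spectral expansion of Proposition~\ref{?} to reduce everything to the quantity $\sum_{h,j} m_{\tilde f}(\lambda_h,\lambda_j)\,|a_{hj}|^2$, whose nonnegativity gives Conjecture~\ref{main}, whose vanishing (by strict positivity of the weights for faithful $\rho$) gives Conjecture~\ref{equality}, and whose monotonicity under Proposition~\ref{min} gives Conjecture~\ref{monot}. The only difference is cosmetic: you spell out the strict positivity of $m_{\tilde f}(\lambda_h,\lambda_j)$, which the paper treats as obvious.
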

\begin{proof}
    Set $A_1=A$.  Using Proposition \ref{?} and the notation in 
    Remark \ref{portmanteau} $(ii)$ one gets:
    \begin{align*}
	F(f)&={\rm det} \{ {\rm Cov}_{\rho}(A,A) \} - {\rm det}
	\left\{ {\rm Cov}_{\rho}(A,A)- {\rm Tr}(m_{\tilde
	f}(L_{\rho},R_{\rho})({A}_0){A}_0)\right\} \\
	&= {\rm Cov}_{\rho}(A,A) - [{\rm Cov}_{\rho}(A,A)- {\rm
	Tr}(m_{\tilde f}(L_{\rho},R_{\rho})({A}_0){A}_0)] \\
	&={\rm Tr}(m_{\tilde f}(L_{\rho},R_{\rho})({A}_0){A}_0) \\
	&=\sum_{i,j}m_{\tilde f}(\lambda_i,\lambda_j){\rm Re}\{
	a_{ij}a_{ji}\} \\
	&=\sum_{i,j}m_{\tilde f}(\lambda_i,\lambda_j) |a_{ij}|^2 \geq
	0,
    \end{align*}
    that is, Conjecture \ref{main} is true.  Obviously $F(f)=0$ iff
    $a_{ij}=0$ $\forall i,j$, that is, iff $A_0=0$ and so we get
    Conjecture \ref{equality}.  Using Proposition \ref{min} and Remark
    \ref{portmanteau} $(iii)$ one obtains also the validity of
    Conjecture \ref{monot}.
\end{proof}

\section{The area inequality} \label{A}

Let us discuss the case $N=2$ of Conjecture \ref{main}.  The result
was proved true for $f=f_{WY}$ by Luo, Q. Zhang and Z. Zhang in
\cite{LuoZZhang:2004} \cite{LuoQZhang:2004} \cite{LuoQZhang:2005}. 
The case $f=f_{WYD(\beta)}$, $\beta \in (0,\frac{1}{2})$ was proved by
Kosaki in \cite{Kosaki:2005} and shortly after by
Yanagi-Furuichi-Kuriyama in \cite{YanagiFuruichiKuriyama:2005}.  The
general case is due to Gibilisco, Imparato and Isola (see
\cite{GibiliscoIsola:2007} \cite{GibiliscoImparatoIsola:2007}).

Conjectures \ref{equality} and \ref{monot} were proved true by Kosaki
for the particular case $f=f_{WYD(\beta)}$.  The general case was
solved by Gibilisco, Imparato and Isola (see
\cite{GibiliscoIsola:2007} \cite{GibiliscoImparatoIsola:2007}).

First of all, let us show that the ideas used by Hansen in the case
$N=1$ do not apply to the case $N=2$. The problem is the lack of
concavity (and  convexity) for the generalized variance. We were not
able to find a counterexample in the literature, so we provide here
the simplest we found.

Let $\Omega:=\{1,2,...,n\}$. The space of (faithful) probability
measures on $\Omega$ is
$$
{\cal P}_n^1 :=\big\{ \rho \in {\mathbb R}^n \vert \sum \rho_i=1, \,
\rho_i > 0 \big\}.
$$
Let $X,Y \in {\mathbb R}^n$ be fixed random variables on $\Omega$.

\begin{proposition}
    The function $S:{\cal P}_n^1 \to {\mathbb R}$ given by
    \[
    S(\rho):=\Var_\rho(X)\Var_\rho(Y)-\Cov_\rho(X,Y)^2
    \] 
    is neither a concave nor a convex function.
\end{proposition}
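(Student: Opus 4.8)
The plan is to produce an explicit small instance in which the generalized variance reduces to a completely transparent polynomial. First I observe that the two-point case $n=2$ is genuinely degenerate: on $\{1,2\}$ every random variable is affine, so the centered versions of $X$ and $Y$ are both multiples of $(1,-1)$, hence proportional, and $S\equiv 0$, which is simultaneously concave and convex. The smallest interesting case therefore lives on $\Omega=\{1,2,3\}$, and I would work there with the convenient choice $X=(1,0,0)$, $Y=(0,1,0)$, the indicators of the first two points. Since $X^2=X$, $Y^2=Y$ and $XY=0$ pointwise, a one-line computation gives $\Var_\rho(X)=\rho_1(1-\rho_1)$, $\Var_\rho(Y)=\rho_2(1-\rho_2)$ and $\Cov_\rho(X,Y)=-\rho_1\rho_2$. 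Substituting into $S$ and using $1-\rho_1-\rho_2=\rho_3$ collapses everything to the symmetric monomial
$$
S(\rho)=\rho_1\rho_2\rho_3 .
$$
This reduction is the only real idea in the argument; everything after it is elementary.

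It then remains to check that $\rho\mapsto\rho_1\rho_2\rho_3$ is neither concave nor convex on the open simplex $\cp_3^1$, which I would do purely with midpoint comparisons, one chord defeating convexity and one defeating concavity. For convexity, take the endpoints $(\tfrac12,\tfrac14,\tfrac14)$ and $(\tfrac14,\tfrac14,\tfrac12)$, each with $S=\tfrac8{256}$, whose midpoint $(\tfrac38,\tfrac14,\tfrac38)$ has $S=\tfrac9{256}$; the midpoint value strictly exceeds the average of the endpoints, so $S$ is not convex. For concavity, take $(\tfrac3{20},\tfrac3{20},\tfrac7{10})$ and $(\tfrac1{20},\tfrac1{20},\tfrac9{10})$, with $S$-values $\tfrac{63}{4000}$ and $\tfrac9{4000}$, whose midpoint $(\tfrac1{10},\tfrac1{10},\tfrac8{10})$ has $S=\tfrac{32}{4000}$, strictly below the average $\tfrac{36}{4000}$, so $S$ is not concave. (Equivalently, writing $\rho=(p,q,1-p-q)$ one may compute the Hessian of $pq(1-p-q)$ and note that at, say, $p=q=\tfrac1{10}$ its determinant equals $4pq-(1-2p-2q)^2<0$, so the Hessian is indefinite at that single interior point, which already forbids both properties.)

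There is no serious obstacle here: the statement is an existence claim, so the whole difficulty is organizational, namely locating a pair $X,Y$ for which $S$ is simple enough that the two chord inequalities can be read off by hand. The choice above makes $S$ the elementary symmetric polynomial $\rho_1\rho_2\rho_3$, whose graph is mountain-shaped — concave near the barycenter but convex along directions pointing toward an edge — so both failures appear at once. The only point requiring any care is remembering that a naive choice of $X,Y$ (for instance perfectly correlated ones) can make $S\equiv0$, which is why the degenerate case $n=2$ must be excluded and a genuinely three-dimensional example used.
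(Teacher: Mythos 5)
Your proof is correct, and it takes a genuinely different route from the paper's. The paper computes the full Hessian $H^{XY}(\rho)$ of $S$ for arbitrary $X,Y$, then specializes to $n=3$, the uniform distribution, $X=(1,0,-1)^T$, $Y=(1,-2,1)^T$ (chosen so that all means and $\Cov_\rho(X,Y)$ vanish, collapsing the Hessian), and exhibits test vectors giving $\alpha^T H\alpha>0$ (namely $\alpha=\rho$) and $\alpha^T H\alpha<0$ (namely $\alpha=(0,\alpha_2,0)$). You instead pick the indicators $X=(1,0,0)$, $Y=(0,1,0)$, for which $S$ collapses to the monomial $\rho_1\rho_2\rho_3$ via $\rho_1(1-\rho_1)\rho_2(1-\rho_2)-\rho_1^2\rho_2^2=\rho_1\rho_2(1-\rho_1-\rho_2)$, and refute convexity and concavity by two explicit midpoint comparisons; I verified the reduction and the arithmetic ($\tfrac{9}{256}>\tfrac{8}{256}$ against convexity, $\tfrac{32}{4000}<\tfrac{36}{4000}$ against concavity), and both are right, as is the parenthetical indefinite-Hessian check in the chart $\rho=(p,q,1-p-q)$ with determinant $4pq-(1-2p-2q)^2<0$ at $p=q=\tfrac1{10}$. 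Your approach buys two things: transparency (everything is hand-checkable), and, more substantively, rigor with respect to the domain. Concavity or convexity of $S$ on ${\cal P}_n^1$ only concerns chords inside the simplex, i.e.\ directions $\alpha$ with $\sum_i\alpha_i=0$; your endpoints and midpoints all lie in ${\cal P}_3^1$, and your Hessian variant is computed intrinsically. By contrast, the paper's test vectors $\alpha=\rho$ and $\alpha=(0,\alpha_2,0)$ do not satisfy $\sum_i\alpha_i=0$, so strictly speaking the paper shows indefiniteness of the Hessian of the ambient extension of $S$ to $\mathbb{R}^n$, and a small repair (restricting to tangent directions) is needed there; your argument needs no such repair. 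What the paper's route buys in exchange is the general Hessian formula (valid for all $X,Y,n$), which is reusable beyond the single counterexample. Your observation that $n=2$ is degenerate (there $X_0,Y_0$ are proportional, so $S\equiv 0$ by the equality case of Cauchy--Schwarz) is also correct and explains why $n=3$ is the minimal setting.
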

\begin{proof}
    Let us compute the Hessian matrix $H^{XY}(\rho)$ of $S$ at the
    point $\rho$:
    \[
    \begin{array}{rcl}
	H_{ij}^{XY}(\rho)&=&\Var_\rho(Y)
	\frac{\de^{2}}{\de\r_{i}\de\r_{j}} \Var_\rho(X)+
	\frac{\de}{\de\r_{i}} \Var_\rho(X) \frac{\de}{\de\r_{j}}
	\Var_\rho(Y)\\[12pt] &+&\Var_\rho(X)
	\frac{\de^{2}}{\de\r_{i}\de\r_{j}} \Var_\rho(Y) +
	\frac{\de}{\de\r_{j}} \Var_\rho(X) \frac{\de}{\de\r_{i}}
	\Var_\rho(Y)\\[12pt] &-& 2 \frac{\de}{\de\r_{i}}
	\Cov_\rho(X,Y) \frac{\de}{\de\r_{j}} \Cov_\rho(X,Y) -
	2\Cov_\rho(X,Y) \frac{\de^{2}}{\de\r_{i}\de\r_{j}}
	\Cov_\rho(X,Y).
    \end{array}
    \] 
    If $X=(x_1,....,x_n), Y=(y_1,...,y_n)$, an explicit computation
    shows that
    \[
    \begin{array}{rcl}
	\frac{\de}{\de\r_{i}} \Cov_\rho(X,Y)&=&x_iy_i-x_i\mathbb
	E_\rho[Y]-y_i\mathbb E_\rho[X] \\[12pt]
	\frac{\de^{2}}{\de\r_{i}\de\r_{j}}
	\Cov_\rho(X,Y)&=&-x_iy_j-y_ix_j,
    \end{array}
    \] 
    so that
    \begin{equation}
	\begin{array}{rcl}
	    H_{ij}^{XY}(\rho)&=&-2x_ix_j\Var_\rho(Y)+(x_i^2-2x_i\mathbb
	    E_\rho[X])(y_j^2-2y_j\mathbb
	    E_\rho[Y])\\[12pt]&-&2y_iy_j\Var_\rho(X)+(y_i^2-2y_i\mathbb
	    E_\rho[Y])(x_j^2-2x_j\mathbb
	    E_\rho[X])\\[12pt]&-&2(x_iy_i-x_i\mathbb
	    E_\rho[Y]-y_i\mathbb E_\rho[X])(x_jy_j-x_j\mathbb
	    E_\rho[Y]-y_j\mathbb E_\rho[X])\\[12pt]&+&2\Cov_\rho(X,Y)
	    (x_iy_j+y_ix_j).
	\end{array}\label{h}
    \end{equation}

    In order to prove that in general $H_{ij}^{XY}(\rho)$ is neither
    negative semidefinite nor positive semidefinite (that is,
    $S(\rho)$ is neither concave nor convex) let $n=3$ and
    $\rho=(\frac{1}{3},\frac{1}{3},\frac{1}{3})$ be the uniform
    distribution, $X=(1,0,-1)^T$ and $Y=(1,-2,1)^T$.  Then $\mathbb
    E_\rho[X]=\mathbb E_\rho[Y]=\mathbb E_\rho[XY]=0$ and
    $\Cov_\rho(X,Y)=0$, so that (\ref{h}) reduces to
    \[
    H_{ij}^{XY}(\rho) = -2x_ix_j\Var_\rho(Y)+x_i^2y_j^2-2y_iy_j
    \Var_\rho(X) + x_j^2y_i^2-2x_ix_jy_iy_j.
    \] 
    Hence, given $\alpha\in\mathbb R^3$
    \[
    \alpha^TH\alpha = -2\left[\Var_\rho(Y)
    \left(\sum_ix_i\alpha_i\right)^2 -
    \sum_ix_i^2\alpha_i\sum_iy_i^2\alpha_i + \Var_\rho(X)
    \left(\sum_iy_i\alpha_i\right)^2 +
    \left(\sum_ix_iy_i\alpha_i\right)^2\right].
    \] 
    In particular, $\alpha=\rho$ implies
    \[\rho^TH\rho=2\Var_\rho(X)\Var_\rho(Y)> 0,\] while
    $\alpha=(0,\alpha_2,0)$, $\alpha_2\neq 0$, implies
    $\alpha^TH\alpha<0$.
\end{proof}

Now we describe how the ideas for the proof of the length inequality $(N=1)$
can be modified to apply to the case of the area inequality ($N=2$).

\begin{definition}
    For any $f \in {\cal F}_{op}^{\, r}$ set
    $$
    H^{ f}(x,y,w,z):=\frac12(x+y)m_{\tilde
    f}(w,z)+\frac12(w+z)m_{\tilde f}(x,y)-m_{\tilde f}(x,y)m_{\tilde
    f}(w,z), \qquad \qquad x,y,w,z > 0.
    $$
    Given $\rho\in\mathcal D_n^1$ and $\{\lambda_i\}$, $i=1\ldots, n$,
    the corresponding eigenvalues, we set
    \[H_{ijkl}^f:=H^f(\lambda_i,\lambda_j,\lambda_k,\lambda_l).\]
\end{definition}

\begin{proposition} \label{monarea}{\rm \cite{GibiliscoImparatoIsola:2007}}

    For any $f,g \in {\cal F}_{op}^{\, r}$ and for any $x,y,w,z >0$ one has:
    $$
    \tilde{f} \leq \tilde{g} \qquad \Longrightarrow \qquad 0\leq
    H^{f}(x,y,w,z) \leq H^g(x,y,w,z).
    $$
\end{proposition}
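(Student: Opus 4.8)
The plan is to reduce the whole statement to two scalar facts about the mean $m_{\tilde f}$: that it never exceeds the arithmetic mean, and that it is monotone in $f$ through $\tilde f$. Abbreviate $p := \tfrac{1}{2}(x+y)$, $q := \tfrac{1}{2}(w+z)$, $u := m_{\tilde f}(x,y)$ and $v := m_{\tilde f}(w,z)$, so that $H^f(x,y,w,z) = p\,v + q\,u - u\,v$. Since $\tilde f \in \mathcal{F}_{op}^{\,n} \subset \mathcal{F}_{op}$, Corollary~\ref{basic} applied with argument $y/x$ (resp.\ $z/w$) yields the arithmetic-mean bounds $0 < u \le p$ and $0 < v \le q$, the strict lower bounds coming from the harmonic-mean side of the same corollary.

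First I would establish nonnegativity. Rewriting $H^f = (p-u)\,v + q\,u$, both summands are nonnegative because $p - u \ge 0$ while $q,u,v > 0$; hence $H^f \ge 0$ for every $f \in \mathcal{F}_{op}^{\,r}$. Note that it is precisely the arithmetic-mean upper bound that makes the first summand nonnegative, i.e.\ that prevents the cross term $-uv$ from overwhelming the two linear terms.

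For the monotonicity $H^f \le H^g$, the key observation is that, viewed as a function of the pair $(u,v)$, the expression $\Phi(u,v) := p\,v + q\,u - u\,v$ is \emph{affine in each variable separately}: as a function of $u$ its slope is $q - v$, and as a function of $v$ its slope is $p - u$. On the rectangle $(0,p]\times(0,q]$ both slopes are nonnegative, so $\Phi$ is coordinatewise nondecreasing there. Now write $u_f,v_f$ (resp.\ $u_g,v_g$) for the means built from $\tilde f$ (resp.\ $\tilde g$). By Proposition~\ref{min} the hypothesis $\tilde f \le \tilde g$ gives $m_{\tilde f} \le m_{\tilde g}$, whence $u_f \le u_g$ and $v_f \le v_g$; and by Corollary~\ref{basic} both points $(u_f,v_f)$ and $(u_g,v_g)$ lie in the rectangle. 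Stepping one coordinate at a time,
\[
H^f = \Phi(u_f,v_f) \le \Phi(u_g,v_f) \le \Phi(u_g,v_g) = H^g,
\]
where the first inequality uses the frozen value $v = v_f \le q$ (so the $u$-slope $q - v_f \ge 0$) and the second uses $u = u_g \le p$ (so the $v$-slope $p - u_g \ge 0$).

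There is no serious obstacle here; the entire argument rests on the single structural fact that the arithmetic-mean upper bound of Corollary~\ref{basic} does double duty, both forcing $H^f \ge 0$ and confining the two mean-pairs to the region where the bilinear form $\Phi$ is coordinatewise monotone. The only point demanding care is the sign bookkeeping in the two-step comparison: one must freeze the variable whose \emph{counterpart's} arithmetic bound controls the relevant slope, which is why the first step holds $v$ fixed and the second holds $u$ fixed rather than the reverse.
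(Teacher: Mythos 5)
Your proof is correct, but it takes a different route from the one the paper relies on. The paper does not reprove Proposition~\ref{monarea} (it cites \cite{GibiliscoImparatoIsola:2007}); the template it uses for the three-variable analogue (Propositions~\ref{vol} and~\ref{mon}) is to observe that, in your notation, $H^f = pq - (p-u)(q-v)$, substitute the closed-form identity $\frac{x+y}{2}-m_{\tilde f}(x,y)=\frac{(x-y)^2}{2y}\,\frac{f(0)}{f(x/y)}$, and then invoke condition (3) of Proposition~\ref{min}, i.e.\ $\tilde f\le\tilde g \Rightarrow f(0)/f(t)\ge g(0)/g(t)$: monotonicity in $f$ is read off from the sign of the subtracted product, and nonnegativity from $0\le p-u\le p$, $0\le q-v\le q$. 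You instead invoke condition (2) of Proposition~\ref{min}, $m_{\tilde f}\le m_{\tilde g}$, and exploit the coordinatewise monotonicity of the bilinear map $\Phi(u,v)=pv+qu-uv$ on the box $(0,p]\times(0,q]$, with the arithmetic-mean half of Corollary~\ref{basic} confining both mean-pairs to that box. Your argument is more structural and avoids the explicit $f(0)/f$ identity altogether; it also scales transparently, since the paper's six-variable $H^f$ is exactly $pqr-(p-u)(q-v)(r-w)$ and the same coordinatewise-monotonicity argument on a box handles any number of factors. The paper's computation, by contrast, yields an explicit closed formula for $H^f$, which is what makes the quantitative comparison via $f(0)/f$ immediate and is reused verbatim in the $N=3$ proof. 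One small quibble: your closing claim that the order of the two freezing steps matters is not right --- since \emph{all four} points $(u_f,v_f)$, $(u_g,v_f)$, $(u_f,v_g)$, $(u_g,v_g)$ lie in the rectangle, both slopes are nonnegative wherever needed and either order of the chain works; this is a stylistic overstatement, not a gap.
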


Using the same notations as in Proposition \ref{?}, one can give the
following definition.

\begin{definition}
    Set
    $$
    K_{i,j,k,l}:= K_{i,j,k,l}(\rho,A,B):= |a_{ij} |^2 |b_{kl}|^2 +
    |a_{kl} |^2 |b_{ij}|^2 - 2{\rm Re} \{ a_{ij} b_{ji} \} {\rm Re} \{
    a_{kl} b_{lk} \} .
    $$
\end{definition}

Note that $K_{i,j,k,l}$ does not depend on $f$.  Since
$$
|a_{ij} |^2 |b_{kl}|^2 + |a_{kl} |^2 |b_{ij}|^2 \ge 2\left| {a_{ij}
b_{ji} } \right|\left| {a_{kl} b_{lk} } \right| \ge 2\left|
{{\mathop{\rm Re}\nolimits} \left\{ {a_{ij} b_{ji} }
\right\}{\mathop{\rm Re}\nolimits} \left\{ {a_{kl} b_{lk} } \right\}}
\right|,
$$
we get that $K_{ijkl}$ is non-negative. Moreover one has the following result.

\begin{proposition} {\rm \cite{GibiliscoImparatoIsola:2007}} \label{K=0}
    $K_{i,j,k,l} = 0, \quad \forall i,j,k,l$ $\Longleftrightarrow$
    $A_0,B_0$ are linearly dependent.
\end{proposition}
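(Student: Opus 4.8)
The plan is to prove both implications at once by summing the quantities $K_{i,j,k,l}$ — which have already been shown to be nonnegative — over all indices, and recognizing the total as the Cauchy--Schwarz defect for $A_0,B_0$ in the Hilbert--Schmidt inner product. First I would record the bookkeeping that makes the sum collapse: since $A_0$ and $B_0$ are self-adjoint, the matrices $a=(a_{ij})$ and $b=(b_{ij})$ of their entries in the eigenbasis of $\rho$ are Hermitian, so $b_{ji}=\overline{b_{ij}}$ and hence $\re\{a_{ij}b_{ji}\}=\re\{a_{ij}\overline{b_{ij}}\}$. In particular $\sum_{i,j}a_{ij}b_{ji}=\Tr(A_0B_0)$ is a real number, being the trace of a product of two self-adjoint matrices.

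With this in hand, I would compute the full sum, using that the three contributions factor as products of single sums:
$$
\sum_{i,j,k,l}K_{i,j,k,l}
=2\Big(\sum_{i,j}|a_{ij}|^2\Big)\Big(\sum_{k,l}|b_{kl}|^2\Big)
-2\Big(\sum_{i,j}\re\{a_{ij}b_{ji}\}\Big)^{2}.
$$
Then I would identify the sums with Hilbert--Schmidt quantities: $\sum_{i,j}|a_{ij}|^2=\Tr(A_0^2)=\langle A_0,A_0\rangle$ and likewise $\sum_{k,l}|b_{kl}|^2=\langle B_0,B_0\rangle$, while, taking real parts term by term, $\sum_{i,j}\re\{a_{ij}b_{ji}\}=\re\{\Tr(A_0B_0)\}=\Tr(A_0B_0)=\langle A_0,B_0\rangle$. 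This yields the identity
$$
\sum_{i,j,k,l}K_{i,j,k,l}=2\big[\langle A_0,A_0\rangle\,\langle B_0,B_0\rangle-\langle A_0,B_0\rangle^{2}\big].
$$

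Finally I would invoke the Cauchy--Schwarz inequality in the real inner-product space $(M_{n,sa},\langle\cdot,\cdot\rangle)$: the bracket on the right is $\geq 0$, and it vanishes if and only if $A_0$ and $B_0$ are linearly dependent. Since every $K_{i,j,k,l}\geq 0$, the hypothesis that $K_{i,j,k,l}=0$ for all $i,j,k,l$ is equivalent to the vanishing of their sum, which by the displayed identity is equivalent to equality in Cauchy--Schwarz, i.e.\ to the linear dependence of $A_0$ and $B_0$; conversely linear dependence forces the defect, hence each nonnegative $K_{i,j,k,l}$, to vanish. This settles the equivalence in one stroke. I do not expect a genuine obstacle here: the computation is routine, and the only points requiring care are the Hermiticity relations $b_{ji}=\overline{b_{ij}}$ and the reality of $\Tr(A_0B_0)$ — precisely what let the double sum condense into a perfect square matching the Cauchy--Schwarz cross term — together with the observation that ``linearly dependent'' is understood over $\mathbb{R}$, consistent with $M_{n,sa}$ being a real Hilbert space, so that the classical equality case applies verbatim.
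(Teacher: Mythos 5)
Your proof is correct. The bookkeeping checks out: in the eigenbasis of $\rho$ the matrices $a,b$ are Hermitian, so $b_{ji}=\overline{b_{ij}}$ and $\Tr(A_0B_0)$ is real; the $(i,j)$ and $(k,l)$ indices in $K_{i,j,k,l}$ decouple, so the fourfold sum indeed factors, giving
\[
\sum_{i,j,k,l}K_{i,j,k,l}
=2\left[\Tr(A_0^2)\,\Tr(B_0^2)-\bigl(\Tr(A_0B_0)\bigr)^{2}\right],
\]
twice the Cauchy--Schwarz defect in the real Hilbert space $(M_{n,sa},\langle\cdot,\cdot\rangle)$; combined with the pointwise bound $K_{i,j,k,l}\geq 0$, which the paper establishes immediately before the proposition and which you are entitled to quote, both implications follow at once from the equality case of Cauchy--Schwarz, with linear dependence correctly understood over $\mathbb{R}$.

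Note that the paper at hand does not actually reprove this proposition: it is imported from the reference \cite{GibiliscoImparatoIsola:2007}, where (as the surrounding development here suggests) the argument runs elementwise, through the equality cases of the two inequalities
$|a_{ij}|^2|b_{kl}|^2+|a_{kl}|^2|b_{ij}|^2\ge 2|a_{ij}b_{ji}||a_{kl}b_{lk}|\ge 2\,|\re\{a_{ij}b_{ji}\}\,\re\{a_{kl}b_{lk}\}|$
for suitable choices of indices, extracting entry-by-entry proportionality of $a$ and $b$. Your route is genuinely different: summing first and recognizing the Cauchy--Schwarz defect replaces the case analysis by a one-line identity, and it explains conceptually why linear dependence of $A_0,B_0$ is the right equality condition (it is exactly the classical equality case). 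What the elementwise route buys instead is local information --- it records precisely which constraint each individual vanishing $K_{i,j,k,l}$ imposes --- and that style of reasoning is what the authors actually need later, e.g.\ in Proposition \ref{penultima}, where the six-index quantities $K_{ijhklm}$ are weighted by $H^f_{ijhklm}$ and no clean global factorization of the sum is available. So your argument is a slicker proof of this particular proposition, but not a template for the $N=3$ analysis.
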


Recall that
$$
F(f)={\rm det} \{ {\rm Cov}_{\rho}(A_i,A_j) \} - {\rm det} \left\{
{\rm Cov}_{\rho}(A_i,A_j)- {\rm Tr}(m_{\tilde
f}(L_{\rho},R_{\rho})({A_i}_0){A_j}_0)\right\}.
$$

\begin{theorem} {\rm \cite{GibiliscoImparatoIsola:2007} }
    For $N=2$ one has: 
    \[
    F(f)=\dfrac1{2}\sum_{i,j,k,l}H_{i,j,k,l}^f \cdot K_{i,j,k,l}.
    \]
\end{theorem}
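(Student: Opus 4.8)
The plan is to reduce everything to the scalar identity obtained by expanding the two $2\times 2$ determinants that define $F(f)$ in Remark \ref{portmanteau} $(ii)$, then to substitute the eigenvalue sums of Proposition \ref{?} and to recognize the result as the announced contraction of $H^f$ against $K$.

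First I would fix $A_1=A$, $A_2=B$ and abbreviate
$$
C_{PQ}:={\rm Cov}_\rho(P,Q), \qquad M_{PQ}:={\rm Tr}(m_{\tilde f}(L_\rho,R_\rho)(P_0)Q_0), \qquad P,Q\in\{A,B\},
$$
so that the two matrices occurring in $F(f)$ are $G=\{C_{PQ}\}$ and $\tilde G=\{C_{PQ}-M_{PQ}\}$. Expanding ${\rm det}\,G-{\rm det}\,\tilde G$, cancelling the common term $C_{AA}C_{BB}-C_{AB}^2$, and using $C_{AB}=C_{BA}$, $M_{AB}=M_{BA}$, leaves the purely scalar identity
$$
F(f)=C_{AA}M_{BB}+M_{AA}C_{BB}-2C_{AB}M_{AB}-M_{AA}M_{BB}+M_{AB}^2 .
$$

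Next I would insert Proposition \ref{?}, which writes each entry as a single sum over a pair of eigenvalue indices,
$$
C_{PQ}=\sum_{ij}\frac12(\lambda_i+\lambda_j){\rm Re}\{p_{ij}q_{ji}\}, \qquad M_{PQ}=\sum_{ij}m_{\tilde f}(\lambda_i,\lambda_j){\rm Re}\{p_{ij}q_{ji}\},
$$
where $p=a$, $q=b$ according to $P,Q\in\{A,B\}$, and the diagonal cases use ${\rm Re}\{a_{ij}a_{ji}\}=|a_{ij}|^2$ since $a$ is Hermitian (likewise for $b$). Each of the five products in $F(f)$ then becomes a double sum; labelling the index pair of the first factor $(i,j)$ and that of the second $(k,l)$ produces nine quadruple-indexed terms.

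Finally I would match these nine terms to the definitions of $H^f_{ijkl}$ and $K_{ijkl}$. Both are invariant under the swap $(i,j)\leftrightarrow(k,l)$, and this symmetry is exactly the feature that produces the factor $\frac12$: the two ``mean$\,\times\,m_{\tilde f}$'' summands of $H^f$ reconstruct $C_{AA}M_{BB}+M_{AA}C_{BB}-2C_{AB}M_{AB}$ twice over, while the $-m_{\tilde f}m_{\tilde f}$ summand reconstructs $-M_{AA}M_{BB}+M_{AB}^2$ twice over, so that $\sum_{ijkl}H^f_{ijkl}K_{ijkl}=2F(f)$. I do not expect any genuine analytic obstacle here; the work is combinatorial bookkeeping, and the one point to handle with care is the sign and coefficient of the cross term $-2{\rm Re}\{a_{ij}b_{ji}\}{\rm Re}\{a_{kl}b_{lk}\}$ in $K_{ijkl}$, since it is precisely this term that converts the products $-2C_{AB}M_{AB}$ and $+M_{AB}^2$ into their correct contractions against the two pieces of $H^f$.
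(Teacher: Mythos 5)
Your proof is correct: the scalar expansion $F(f)=C_{AA}M_{BB}+M_{AA}C_{BB}-2C_{AB}M_{AB}-M_{AA}M_{BB}+M_{AB}^{2}$ follows from cancelling $\det G$ against the leading terms of $\det\tilde G$, and the swap symmetry $H^{f}_{ijkl}=H^{f}_{klij}$, $K_{ijkl}=K_{klij}$ does yield $\sum_{ijkl}H^{f}_{ijkl}K_{ijkl}=2F(f)$ exactly as you account for it. The paper itself states this theorem without proof (citing the companion paper), but your route is essentially the same as the one the authors use for the $N=3$ analogue in Lemma \ref{lem}: expand the determinant difference into covariance/metric products, insert the eigenvalue sums of Proposition \ref{?}, and symmetrize over index blocks to produce the combinatorial factor.
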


>From the above Theorem one gets the following result.

\begin{theorem}
    Conjectures \ref{main}, \ref{equality} and \ref{monot} are true
    for $N=2$.
\end{theorem}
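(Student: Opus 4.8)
The plan is to read everything off from the identity $F(f)=\tfrac12\sum_{i,j,k,l}H^f_{i,j,k,l}\cdot K_{i,j,k,l}$ of the preceding Theorem, combined with the reductions collected in Remark \ref{portmanteau}. Recall that Remark \ref{portmanteau}(ii) identifies Conjecture \ref{main} with the inequality $F(f)\geq 0$, that Remark \ref{portmanteau}(iii) identifies Conjecture \ref{monot} with the implication $\tilde f\leq\tilde g\Rightarrow F(f)\leq F(g)$, and that Remark \ref{portmanteau}(iv) already supplies the ``if'' direction of Conjecture \ref{equality} once Conjecture \ref{main} is known. So the whole statement reduces to sign and monotonicity properties of the summands $H^f_{i,j,k,l}\,K_{i,j,k,l}$.

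First I would dispose of Conjecture \ref{main}. Both factors in each summand are non-negative: $K_{i,j,k,l}\geq 0$ is the arithmetic--geometric estimate displayed just before Proposition \ref{K=0}, while $H^f_{i,j,k,l}\geq 0$ is Proposition \ref{monarea} applied with $g=f$ (the hypothesis $\tilde f\leq\tilde f$ being trivial). Hence every term of the sum is non-negative and $F(f)\geq 0$, which is Conjecture \ref{main}. Conjecture \ref{monot} then follows by the same bookkeeping: if $\tilde f\leq\tilde g$, Proposition \ref{monarea} gives $0\leq H^f_{i,j,k,l}\leq H^g_{i,j,k,l}$ for every quadruple of eigenvalues; multiplying by the non-negative numbers $K_{i,j,k,l}$ and summing yields $F(f)\leq F(g)$, which is exactly the reformulation of Conjecture \ref{monot} in Remark \ref{portmanteau}(iii).

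It remains to settle the ``only if'' part of Conjecture \ref{equality}, and this is the only step that needs more than the cited results. Suppose $F(f)=0$. Since a sum of the non-negative numbers $H^f_{i,j,k,l}K_{i,j,k,l}$ vanishes, each of them vanishes; to conclude that all $K_{i,j,k,l}=0$ I need the \emph{strict} positivity $H^f_{i,j,k,l}>0$, which Proposition \ref{monarea} does not assert. I would obtain it from the upper bound $m_{\tilde f}(x,y)\leq\tfrac12(x+y)$, i.e. the scalar case of Corollary \ref{basic} applied to $\tilde f$. Writing $a:=m_{\tilde f}(\lambda_i,\lambda_j)$, $b:=m_{\tilde f}(\lambda_k,\lambda_l)$, $s:=\tfrac12(\lambda_i+\lambda_j)$, $t:=\tfrac12(\lambda_k+\lambda_l)$, the inequality $a\leq s$ gives
\[
H^f_{i,j,k,l}=sb+ta-ab=b(s-a)+ta\geq ta>0,
\]
because $\rho$ is faithful, so all $\lambda_h>0$ and hence $a,b,s,t>0$. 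With $H^f_{i,j,k,l}>0$ in hand, $F(f)=0$ forces $K_{i,j,k,l}=0$ for every $i,j,k,l$, and Proposition \ref{K=0} yields that $A_0,B_0$ are linearly dependent. Together with the ``if'' direction from Remark \ref{portmanteau}(iv), this completes Conjecture \ref{equality}.

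Thus the argument is almost entirely absorbed into the structural identity for $F(f)$ and into Propositions \ref{monarea} and \ref{K=0}. I expect the only genuinely new ingredient to be the elementary strict-positivity estimate $H^f_{i,j,k,l}>0$ displayed above: it upgrades the weak inequality of Proposition \ref{monarea} and is precisely what makes the characterization of the equality case go through.
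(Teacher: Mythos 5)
Your proposal is correct and follows essentially the same route as the paper: the decomposition $F(f)=\frac12\sum_{i,j,k,l}H^f_{i,j,k,l}K_{i,j,k,l}$, the non-negativity of $K_{i,j,k,l}$, Proposition \ref{monarea} for Conjectures \ref{main} and \ref{monot} via Remark \ref{portmanteau}, and Proposition \ref{K=0} for the equality case. The one point where you go beyond the paper is your explicit estimate $H^f_{i,j,k,l}=b(s-a)+ta\geq ta>0$ (using $m_{\tilde f}\leq$ arithmetic mean and faithfulness of $\rho$), which correctly supplies the strict positivity that the paper's proof asserts without argument for $N=2$ (it proves the analogous fact only for $N=3$, in Proposition \ref{vol}).
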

\begin{proof}
    Since $H_{i,j,k,l} > 0$ and $K_{i,j,k,l} \geq 0$ we get $F(f) \geq
    0$ and therefore Conjecture \ref{main} is true.

    From Proposition \ref{K=0} we get that $F(f)=0$ iff $A_0,B_0$ are
    linearly dependent, that is, Conjecture \ref{equality} holds.

    From Proposition \ref{monarea} we get that $\tilde{f} \leq
    \tilde{g}$ implies $F(f) \leq F(g)$ and, therefore, one proves
    Conjecture \ref{monot}.
\end{proof}

\section{The main result: the volume inequality} \label{core}

In this section we study the case $N=3$ of Conjectures \ref{main},
\ref{equality} and \ref{monot}.  In the sequel we need the following
function.

\begin{definition}
    For any $f \in {\cal F}_{op}^{\, r}$, set
    $$
    \begin{array}{rcl}
	H^{ f}(x,y,h,k,w,z)&:=&\frac14(x+y)(h+k)m_{\tilde
	f}(w,z)+\frac14(w+z)(x+y)m_{\tilde f}(h,k)+
	\frac14(h+k)(w+z)m_{\tilde f}(x,y)\\[12pt]
	&&-\frac12(x+y)m_{\tilde f}(h,k)m_{\tilde f}(w,z)
	-\frac12(w+z)m_{\tilde f}(x,y)m_{\tilde
	f}(h,k)-\frac12(h+k)m_{\tilde f}(w,z)m_{\tilde f}(x,y)
	\\[12pt] &&+m_{\tilde f}(x,y)m_{\tilde f}(h,k)m_{\tilde
	f}(w,z),\ \ \ \ x,y,h,k,w,z>0.
    \end{array}
    $$
\end{definition}

\begin{proposition} \label{vol} 
    For any $f\in {\cal F}_{op}^{\, r}$ one has $H^f(x,y,h,k,w,z)> 0$.
\end{proposition}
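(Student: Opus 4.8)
The plan is to reduce the positivity of $H^f$ to a transparent algebraic identity combined with the elementary fact that every Kubo--Ando mean is dominated by the arithmetic mean. To this end I introduce shorthand for the three arithmetic means and the three $\tilde f$-means occurring in the definition,
$$a := \tfrac12(x+y),\quad b := \tfrac12(h+k),\quad c := \tfrac12(w+z),\qquad p := m_{\tilde f}(x,y),\quad q := m_{\tilde f}(h,k),\quad r := m_{\tilde f}(w,z).$$
With this notation the seven terms of the definition collapse to
$$H^f = abr + acq + bcp - aqr - bpr - cpq + pqr.$$

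The key observation, and the step I expect to be the crux of the whole argument, is that this expression factors as
$$H^f = abc - (a-p)(b-q)(c-r).$$
This is checked by expanding $(a-p)(b-q)(c-r)$ and cancelling the leading $abc$: the remaining seven monomials reproduce exactly the display above, with matching signs. Spotting this factorization is the only genuinely nonobvious point; it is the natural three-variable analogue of the two-variable identity $ac-(a-p)(c-r)$ that underlies the area case ($N=2$), and it is what lets the volume case be handled by the same philosophy.

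It then remains only to locate the three factors. For positive scalars $s,t$ the scalar mean is $m_{\tilde f}(s,t)=s\,\tilde f(t/s)$, so Corollary \ref{basic} applied to $\tilde f$ (which lies in $\mathcal F_{op}$, being an element of $\mathcal F_{op}^{\, n}$) yields the Kubo--Ando sandwich
$$0 < \frac{2st}{s+t} \leq m_{\tilde f}(s,t) \leq \frac{s+t}{2}.$$
Hence $0 < p \leq a$, $0 < q \leq b$ and $0 < r \leq c$, which gives $0 \leq a-p < a$, $0 \leq b-q < b$ and $0 \leq c-r < c$, the upper bounds being strict precisely because $p,q,r>0$.

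Finally I would conclude by a short case distinction on whether the product $(a-p)(b-q)(c-r)$ vanishes. If at least one factor is zero the product equals $0 < abc$; otherwise all three factors are strictly positive and each is strictly below the corresponding positive number $a$, $b$, $c$, so the product is again strictly below $abc$. In either case $(a-p)(b-q)(c-r) < abc$, whence $H^f = abc - (a-p)(b-q)(c-r) > 0$, as claimed. Beyond discovering the factorization, no serious obstacle remains: the required bounds on $m_{\tilde f}$ are already recorded in the excerpt.
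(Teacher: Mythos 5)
Your proof is correct, but it follows a different algebraic arrangement from the paper's own proof of this proposition. The paper regroups the seven terms into a sum of four manifestly nonnegative pieces --- in your shorthand
\[
H^f \;=\; ar\,(b-q) \;+\; cq\,(a-p) \;+\; bp\,(c-r) \;+\; pqr,
\]
and concludes using only the upper (arithmetic-mean) half of the Kubo--Ando sandwich, strict positivity being supplied by the lone product term $pqr>0$. Your factorization $H^f = abc-(a-p)(b-q)(c-r)$ is correct (expansion confirms it), but note that it is precisely the identity the paper itself derives one proposition later, in the proof of Proposition \ref{mon}, where it appears in the equivalent form $H^f=\frac18\bigl((x+y)(h+k)(w+z)-\frac{(x-y)^2(h-k)^2(w-z)^2}{ykz}\,\frac{f(0)}{f(x/y)}\,\frac{f(0)}{f(h/k)}\,\frac{f(0)}{f(w/z)}\bigr)$, via $\frac{x+y}{2}-m_{\tilde f}(x,y)=\frac{(x-y)^2}{2y}\frac{f(0)}{f(x/y)}$. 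So your route buys economy: a single identity proves positivity and simultaneously sets up the monotonicity statement of Proposition \ref{mon} through Proposition \ref{min}, whereas the paper uses two different rearrangements for the two propositions; the price is that you need both halves of the sandwich (the harmonic lower bound to get $p,q,r>0$, hence the strict inequalities $a-p<a$, etc.), while the paper's grouping for this proposition needs only the upper bound plus positivity of the means. Your appeal to Corollary \ref{basic} for $\tilde f$ is legitimate, since $\tilde f\in{\cal F}_{op}^{\,n}\subset{\cal F}_{op}$, and your closing case distinction on whether $(a-p)(b-q)(c-r)$ vanishes is sound.
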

\begin{proof}
    $$
    \begin{array}{rcl}
	H^{ f}(x,y,h,k,w,z)&=&\frac12(x+y)m_{\tilde
	f}(w,z)[\frac12(h+k)- m_{\tilde f}(h,k)] \\[12pt]
	&&+\frac12(w+z)m_{\tilde f}(h,k)[\frac12(x+y)-m_{\tilde
	f}(x,y)]\\[12pt]&& +\frac12(h+k)m_{\tilde f}(x,y)
	[\frac12(w+z) -m_{\tilde f}(w,z)]\\[12pt] &&+m_{\tilde
	f}(x,y)m_{\tilde f}(h,k)m_{\tilde f}(w,z)\\[12pt]&&> 0,
    \end{array}
    $$
    since, for any $f\in\mathcal F_{op}^{\, r}$, $m_{\tilde f}$ is
    smaller than the arithmetic mean.
\end{proof}

\begin{proposition}\label{mon} 
    For any $x,y,h,k,w,z>0$
    $$
    \tilde{f} \leq \tilde{g} \qquad \Longrightarrow \qquad
    H^{f}(x,y,h,k,w,z) \leq H^g(x,y,h,k,w,z).
    $$
\end{proposition}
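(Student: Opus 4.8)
The plan is to treat $H^f$ as a polynomial in the three scalar $\tilde f$-means, holding the three arithmetic means fixed, and to show it is non-decreasing in each of those arguments over the range that actually occurs. Write $a_1:=\tfrac12(x+y)$, $a_2:=\tfrac12(h+k)$, $a_3:=\tfrac12(w+z)$ for the arithmetic means and $M_1:=m_{\tilde f}(x,y)$, $M_2:=m_{\tilde f}(h,k)$, $M_3:=m_{\tilde f}(w,z)$. With this notation the definition of $H^f$ reads
\[
H^f = a_1a_2M_3 + a_1a_3M_2 + a_2a_3M_1 - a_1M_2M_3 - a_2M_1M_3 - a_3M_1M_2 + M_1M_2M_3,
\]
which is affine in each $M_i$ separately, i.e.\ multilinear in $(M_1,M_2,M_3)$. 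A direct differentiation, or equivalently the factorization already exploited in the proof of Proposition \ref{vol}, gives
\[
\frac{\partial H^f}{\partial M_1}=(a_2-M_2)(a_3-M_3),\qquad \frac{\partial H^f}{\partial M_2}=(a_1-M_1)(a_3-M_3),\qquad \frac{\partial H^f}{\partial M_3}=(a_1-M_1)(a_2-M_2).
\]

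Next I would record the two elementary facts that pin down the signs. First, since $\tilde f\in{\cal F}_{op}^{\,n}\subset{\cal F}_{op}$, Corollary \ref{basic} (equivalently the Kubo--Ando bound placing every mean below the arithmetic mean) gives $M_i\leq a_i$ for $i=1,2,3$, and the same holds with $g$ in place of $f$. Second, the hypothesis $\tilde f\leq\tilde g$ together with Proposition \ref{min} yields $m_{\tilde f}\leq m_{\tilde g}$ pointwise; writing $M_i':=m_{\tilde g}(\cdot,\cdot)$ for the corresponding means, this means $0<M_i\leq M_i'\leq a_i$ for each $i$.

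With these inequalities in hand the conclusion follows by interpolating one coordinate at a time. Because $H^f$ is affine in each $M_i$, telescoping through the three coordinates produces the exact identity
\[
H^g-H^f=(M_1'-M_1)(a_2-M_2')(a_3-M_3')+(a_1-M_1)(M_2'-M_2)(a_3-M_3')+(a_1-M_1)(a_2-M_2)(M_3'-M_3).
\]
Each summand is a product of three nonnegative factors, since $M_i'\geq M_i$ and $M_j,M_j'\leq a_j$; hence $H^g-H^f\geq0$, which is precisely the implication $\tilde f\leq\tilde g\Rightarrow H^f\leq H^g$.

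There is no deep analytic obstacle here: once $H^f$ is expressed in the variables $M_i$ the statement reduces to monotonicity-by-interpolation of a multilinear polynomial. The single point on which everything genuinely rests is that every $\tilde f$-mean lies below the corresponding arithmetic mean, i.e.\ $M_i\leq a_i$. This is exactly what forces each factor $(a_j-M_j)$ to be nonnegative, and hence makes every partial derivative and every telescoped term sign-definite; without it the cross terms of $H^f$ would have no definite sign and the monotonicity would break down.
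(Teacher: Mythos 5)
Your proof is correct, and it takes a genuinely different route from the paper's, even though both invoke Proposition \ref{min}. The paper starts from the same grouping of $H^f$ used in Proposition \ref{vol}, substitutes the explicit defect formula $\frac{x+y}{2}-m_{\tilde f}(x,y)=\frac{(x-y)^2}{2y}\,\frac{f(0)}{f(x/y)}$, and collapses everything to the closed form
$$
H^{f}=\frac18\left((x+y)(h+k)(w+z)-\frac{(x-y)^2(h-k)^2(w-z)^2}{ykz}\,\frac{f(0)}{f(\frac{x}{y})}\frac{f(0)}{f(\frac{h}{k})}\frac{f(0)}{f(\frac{w}{z})}\right),
$$
after which the conclusion follows from condition $(3)$ of Proposition \ref{min}, namely $f(0)/f(t)\geq g(0)/g(t)$. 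You instead keep the $\tilde f$-means abstract, exploit the multilinearity of $H^f$ in $(M_1,M_2,M_3)$ --- equivalently the identity $H^f=a_1a_2a_3-\prod_i(a_i-M_i)$, which is exactly the paper's closed form in disguise --- and compare via a telescoping identity whose three terms are sign-definite because $0\leq M_i\leq M_i'\leq a_i$; for this you use condition $(2)$ of Proposition \ref{min} ($m_{\tilde f}\leq m_{\tilde g}$) together with the arithmetic-mean bound of Corollary \ref{basic}, both legitimately available since $\tilde f,\tilde g\in\mathcal{F}_{op}$. I verified the details: the partial derivatives $(a_2-M_2)(a_3-M_3)$, etc., are right, and your telescoped sum expands exactly to $\prod_i(a_i-M_i)-\prod_i(a_i-M_i')=H^g-H^f$. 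What each approach buys: the paper's explicit formula quantifies the gap in terms of the ratios $f(0)/f$ and makes the role of Proposition \ref{min}$(3)$ transparent; your argument uses only the two qualitative mean inequalities, so it applies verbatim to any means dominated by the arithmetic mean and scales mechanically to the $N$-fold analogue of $H^f$ (telescoping through $N$ coordinates), where the paper's explicit expansion would become unwieldy. As a small bonus, your normal form re-proves Proposition \ref{vol} at a glance, since $0\leq a_i-M_i<a_i$ (strict on the right because $M_i>0$) gives $H^f=a_1a_2a_3-\prod_i(a_i-M_i)>0$.
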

\begin{proof}
    Since
    $$
    \frac{x+y}2-m_{\tilde f}(x,y)= \frac{(x-y)^2}{2y} \cdot
    \frac{f(0)}{f(\frac{x}{y}) },
    $$
    we have
    \[
    \begin{array}{rcl}
	H^{ f}(x,y,h,k,w,z)&=&\frac12(x+y)m_{\tilde
	f}(w,z)[\frac12(h+k)- m_{\tilde f}(h,k)]
	\\[12pt]&&+\frac12(w+z)m_{\tilde
	f}(h,k)[\frac12(x+y)-m_{\tilde f}(x,y)]\\[12pt]&&
	+\frac12(h+k)m_{\tilde f}(x,y) [\frac12(w+z) -m_{\tilde
	f}(w,z)]\\[12pt] &&+m_{\tilde f}(x,y)m_{\tilde
	f}(h,k)m_{\tilde f}(w,z)
	\\[12pt]&=&\displaystyle\frac12(x+y)\left(\dfrac{w+z}2-\dfrac{(w-z)^2}{2z}
	\dfrac{f(0)}{f(\frac{w}{z})}\right)\dfrac{(h-k)^2}{2k}\dfrac{f(0)}{f(\frac{h}{k})}
	\\[12pt]&&+\displaystyle\frac12(w+z)\left(\dfrac{h+k}2-\dfrac{(h-k)^2}{2k}
	\dfrac{f(0)}{f(\frac{h}{k})}\right)\dfrac{(x-y)^2}{2y}\dfrac{f(0)}{f(\frac{x}{y})}
	\\[12pt]&&+\displaystyle\frac12(h+k)\left(\dfrac{x+y}2-\dfrac{(x-y)^2}{2y}
	\dfrac{f(0)}{f(\frac{x}{y})}\right)\dfrac{(w-z)^2}{2z}\dfrac{f(0)}{f(\frac{w}{z})}\\[12pt]
	&&+\left(\dfrac{x+y}2-\dfrac{(x-y)^2}{2y}
	\dfrac{f(0)}{f(\frac{x}{y})}\right)\left(\dfrac{h+k}2-\dfrac{(h-k)^2}{2k}
	\dfrac{f(0)}{f(\frac{h}{k})}\right)\left(\dfrac{w+z}2-\dfrac{(w-z)^2}{2z}
	\dfrac{f(0)}{f(\frac{w}{z})}\right)\\[12pt]&=&
	\dfrac18\left((x+y)(h+k)(w+z)-\dfrac{(x-y)^2(h-k)^2(w-z)^2}{ykz}\dfrac{f(0)}{f(\frac{x}{y})}\dfrac{f(0)}{f(\frac{h}{k})}
	\dfrac{f(0)}{f(\frac{w}{z})}\right).
    \end{array}
    \]

    Proposition \ref{min} states that
    $$
    \tilde{f} \leq \tilde{g} \Rightarrow \frac{f(0)}{f(t)} \geq
    \frac{g(0)}{g(t)}, \qquad \qquad \forall t >0;
    $$
    hence, we obtain
    $$
    H^{ f}(x,y,h,k,w,z) \leq H^g(x,y,h,k,w,z), \qquad \qquad \forall
    x,y,h,k,w,z >0
    $$
    by elementary computations.
\end{proof}

Now we use again the definitions given in Proposition \ref{?}.  Let
$\left\{\varphi_i\right\}$ be a complete orthonormal base composed of
eigenvectors of $\rho$, and $\{ {\lambda}_i \}$ the corresponding
eigenvalues.  For any $A,B,C\in\mathbb M_{n,sa}$, set $a_{ij} \equiv
\langle {A_0} {\varphi}_i |{\varphi}_j \rangle $, $ b_{ij} \equiv
\langle B_0 \varphi_i | {\varphi_j } \rangle $ and $ c_{ij} \equiv
\langle C_0 \varphi_i | {\varphi_j } \rangle $.

In order to state the following result, fix $i,j,h,k,l,m \in
\{1,...,n\}$.  Define
$$
\alpha(1)=i, \quad \alpha(2)=h, \quad \alpha(3)=l;
$$
$$
\beta(1)=j, \quad \beta(2)=k, \quad \beta(3)=m.
$$ 

Let $S_3$ be the symmetric group on 3 elements and let $A_3\subset
S_3$ be the alternating group.  If $\sigma \in S_3$ and $i=1,2,3$
define $\sigma_i:=\sigma(i)$.

\begin{definition}
    \begin{equation}
	\begin{array}{rcl}
	    H_{ijhklm}^f&:=&\displaystyle
	    H^f(\lambda_i,\lambda_j,\lambda_h,\lambda_k,\lambda_l,\lambda_m),\\
	    K_{ijhklm}&:=&\displaystyle\sum_{\sigma\in S_3} \bigg{[}
	    |a_{\alpha(\sigma_1)\beta(\sigma_1)}|^2|
	    b_{\alpha(\sigma_2)\beta(\sigma_2)}|^2|c_{\alpha(\sigma_3)\beta(\sigma_3)}|^2\\[12pt]&&
	    +2{\rm Re} \{ a_{\alpha(\sigma_1)\beta(\sigma_1) }
	    c_{\beta(\sigma_1)\alpha(\sigma_1)} \} {\rm Re} \{
	    a_{\alpha(\sigma_2)\beta(\sigma_2)}
	    b_{\beta(\sigma_2)\alpha(\sigma_2)} \}{\rm Re} \{
	    b_{\alpha(\sigma_3)\beta(\sigma_3)}
	    c_{\beta(\sigma_3)\alpha(\sigma_3)} \} \bigg{]}
	    \\[12pt]&&-2 \displaystyle\sum_{\sigma\in A_3} \bigg{[}
	    |a_{\alpha(\sigma_1)\beta(\sigma_1)}|^2{\rm Re} \{
	    b_{\alpha(\sigma_2)\beta(\sigma_2)}
	    c_{\beta(\sigma_2)\alpha(\sigma_2)} \} {\rm Re} \{
	    b_{\alpha(\sigma_3)\beta(\sigma_3)}
	    c_{\beta(\sigma_3)\alpha(\sigma_3)}\}\\[12pt]&& +
	    |b_{\alpha(\sigma_1)\beta(\sigma_1)}|^2{\rm Re} \{
	    c_{\alpha(\sigma_2)\beta(\sigma_2)}
	    a_{\beta(\sigma_2)\alpha(\sigma_2)} \} {\rm Re} \{
	    c_{\alpha(\sigma_3)\beta(\sigma_3)}
	    a_{\beta(\sigma_3)\alpha(\sigma_3)}\}\\[12pt]&&+
	    |c_{\alpha(\sigma_1)\beta(\sigma_1)}|^2{\rm Re} \{
	    a_{\alpha(\sigma_2)\beta(\sigma_2)}
	    b_{\beta(\sigma_2)\alpha(\sigma_2)} \} {\rm Re} \{
	    a_{\alpha(\sigma_3)\beta(\sigma_3)}
	    b_{\beta(\sigma_3)\alpha(\sigma_3)}\} \bigg{]}.
	\end{array}\label{k}
    \end{equation}
\end{definition}

\begin{lemma} \label{lem}
    When $N=3$,
    \[
    F(f)=\dfrac16\sum_{i,j,h,k,l,m}H^f_{ijhklm}K_{ijhklm}.
    \]
\end{lemma}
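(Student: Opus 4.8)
The plan is to write $F(f)$ as a difference of two $3\times 3$ determinants, expand both by multilinearity, absorb the scalar coefficients into $H^f$, and recover $K_{ijhklm}$ by a symmetrization argument. By Remark~\ref{portmanteau}$(ii)$ and Theorem~\ref{!} one has $F(f)=\det(\mathbf C)-\det(\mathbf D)$, where $\mathbf C_{pq}={\rm Cov}_\rho(A_p,A_q)$ and $\mathbf D_{pq}=\frac{f(0)}2\langle i[\rho,A_p],i[\rho,A_q]\rangle_{\rho,f}=\mathbf C_{pq}-{\rm Tr}(m_{\tilde f}(L_\rho,R_\rho)({A_p}_0){A_q}_0)$, with $A_1=A,A_2=B,A_3=C$. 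Writing $x^{(1)}=a,x^{(2)}=b,x^{(3)}=c$ for the matrices associated to $A_0,B_0,C_0$ in Proposition~\ref{?}, that proposition presents both families of entries in the common form
$$\mathbf C_{pq}=\sum_{s,t}\tfrac12(\lambda_s+\lambda_t)\,{\rm Re}\{x^{(p)}_{st}x^{(q)}_{ts}\},\qquad \mathbf D_{pq}=\sum_{s,t}\big[\tfrac12(\lambda_s+\lambda_t)-m_{\tilde f}(\lambda_s,\lambda_t)\big]\,{\rm Re}\{x^{(p)}_{st}x^{(q)}_{ts}\},$$
i.e.\ the same geometric factor ${\rm Re}\{x^{(p)}_{st}x^{(q)}_{ts}\}$ weighted by the arithmetic mean, resp.\ the arithmetic mean minus $m_{\tilde f}$.

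Next I would expand each determinant by the Leibniz formula and use multilinearity in the rows to pull the double sum out of each entry. This rewrites each determinant as a triple sum over three eigenvalue pairs $(s_r,t_r)$, $r=1,2,3$, in which the product of the three scalar weights multiplies the common geometric factor
$$\Delta:=\sum_{\tau\in S_3}{\rm sgn}(\tau)\prod_{p=1}^3{\rm Re}\{x^{(p)}_{s_pt_p}x^{(\tau(p))}_{t_ps_p}\}.$$
Subtracting $\det(\mathbf D)$ from $\det(\mathbf C)$, the coefficient multiplying $\Delta$ becomes $\prod_{r}\frac12(\lambda_{s_r}+\lambda_{t_r})-\prod_r[\frac12(\lambda_{s_r}+\lambda_{t_r})-m_{\tilde f}(\lambda_{s_r},\lambda_{t_r})]$, which, after the very regrouping displayed in the proof of Proposition~\ref{vol}, is exactly $H^f(\lambda_{s_1},\lambda_{t_1},\lambda_{s_2},\lambda_{t_2},\lambda_{s_3},\lambda_{t_3})$. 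Relabelling the three pairs as $(i,j),(h,k),(l,m)$ yields $F(f)=\sum_{i,j,h,k,l,m}H^f_{ijhklm}\,\Delta_{ijhklm}$.

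It then remains to identify the symmetrized geometric factor with $\frac16K_{ijhklm}$. I would use that $H^f_{ijhklm}$ is a symmetric function of the three pairs and is invariant under the swap of the two indices of any one pair (since the arithmetic mean and $m_{\tilde f}$ are symmetric), while $\Delta$ is likewise unchanged by each within-pair swap because, using self-adjointness ($x^{(p)}_{ts}=\overline{x^{(p)}_{st}}$), the factor ${\rm Re}\{x^{(p)}_{st}x^{(q)}_{ts}\}$ is. Relabelling the summation pairs by a permutation $\pi\in S_3$ and invoking the symmetry of $H^f$ gives $\sum H^f_{ijhklm}\Delta=\sum H^f_{ijhklm}\Delta_\pi$ for every $\pi$; averaging over the six permutations of the three pairs therefore leaves $F(f)$ unchanged and produces the prefactor $\frac16$. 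Expanding each permuted $\Delta$ by Leibniz and sorting the monomials by the cycle type of $\tau$ should then reproduce $K_{ijhklm}$: the identity $\tau$ gives the $|a|^2|b|^2|c|^2$ terms summed over $S_3$; the two $3$-cycles (both even) give the triple-${\rm Re}$ terms with total coefficient $+2$ over $S_3$; and the three transpositions (odd) give the ``one-square, two-${\rm Re}$'' terms with coefficient $-2$ over $A_3$.

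The main obstacle is precisely this last matching. One must track, over all $36$ monomials arising from the six pair-permutations and the six Leibniz terms, which of $a,b,c$ and which eigenvalue pair occupies each slot, together with the sign ${\rm sgn}(\tau)$, and verify that after the $\frac16$-symmetrization they collapse exactly onto the $S_3$-sum of fully mixed triple-${\rm Re}$ terms and the $A_3$-sum of squared-times-double-${\rm Re}$ terms defining $K_{ijhklm}$ in \eqref{k}; in particular one must check that the multiplicities reproduce the precise coefficients $+2$ and $-2$ and the split between $S_3$ and $A_3$. Everything else (the determinant expansion, the weight identity, and the symmetry reduction) is routine, so this combinatorial reconciliation carries the genuine content of the lemma.
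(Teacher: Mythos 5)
Your proposal is correct and is essentially the paper's own proof: the authors' decomposition $F(f)=\xi+2\gamma-\eta_1-\eta_2-\eta_3$ is exactly your Leibniz expansion grouped by cycle type (identity, the two $3$-cycles, the three transpositions), and they apply the same representation from Proposition~\ref{?}, the same product-difference identity $\prod_r P_r-\prod_r(P_r-M_r)=H^f$ that you extract from the regrouping in Proposition~\ref{vol}, and the same $\tfrac16$-symmetrization over the three dummy eigenvalue pairs using the pair-symmetry of $H^f$. The combinatorial matching you flag as the remaining obstacle works out exactly as you predict (the squared terms over $S_3$, the triple-${\rm Re}$ terms with total coefficient $+2$, and the $-2$ sums over $A_3$), and is precisely what the paper verifies by writing out all the monomials longhand.
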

\begin{proof}
    From Proposition \ref{?}, for any $A,B\in M_{n,sa}$
    \begin{align*}
	\Var_{\rho}(A) &= {\rm Tr} ( \rho A_0^2) =
	\frac{1}{2}\sum_{i,j} ( \lambda _i + \lambda_j ) a_{ij} a_{ji}
	\\
	{\rm Cov}_{\rho} (A,B)&= {\rm Re} \{{\rm Tr} (\rho A_0 B_0)
	\}= \frac{1}{2} \sum_{i,j}({\lambda}_i+{\lambda}_j) {\rm Re}
	\{a_{ij}b_{ji} \} \\
        \frac{f(0)}{2}\langle i[\rho,A], i[\rho,B]\rangle_{\rho,f} &=
	\frac{1}{2} \sum_{i,j}({\lambda}_i+{\lambda}_j) {\rm Re} \{
	a_{ij} b_{ji} \} - \sum_{i,j} m_{\tilde
	f}(\lambda_i,\lambda_j) {\rm Re} \{ {a_{ij} b_{ji} } \} .
    \end{align*}
    Set
    \[
    \begin{array}{rcl}
	\xi &:=& \hbox{Var}_\rho \left( A \right) \hbox{Var}_\rho
	\left( B \right)\hbox{Var}_\rho\left(C\right) -
	\dfrac{f(0)^3}{8}||i[\rho,A]||^2_{\rho,f} \cdot
	||i[\rho,B]||^2_{\rho,f}\cdot ||i[\rho,C]||^2_{\rho,f}
	\\[12pt] &=&\dfrac{1}{2}\displaystyle\sum_{i,j,h,k,l,m}
	\left\{\frac12(\lambda_i+\lambda_j)(\lambda_h+\lambda_k)m_{\tilde
	f}(\lambda_l,\lambda_m)+\dfrac12(\lambda_l+\lambda_m)(\lambda_i+\lambda_j)
	m_{\tilde f}(\lambda_h,\lambda_k)\right.\\[12pt]
	&&+\displaystyle
	\dfrac12(\lambda_h+\lambda_k)(\lambda_l+\lambda_m)m_{\tilde
	f}(\lambda_i,\lambda_j) -(\lambda_i+\lambda_j)m_{\tilde
	f}(\lambda_h,\lambda_k)m_{\tilde f}(\lambda_l,\lambda_m)
	-(\lambda_l+\lambda_m)m_{\tilde
	f}(\lambda_i,\lambda_j)m_{\tilde
	f}(\lambda_h,\lambda_k)\\[12pt]
	&&-\displaystyle(\lambda_h+\lambda_k)m_{\tilde
	f}(\lambda_l,\lambda_m)m_{\tilde
	f}(\lambda_i,\lambda_j)+2m_{\tilde
	f}(\lambda_i,\lambda_j)m_{\tilde
	f}(\lambda_h,\lambda_k)m_{\tilde
	f}(\lambda_l,\lambda_m)\bigg\} a_{ij} a_{ji} b_{hk}
	b_{kh}c_{lm}c_{ml}\\[12pt]& = &\displaystyle\sum_{i,j,h,k,l,m}
	H_{ijhklm}^f|a_{ij}|^2|b_{hk}|^2|c_{lm}|^2
	\\[12pt]&=&\dfrac1{6}\displaystyle\sum_{i,j,h,k,l,m}
	H_{ijhklm}^f \big\{ |a_{ij}|^2|b_{hk}|^2|c_{lm}|^2 +
	|a_{hk}|^2|b_{ij}|^2|c_{lm}|^2 +
	|a_{ij}|^2|b_{lm}|^2|c_{hk}|^2\\[12pt] &&
	+|a_{lm}|^2|b_{hk}|^2|c_{ij}|^2 +
	|a_{hk}|^2|b_{lm}|^2|c_{ij}|^2 +
	|a_{lm}|^2|b_{ij}|^2|c_{hk}|^2\big\},
    \end{array}
    \]
    \[
    \begin{array}{rcl}
	\gamma &:=& {\rm Cov}_{\rho}(A,B){\rm Cov}_{\rho}(A,C){\rm
	Cov}_{\rho}(B,C)-\dfrac{f(0)^3}{8}\langle i[\rho,A],
	i[\rho,B]\rangle_{\rho,f}i[\rho,A],i[\rho,C]\rangle_{\rho,f}i[\rho,B],
	i[\rho,C]\rangle_{\rho,f} \\[12pt]
	&=&\dfrac{1}{2}\displaystyle\sum_{i,j,h,k,l,m}
	\bigg\{\frac12(\lambda_i+\lambda_j)(\lambda_h+\lambda_k)m_{\tilde
	f}(\lambda_l,\lambda_m)+\dfrac12(\lambda_l+\lambda_m)(\lambda_i+\lambda_j)m_{\tilde
	f}(\lambda_h,\lambda_k)\\[12pt]&&+\displaystyle
	\dfrac12(\lambda_h+\lambda_k)(\lambda_l+\lambda_m)m_{\tilde
	f}(\lambda_i,\lambda_j) -(\lambda_i+\lambda_j)m_{\tilde
	f}(\lambda_h,\lambda_k)m_{\tilde f}(\lambda_l,\lambda_m)
	-(\lambda_l+\lambda_m)m_{\tilde
	f}(\lambda_i,\lambda_j)m_{\tilde
	f}(\lambda_h,\lambda_k)\\[12pt]&&-\displaystyle(\lambda_h+\lambda_k)m_{\tilde
	f}(\lambda_l,\lambda_m)m_{\tilde
	f}(\lambda_i,\lambda_j)+2m_{\tilde
	f}(\lambda_i,\lambda_j)m_{\tilde
	f}(\lambda_h,\lambda_k)m_{\tilde
	f}(\lambda_l,\lambda_m)\bigg\} {\rm Re} \{ a_{ij} b_{ji} \}
	{\rm Re} \{ a_{hk} c_{kh} \}{\rm Re} \{ b_{lm} c_{ml} \}
	\\[12pt]&=&\displaystyle\sum_{i,j,h,k,l,m} H_{ijhklm}^f {\rm
	Re} \{ a_{ij} b_{ji} \} {\rm Re} \{ a_{hk} c_{kh} \}{\rm Re}
	\{ b_{lm} c_{ml} \}
	\\[12pt]&=&\dfrac1{6}\displaystyle\sum_{i,j,h,k,l,m}
	H_{ijhklm}^f\big\{{\rm Re} \{ a_{ij} b_{ji} \} {\rm Re} \{
	a_{hk} c_{kh} \}{\rm Re} \{ b_{lm} c_{ml} \} +{\rm Re} \{
	a_{ij} c_{ji} \} {\rm Re} \{ a_{hk} b_{kh} \}{\rm Re} \{
	b_{lm} c_{ml} \}\\[12pt]&&+ {\rm Re} \{ a_{ij} b_{ji} \} {\rm
	Re} \{ b_{hk} c_{kh} \}{\rm Re} \{ a_{lm} c_{ml} \} +{\rm Re}
	\{ a_{hk} b_{kh} \} {\rm Re} \{ a_{lm} c_{ml} \}{\rm Re} \{
	b_{ij} c_{ji}\}\\[12pt]&&+{\rm Re} \{ a_{ij} c_{ji} \} {\rm
	Re} \{ a_{lm} b_{ml} \}{\rm Re} \{ b_{hk} c_{kh} \} \}+{\rm
	Re} \{ a_{hk} c_{kh} \} {\rm Re} \{ a_{lm} b_{ml} \}{\rm Re}
	\{ b_{ij} c_{ji} \}\big\},
    \end{array} 
    \]
    \[
    \begin{array}{rcl}
	\eta_1 &:=& \hbox{Var}_\rho(A)|{\rm Cov}_{\rho}(B,C)|^2
	-\dfrac{f(0)^3}{8} ||i[\rho,A]||^2_{\rho,f} | \langle
	i[\rho,B], i[\rho,C]\rangle^2_{\rho,f}|^2 \\[12pt]
	&=&\dfrac{1}{2}\displaystyle\sum_{i,j,h,k,l,m}
	\bigg\{\frac12(\lambda_i+\lambda_j)(\lambda_h+\lambda_k)m_{\tilde
	f}(\lambda_l,\lambda_m)+\dfrac12(\lambda_l+\lambda_m)(\lambda_i+\lambda_j)m_{\tilde
	f}(\lambda_h,\lambda_k)\\[12pt]&&+\displaystyle
	\dfrac12(\lambda_h+\lambda_k)(\lambda_l+\lambda_m)m_{\tilde
	f}(\lambda_i,\lambda_j) -(\lambda_i+\lambda_j)m_{\tilde
	f}(\lambda_h,\lambda_k)m_{\tilde f}(\lambda_l,\lambda_m)
	-(\lambda_l+\lambda_m)m_{\tilde
	f}(\lambda_i,\lambda_j)m_{\tilde
	f}(\lambda_h,\lambda_k)\\[12pt]&&-\displaystyle(\lambda_h+\lambda_k)m_{\tilde
	f}(\lambda_l,\lambda_m)m_{\tilde
	f}(\lambda_i,\lambda_j)+2m_{\tilde
	f}(\lambda_i,\lambda_j)m_{\tilde
	f}(\lambda_h,\lambda_k)m_{\tilde
	f}(\lambda_l,\lambda_m)\bigg\} a_{ij}a_{ji}{\rm Re} \{ b_{hk}
	c_{kh} \} {\rm Re} \{ b_{lm} c_{ml} \}
	\\[12pt]&=&\displaystyle\sum_{i,j,h,k,l,m}
	H_{ijhklm}^f|a_{ij}|^2{\rm Re} \{ b_{hk} c_{hk} \} {\rm Re} \{
	b_{lm} c_{lm}\}
	\\[12pt]&=&\dfrac1{3}\displaystyle\sum_{i,j,h,k,l,m}H_{ijhklm}^f\big\{|a_{ij}|^2{\rm
	Re} \{ b_{hk} c_{kh} \} {\rm Re} \{ b_{lm} c_{ml}\}
	+|a_{hk}|^2{\rm Re} \{ b_{ij} c_{ji} \} {\rm Re} \{ b_{lm}
	c_{ml}\} \\[12pt]&&+|a_{lm}|^2{\rm Re} \{ b_{ij} c_{ji} \}
	{\rm Re} \{ b_{hk} c_{kh}\} \big\},
    \end{array}
    \]
    \[
    \begin{array}{rcl}
	\eta_2 &:=& \hbox{Var}_\rho(B)|{\rm Cov}_{\rho}(A,C)|^2
	-\dfrac{f(0)^3}{8} ||i[\rho,B]||^2_{\rho,f} | \langle
	i[\rho,A], i[\rho,C]\rangle^2_{\rho,f}|^2
	\\[12pt]&=&\dfrac1{3}\displaystyle\sum_{i,j,h,k,l,m}
	H_{ijhklm}^f\big\{|b_{ij}|^2{\rm Re} \{ a_{hk} c_{kh} \} {\rm
	Re} \{ a_{lm} c_{ml}\} +|b_{hk}|^2{\rm Re} \{ a_{ij} c_{ji} \}
	{\rm Re} \{ a_{lm} c_{ml}\} \\[12pt]&&+|b_{mn}|^2{\rm Re} \{
	a_{ij} c_{ji} \} {\rm Re} \{ a_{hk} c_{kh}
	\}\big\},
    \end{array}
    \]
    \[
    \begin{array}{rcl}
	\eta_3 &:=& \hbox{Var}_\rho(C)|{\rm Cov}_{\rho}(A,B)|^2
	-\dfrac{f(0)^3}{8} ||i[\rho,C]||^2_{\rho,f} | \langle
	i[\rho,A], i[\rho,B]\rangle^2_{\rho,f}|^2
	\\[12pt]&=&\dfrac1{3}\displaystyle\sum_{i,j,h,k,l,m}
	H_{ijhklm}^f\big\{|c_{ij}|^2{\rm Re} \{ a_{hk} b_{kh} \} {\rm
	Re} \{ a_{lm} b_{ml}\} +|c_{hk}|^2{\rm Re} \{ a_{ij} b_{ji} \}
	{\rm Re} \{ a_{lm} b_{ml}\} \\[12pt]&&+|c_{mn}|^2{\rm Re} \{
	a_{ij} b_{ji} \} {\rm Re} \{ a_{hk} b_{kh}
	\}\big\}.
    \end{array}
    \] 
    Then,
    \[
    \begin{array}{rcl}
	F(f)&=&\xi+2\gamma -\eta_1-\eta_2-\eta_3\\[12pt]&=&\dfrac16
	\displaystyle
	\sum_{i,j,k,l,m,n}H_{ijhklm}^fK_{ijhklm},
    \end{array}
    \] 
    where $K_{ijhklm}$ was defined in (\ref{k}).
\end{proof}

\begin{corollary}
    If $K_{ijhklm}\ge 0$ for any $i,j,h,k,l,m$, then
    \begin{description}
	\item[(i)]
	\begin{equation}
	    {\rm Vol}_{\rho}^{{\rm Cov}}(A,B,C) \geq \left(
	    \frac{f(0)}{2}\right)^{\frac{3}{2}}{\rm
	    Vol}_{\rho}^f(i[\rho,A], i[\rho,B],i[\rho,C]),\qquad \quad
	    \forall f \in {\cal F}_{op};\label{eq1}
	\end{equation}
	\item[(ii)] 
	\begin{equation}\tilde{f} \leq \tilde{g} \quad
	    \Longrightarrow \quad V(f) \geq V(g) \geq
	    0.\label{eq2}
	\end{equation}
    \end{description}\label{cor}
\end{corollary}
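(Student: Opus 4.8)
The plan is to read both statements off directly from the decomposition in Lemma \ref{lem}, feeding in the positivity of $H^f$ from Proposition \ref{vol} and its monotonicity from Proposition \ref{mon}; the hypothesis $K_{ijhklm}\ge 0$ is exactly the ingredient that makes each summand sign-definite, so no further analysis is required inside the corollary itself.

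For part (i), I would start from Lemma \ref{lem}, which gives $F(f)=\frac16\sum_{i,j,h,k,l,m}H^f_{ijhklm}K_{ijhklm}$. By Proposition \ref{vol} every factor $H^f_{ijhklm}=H^f(\lambda_i,\lambda_j,\lambda_h,\lambda_k,\lambda_l,\lambda_m)$ is strictly positive, and by hypothesis every $K_{ijhklm}$ is nonnegative, so each summand is nonnegative and hence $F(f)\ge 0$. I then invoke Remark \ref{portmanteau}$(ii)$, which identifies $F(f)$ with $\det\{{\rm Cov}_\rho(A_h,A_j)\}-\det\{\frac{f(0)}{2}\langle i[\rho,A_h],i[\rho,A_j]\rangle_{\rho,f}\}$. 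The second determinant is the Gram determinant of $i[\rho,A],i[\rho,B],i[\rho,C]$ in the inner-product space $(M_{n,sa},\langle\cdot,\cdot\rangle_{\rho,f})$, each row scaled by the positive constant $f(0)/2$; hence it is nonnegative and equals $(f(0)/2)^3({\rm Vol}_\rho^f)^2$, while the first determinant equals $({\rm Vol}_\rho^{\rm Cov})^2$. Thus $F(f)\ge 0$ reads ${\rm Vol}_\rho^{\rm Cov}(A,B,C)^2\ge (f(0)/2)^3\,{\rm Vol}_\rho^f(i[\rho,A],i[\rho,B],i[\rho,C])^2\ge 0$, and taking square roots of both nonnegative sides yields (\ref{eq1}).

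For part (ii), I would compare $F(f)$ and $F(g)$ termwise. Assuming $\tilde f\le\tilde g$, Proposition \ref{mon} gives $H^f_{ijhklm}\le H^g_{ijhklm}$ for all indices, and since $K_{ijhklm}\ge 0$ the decomposition of Lemma \ref{lem} yields $F(f)\le F(g)$. Recalling from Remark \ref{portmanteau}$(ii)$ that $V(f)^2=\det\{\frac{f(0)}{2}\langle i[\rho,A_h],i[\rho,A_j]\rangle_{\rho,f}\}=\det\{{\rm Cov}_\rho(A_h,A_j)\}-F(f)$, the inequality $F(f)\le F(g)$ is equivalent to $V(f)^2\ge V(g)^2$. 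Since $V(f),V(g)\ge 0$ (each is a positive constant times a volume), this gives $V(f)\ge V(g)\ge 0$, which is (\ref{eq2}).

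Finally I would emphasize where the real difficulty sits: this corollary is essentially bookkeeping that packages Lemma \ref{lem}, Proposition \ref{vol} and Proposition \ref{mon} together, and its single hypothesis, $K_{ijhklm}\ge 0$, is precisely the nontrivial combinatorial positivity statement to be verified separately (the content proved, as announced in the Introduction, in the real case and for the listed classes of matrices). In other words, there is no genuine obstacle internal to the corollary; the entire weight of the argument is shifted onto establishing the sign of $K_{ijhklm}$.
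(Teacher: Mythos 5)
Your proof is correct and follows essentially the same route as the paper: both deduce $F(f)\ge 0$ and $F(f)\le F(g)$ from the decomposition in Lemma \ref{lem} together with the positivity of $H^f$ (Proposition \ref{vol}), the monotonicity $H^f\le H^g$ under $\tilde f\le\tilde g$ (Proposition \ref{mon}), and the hypothesis $K_{ijhklm}\ge 0$, then translate back via Remark \ref{portmanteau}$(ii)$. You merely make explicit some steps the paper compresses (the identification of the determinants with squared volumes and the taking of square roots of nonnegative quantities), and you correctly state the inequality $H^f_{ijhklm}\le H^g_{ijhklm}$ where the paper's text contains an evident typo ($H^f\le H^f$).
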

\begin{proof}
    $(i)$ is a direct consequence of Lemma \ref{lem}.  In order to
    prove $(ii)$, observe that because of Proposition \ref{vol} and
    Proposition \ref{mon} one has that
    $$
    \tilde{f} \leq \tilde{g} \quad \Longrightarrow \quad 0 <
    H_{ijhklm}^f \leq H_{ijhklm}^f.
    $$
    Since
    \[
    F(f)=\dfrac1{6}\sum_{i,j,h,k,l,m}H_{ijhklm}^fK_{ijhklm}
    \]
    and $K_{ijhklm}$ does not depend on f, we get
    \[
    \tilde{f} \leq \tilde{g} \quad \Longrightarrow \quad 0 \leq F(f)
    \leq F(g), 
    \] 
    so that the conclusion follows from the definition of $F$.
\end{proof}

\begin{proposition}
    Let $A,B,C\in\mathcal M_{n,sa}(\mathbb R)$; then $(\ref{eq1})$ and
    $(\ref{eq2})$ hold true.  Furthermore, equality in $(\ref{eq1})$
    holds if and only if $A_0,B_0,C_0$ are linearly independent.  This
    means that the case $N=3$ of the Conjectures \ref{main},
    \ref{equality} and \ref{monot} is true for real matrices.
\end{proposition}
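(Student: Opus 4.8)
The plan is to reduce everything to the single algebraic fact that the index‑dependent factor $K_{ijhklm}$ is non‑negative, since Corollary \ref{cor} already packages both (\ref{eq1}) and (\ref{eq2}) as consequences of $K_{ijhklm}\ge 0$. First I would pass to a real orthonormal eigenbasis $\{\varphi_i\}$ of $\rho$, which exists because in the real setting $\rho$ is a real symmetric density matrix. With respect to such a basis each of $A_0,B_0,C_0$ is a real symmetric matrix, so the entries $a_{ij},b_{ij},c_{ij}$ are real and satisfy $a_{ij}=a_{ji}$, $b_{ij}=b_{ji}$, $c_{ij}=c_{ji}$. Consequently every factor $\mathrm{Re}\{a_{ij}b_{ji}\}$ collapses to the ordinary product $a_{ij}b_{ij}$, and $|a_{ij}|^2=a_{ij}^2$, so the whole expression (\ref{k}) becomes a polynomial with real coefficients in these entries.

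The heart of the argument is to recognize this polynomial as a perfect square. For fixed indices, group the data by the three pairs $(i,j),(h,k),(l,m)$ and form the $3\times3$ matrix
\[
N:=\begin{pmatrix} a_{ij} & b_{ij} & c_{ij}\\ a_{hk} & b_{hk} & c_{hk}\\ a_{lm} & b_{lm} & c_{lm}\end{pmatrix}.
\]
I claim that $K_{ijhklm}=(\det N)^2$. This is the exact analogue of the $N=2$ identity $K_{i,j,k,l}=(a_{ij}b_{kl}-a_{kl}b_{ij})^2$ recorded before Proposition \ref{K=0}. To verify it I would expand $(\det N)^2=\sum_{\pi,\pi'}\mathrm{sgn}(\pi)\,\mathrm{sgn}(\pi')\prod_t N_{t,\pi(t)}N_{t,\pi'(t)}$ and match the monomials with (\ref{k}): the diagonal contributions $\pi=\pi'$ reproduce the $S_3$‑sum of products $|a|^2|b|^2|c|^2$; the cross terms split according to whether $\pi'$ and $\pi$ differ by a $3$‑cycle (sign $+1$), giving the $+2\,\mathrm{Re}\{\cdots\}\mathrm{Re}\{\cdots\}\mathrm{Re}\{\cdots\}$ part of the $S_3$‑sum, or by a transposition (sign $-1$), producing exactly the three families subtracted over $A_3$ with coefficient $-2$. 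A bookkeeping check confirms this is possible: $(\det N)^2$ contributes $6+12+18=36$ signed monomials, matching the $18$ positive and $18$ negative monomials of (\ref{k}). Hence $K_{ijhklm}=(\det N)^2\ge 0$ for every choice of indices.

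With $K_{ijhklm}\ge 0$ in hand, Corollary \ref{cor} immediately yields both (\ref{eq1}) and (\ref{eq2}), establishing the cases $N=3$ of Conjectures \ref{main} and \ref{monot} for real matrices. For the equality statement, recall $F(f)=\tfrac16\sum H^f_{ijhklm}K_{ijhklm}$ from Lemma \ref{lem} and that $H^f_{ijhklm}>0$ by Proposition \ref{vol}. Therefore equality in (\ref{eq1}), i.e.\ $F(f)=0$, holds if and only if $\det N=0$ for every triple of index pairs. The simultaneous vanishing of all these $3\times3$ determinants says precisely that the $3\times n^2$ matrix whose rows are the entry‑lists $(a_{pq})_{p,q},(b_{pq})_{p,q},(c_{pq})_{p,q}$ has rank at most $2$; since $X\mapsto(x_{pq})$ is linear and injective, this is equivalent to $A_0,B_0,C_0$ being linearly dependent. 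This is the equality characterization of Conjecture \ref{equality} for $N=3$, the ``if'' direction being already contained in Remark \ref{portmanteau}$(iv)$.

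The step I expect to be the genuine obstacle is the determinant identity $K_{ijhklm}=(\det N)^2$: matching all $36$ monomials of $(\det N)^2$ against the signed $S_3$‑ and $A_3$‑sums in (\ref{k}) with the correct signs is where the reality hypothesis is indispensable. In the complex case the entries obey $a_{ij}=\overline{a_{ji}}$, the factors $\mathrm{Re}\{a_{ij}b_{ji}\}$ no longer reduce to products $a_{ij}b_{ij}$, and $K_{ijhklm}$ ceases to be a perfect square and can take negative values; this is exactly the phenomenon that prevents the present argument from closing the complex case in general and forces the separate treatment of Pauli‑ and Gell-Mann–type matrices.
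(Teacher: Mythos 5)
Your proof is correct and takes essentially the same route as the paper's: the key identity $K_{ijhklm}=(\det N)^2$ in the real case is exactly the paper's expression (\ref{kr}), and both arguments then combine Corollary \ref{cor} (with $H^f_{ijhklm}>0$) with the observation that the simultaneous vanishing of all $3\times 3$ minors is equivalent to linear dependence of $A_0,B_0,C_0$; your explicit remarks on choosing a real eigenbasis of $\rho$ and on counting the $36$ signed monomials merely flesh out what the paper dismisses as ``a direct computation.'' Note only that the statement's ``linearly independent'' in the equality characterization is a typo for ``linearly dependent,'' which is what you (and the paper's own proof) correctly establish.
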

\begin{proof}
    When $A,B,C$ are real matrices, then $K_{ijhklm}$ reduces to
    \begin{equation}
	\begin{array}{rcl}
	    K_{ijhklm}&=&(a_{ij}b_{hk}c_{lm}-a_{hk}b_{ij}c_{lm}-a_{ij}b_{lm}c_{hk}\\
	    [12pt] &&-a_{lm}b_{hk}c_{ij} + a_{hk}b_{lm}c_{ij}+
	    a_{lm}b_{ij}c_{hk})^2\ge 0,
	\end{array}\label{kr}
    \end{equation} 
    as one can verify by a direct computation.  Hence, Corollary
    \ref{cor} holds.  By definition of $a,$ $b$, $c$, observe that the
    hypothesis of linear independence of $A,B,C$ is equivalent to the
    linear independence of $a,b,c$.  If $A_0,B_0,C_0$ are linearly
    dependent, then by Remark \ref{comment} we are done.

    Conversely, suppose that $A,B,C$ are linearly independent; then we
    want to show that $F(f)>0$.  Since $H_{ijhklm}$ is positive and
    $K_{ijhklm}$ is non negative, this is equivalent to prove that
    $K_{ijhklm}$ is not null for some $i,j,h,k,l,m$.  By hypothesis,
    there exist $i,j,h,k,l,m$ such that the vectors
    $(a_{ij},a_{hk},a_{lm})$, $(b_{ij},b_{hk},b_{lm})$,
    $(c_{ij},c_{hk},c_{lm})$ are linearly independent.  This implies
    that $K_{ijhklm}>0$, since one can express $(\ref{kr})$ as
    \[
    K_{ijhklm}=\left[ \det \left(
    \begin{array}{ccc}
	a_{ij} & a_{hk} & a_{lm} \\
	b_{ij} & b_{hk} & b_{lm} \\
	c_{ij} & c_{hk} & c_{lm} \\
    \end{array}
    \right) \right]^2.
    \]
\end{proof}

\begin{proposition} \label{penultima}
    Let $\rho \in {\cal D}^1_n$ and $A,B,C\in\mathcal M_{n,sa}(\mathbb
    C)$ such that: $A$ is arbitrary; $b_{ii}=0$, $i=1\ldots, n$;
    $c_{ij}=0$, $i,j=1\ldots, n$ and $i\neq j$.  Then $K_{ijhklm}\ge
    0$ for any $i,j,h,k,l,m$, and therefore Conjectures \ref{main} and
    \ref{monot} are true.
\end{proposition}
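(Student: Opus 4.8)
The plan is to reduce everything to the sign of the combinatorial quantity $K_{ijhklm}$ via Corollary \ref{cor}: once $K_{ijhklm}\ge 0$ is known for every choice of indices, both (\ref{eq1}) (Conjecture \ref{main}) and (\ref{eq2}) (Conjecture \ref{monot}) follow at once, since $H^f_{ijhklm}>0$ by Proposition \ref{vol} and the comparison $H^f_{ijhklm}\le H^g_{ijhklm}$ is supplied by Proposition \ref{mon}. So the whole task is to check that the expression $K_{ijhklm}$ of (\ref{k}) is nonnegative under the two hypotheses $b_{pp}=0$ for all $p$ and $c_{pq}=0$ for $p\neq q$.

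First I would record what the hypotheses say at the level of the three index-pairs $(i,j)$, $(h,k)$, $(l,m)$ appearing in (\ref{k}). Call such a pair \emph{diagonal} if its two indices coincide. Because $c$ is diagonal in the eigenbasis of $\rho$, a factor $c_{pq}$ (or $c_{qp}$) survives only on a diagonal pair; because $b$ has vanishing diagonal, a factor $b_{pq}$ survives only on an off-diagonal pair. The decisive observation is then that every mixed factor of the form ${\rm Re}\{b_{pq}c_{qp}\}$ occurring in (\ref{k}) must vanish, since it would require the single pair $(p,q)$ to be simultaneously off-diagonal (for $b$) and diagonal (for $c$). This one remark annihilates the entire second family inside the $S_3$-sum and the entire first family inside the $A_3$-sum.

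Next I would organize the surviving terms by the number $d\in\{0,1,2,3\}$ of diagonal pairs among $(i,j),(h,k),(l,m)$. When $d=0$ every remaining term still carries a $c$-factor, and when $d=3$ every remaining term still carries a $b$-factor, so in both extreme cases $K_{ijhklm}=0$. When $d=1$ the only contributing permutations of $S_3$ force $c$ onto the unique diagonal pair and $b$ onto the two off-diagonal ones, while $A_3$ contributes a single cross term; collecting these, $K_{ijhklm}$ collapses to $|c_{p_0}|^2$ (the entry on the diagonal pair) times exactly the two-variable expression $|a_{pq}|^2|b_{rs}|^2+|a_{rs}|^2|b_{pq}|^2-2{\rm Re}\{a_{pq}b_{qp}\}{\rm Re}\{a_{rs}b_{sr}\}$ built from the two off-diagonal pairs. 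This is precisely the $N=2$ quantity of Section \ref{A}, which is nonnegative. The case $d=2$ is symmetric: $K_{ijhklm}$ collapses to $|b_{p_0}|^2$ (the entry on the unique off-diagonal pair) times the analogous two-variable expression formed from $a$ and $c$ on the two diagonal pairs, again nonnegative. Hence $K_{ijhklm}\ge 0$ in every case, and the proposition follows from Corollary \ref{cor}.

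The work in the third paragraph is bookkeeping over the six permutations of $S_3$ and the three of $A_3$, so the only genuine difficulty is keeping the indices straight; I expect the main obstacle to be verifying, case by case, that the surviving monomials reassemble \emph{exactly} into the nonnegative $N=2$ form rather than leaving an indefinite remainder. All the conceptual content sits in the vanishing of the ${\rm Re}\{b_{pq}c_{qp}\}$ factors: this is what forces the complex expression (\ref{k}) — which, unlike the real case (\ref{kr}), is in general not a perfect square — to break here into manifestly nonnegative pieces.
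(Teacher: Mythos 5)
Your proposal is correct and follows essentially the same route as the paper's proof: both reduce the claim to $K_{ijhklm}\ge 0$ via Corollary \ref{cor} and then run a case analysis over how many of the pairs $(i,j),(h,k),(l,m)$ are diagonal, with the extreme cases vanishing and the intermediate ones collapsing to a nonnegative two-matrix expression. The only cosmetic difference is that you identify the surviving $d=1$ and $d=2$ expressions with the $N=2$ quantity $K_{i,j,k,l}$ of Section \ref{A}, whereas the paper bounds each directly by $\bigl(|a_{lm}||b_{hk}||c_{ii}|-|a_{hk}||b_{lm}||c_{ii}|\bigr)^2\ge 0$ --- the same inequality in substance.
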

\begin{proof} 
    Suppose $i\neq j$, $h\neq k$, $l\neq m$; then $K_{ijhklm}=0$.  If,
    say, $i=j$ and $h\neq k$, $l\neq m$, then
    \[
    \begin{array}{rcl}
	K_{ijhklm}&=&|a_{lm}|^2|b_{hk}|^2|c_{ii}|^2+|a_{hk}|^2|b_{lm}|^2|c_{ii}|^2-
	2|c_{ii}|^2{\rm Re}\{a_{hk}b_{kh}\}{\rm
	Re}\{a_{lm}b_{ml}\}\\[12pt]&\ge& (|a_{lm}||b_{hk}||c_{ii}|-
	|a_{hk}||b_{lm}||c_{ii}|)^2\ge 0.
    \end{array}
    \] 
    When $i=j$, $h=k$ and $l\neq m$, then
    \[
    \begin{array}{rcl}
	K_{ijhklm}&=&|a_{ii}|^2|b_{lm}|^2|c_{kk}|^2+|a_{kk}|^2|b_{lm}|^2|c_{ii}|^2-
	2|b_{lm}|^2{\rm Re}\{a_{ii}c_{ii}\}{\rm
	Re}\{a_{kk}c_{kk}\}\\[12pt]&\ge& (|a_{ii}||b_{lm}||c_{kk}|-
	|a_{kk}||b_{lm}||c_{ii}|)^2\ge 0.
    \end{array}
    \]
    Finally, $K_{iikkmm}=0$.  The remaining cases are similar to the
    previous ones.
\end{proof}

\begin{corollary} \label{comp}
    Let $\rho$ be a diagonal state, that is, $\rho_{ij}=0$ for $i\neq
    j$.  Let $A$, $B$, $C\in\mathcal M_n(\mathbb C)$ self-adjoint
    matrices such that: $A$ is arbitrary; $B_{ii}=0$, $i=1\ldots, n$;
    $C$ is diagonal.  Then $K_{ijhklm}\ge 0$ for any $i,j,h,k,l,m$ and
    therefore Conjectures \ref{main} and \ref{monot} are true.
\end{corollary}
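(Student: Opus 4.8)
The plan is to obtain this corollary as a direct specialization of Proposition \ref{penultima}, the only work being to translate the hypotheses---which are here imposed on the original matrices $A,B,C$ in the computational basis---into the hypotheses on the transformed entries $a,b,c$ required by that proposition. The bridge is the assumption that $\rho$ is diagonal: in that case the standard basis vectors form a complete orthonormal system of eigenvectors of $\rho$, so in the definitions $a_{ij}=\langle A_0\varphi_i|\varphi_j\rangle$, $b_{ij}=\langle B_0\varphi_i|\varphi_j\rangle$, $c_{ij}=\langle C_0\varphi_i|\varphi_j\rangle$ one may take each $\varphi_i$ to be the $i$-th standard basis vector. Hence $a_{ij},b_{ij},c_{ij}$ coincide, up to the usual index/conjugation convention, with the ordinary matrix entries of $A_0,B_0,C_0$.

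With this identification the two nontrivial hypotheses of Proposition \ref{penultima} reduce to short computations. First I would check that $b_{ii}=0$ for every $i$: since $\rho$ is diagonal one has $\Tr(\rho B)=\sum_k\rho_k B_{kk}$, and the assumption $B_{ii}=0$ for all $i$ forces $\Tr(\rho B)=0$; therefore $B_0=B-\Tr(\rho B)I$ agrees with $B$ on the diagonal and $b_{ii}=(B_0)_{ii}=B_{ii}=0$. Second I would check that $c_{ij}=0$ for $i\neq j$: as $C$ is diagonal, so is $C_0=C-\Tr(\rho C)I$, whence all its off-diagonal entries vanish and $c_{ij}=0$ for $i\neq j$. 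The remaining hypothesis, that $A$ is arbitrary, requires nothing.

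At this point the hypotheses of Proposition \ref{penultima} are met, so that proposition yields $K_{ijhklm}\ge 0$ for all $i,j,h,k,l,m$. The conclusion is then immediate from Corollary \ref{cor}: combining this nonnegativity with the strict positivity of $H^f_{ijhklm}$ (Proposition \ref{vol}) and its monotonicity in $f$ (Proposition \ref{mon}), via the identity of Lemma \ref{lem}, gives both $F(f)\ge 0$ and $\tilde f\le\tilde g\Rightarrow F(f)\le F(g)$, which are precisely Conjectures \ref{main} and \ref{monot} in the present setting.

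Since the argument is a pure specialization, I do not anticipate a genuine obstacle. The one point demanding a little attention is the verification $b_{ii}=0$: it genuinely uses that $\rho$ is diagonal, because only then does $\Tr(\rho B)$ reduce to $\sum_k\rho_k B_{kk}$, so that the recentering $B\mapsto B_0$ does not reintroduce nonzero diagonal entries from the hypothesis $B_{ii}=0$. The diagonality of $C$ is invoked in exactly the same spirit, ensuring that recentering preserves diagonality.
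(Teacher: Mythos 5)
Your proposal is correct and follows essentially the same route as the paper: reduce to Proposition \ref{penultima} by showing that diagonality of $\rho$ gives $\Tr(\rho B)=0$ (so $B_0=B$ and $b_{ii}=0$) and that $C_0$ stays diagonal (so $c_{ij}=0$ for $i\neq j$). Your explicit remark that the eigenbasis of a diagonal $\rho$ may be taken to be the standard basis, identifying $a,b,c$ with the entries of $A_0,B_0,C_0$, is a point the paper leaves implicit but is the same argument.
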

\begin{proof} 
    Because of Proposition \ref{penultima} it is enough to prove that
    $B_{ii}=C_{ij}=0$ for $i\neq j$ implies $b_{ii}=c_{ij}=0$, $i\neq
    j$.  Indeed, since $C$ is diagonal, so is $C_0 := C - {\rm
    Tr}(\rho C)I$ and hence $c_{ij}=0$ for $i\not=j$.  On the other
    hand, if $B_{ii}=0$, $i=1,\ldots, n$, then
    \[
    {\rm Tr}(\rho B)=\sum_{k,l}\rho_{kl}B_{lk}=0,
    \] 
    since $\rho$ is diagonal, so that $B_0=B$.  This implies that
    $b_{ii}=B_{ii}=0$.
\end{proof}

\begin{remark}
    Examples of self-adjoint matrices with the above properties are
    given by Pauli matrices and by ge\-neralized Gell-Mann matrices
    (see \cite{BertiniCacciatoriCerchiai:2006} and references
    therein).
\end{remark}

\section{Covariance, correlation and entanglement} \label{ent}

In the papers \cite{LuoLuo:2003} \cite{LuoZZhang:2004} Luo {\sl et
al.} proved that the covariance is not a good measure to quantify
entanglement properties of states; to this end, Wigner-Yanase
correlation was proposed.  Hereafter we recall the more general
definition of {\it metric adjusted correlation} (or {\it $f$-correlation}) introduced in \cite{Hansen:2006b}
\cite{GibiliscoImparatoIsola:2007}; Wigner-Yanase correlation is just
a particular example of metric adjusted correlation.  We show, by the same example
as in \cite{LuoZZhang:2004}, that the general metric adjusted correlation has the
same properties of Wigner-Yanase correlation with respect to entanglement.

\begin{definition}
    For $A,B \in M_{n,sa}$, $\rho \in {\cal D}_n^1$ and $f \in {\cal
    F}_{op}$, the {\sl metric adjusted correlation} (or {\sl
    $f$-correlation}) is defined as
    $$
    {\rm Corr}_{\rho}^{f}(A,B):= \frac{f(0)}{2} \langle i[\rho,A],
    i[\rho,B] \rangle_{\rho,f} = {\rm Tr}(\rho AB)-{\rm Tr}(m_{\tilde
    f}(L_{\rho},R_{\rho})(A) \cdot B).
    $$
\end{definition}

\begin{proposition} {\rm \cite{GibiliscoImparatoIsola:2007}} \label{gib}
    \[
    \begin{array}{rcl}
	{\rm Corr}_{\rho}^{f}(A,B)&=&{\rm Cov}_\rho(A,B)-{\rm
	Tr}(m_{\tilde f}(L_\rho,R_\rho)(A_0)B_0)\\[12pt]
	&=&\Cov_\rho(A,B)-\sum_{i,j}m_{\tilde
	f}(\lambda_i,\lambda_j)a_{ij}b_{ji}
    \end{array}
    \] 
\end{proposition}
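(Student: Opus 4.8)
The plan is to recognize that Proposition \ref{gib} merely reassembles results already proved in the excerpt, so the task is to chain them together and to dispatch one small symmetrization. For the first displayed equality I would start from the defining identity ${\rm Corr}_{\rho}^{f}(A,B) = \frac{f(0)}{2}\langle i[\rho,A], i[\rho,B]\rangle_{\rho,f}$ and apply Theorem \ref{!} verbatim, which rewrites the right-hand side as ${\rm Cov}_{\rho}(A,B) - {\rm Tr}(m_{\tilde f}(L_\rho, R_\rho)(A_0) B_0)$. This gives the top line with no extra work.

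For the second equality I would pass to the spectral expansion supplied by Proposition \ref{?}. That proposition writes $\frac{f(0)}{2}\langle i[\rho,A], i[\rho,B]\rangle_{\rho,f} = \frac{1}{2}\sum_{i,j}(\lambda_i + \lambda_j){\rm Re}\{a_{ij} b_{ji}\} - \sum_{i,j} m_{\tilde f}(\lambda_i, \lambda_j){\rm Re}\{a_{ij} b_{ji}\}$ and identifies the first sum with ${\rm Cov}_{\rho}(A,B)$. Substituting yields ${\rm Corr}_{\rho}^{f}(A,B) = {\rm Cov}_{\rho}(A,B) - \sum_{i,j} m_{\tilde f}(\lambda_i, \lambda_j){\rm Re}\{a_{ij} b_{ji}\}$, which agrees with the asserted line except that the statement omits the ${\rm Re}$.

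The only genuine step is therefore to show that the ${\rm Re}$ may be dropped, i.e.\ that $\sum_{i,j} m_{\tilde f}(\lambda_i, \lambda_j) a_{ij} b_{ji}$ is real and equals $\sum_{i,j} m_{\tilde f}(\lambda_i, \lambda_j){\rm Re}\{a_{ij} b_{ji}\}$. I would argue this by symmetrizing the double sum in the pair $(i,j)$. Because $A_0$ and $B_0$ are self-adjoint their matrix entries obey $a_{ji} = \overline{a_{ij}}$ and $b_{ij} = \overline{b_{ji}}$, so $a_{ji} b_{ij} = \overline{a_{ij} b_{ji}}$; and because $m_{\tilde f}$ is a mean it is symmetric in its arguments, $m_{\tilde f}(\lambda_j, \lambda_i) = m_{\tilde f}(\lambda_i, \lambda_j)$. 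Averaging the $(i,j)$ and $(j,i)$ contributions then turns each $a_{ij} b_{ji}$ into $\frac{1}{2}(a_{ij} b_{ji} + \overline{a_{ij} b_{ji}}) = {\rm Re}\{a_{ij} b_{ji}\}$, which is precisely the identity needed. This symmetrization is entirely routine, so I expect no real obstacle; all the content of the proposition is already carried by Theorem \ref{!} and Proposition \ref{?}.
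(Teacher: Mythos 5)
Your proof is correct and is exactly the intended derivation: the paper states Proposition \ref{gib} by citation without giving a proof, and the only route its stated results make available is precisely yours --- the definition of ${\rm Corr}_{\rho}^{f}$ plus Theorem \ref{!} for the first equality, and Proposition \ref{?} for the second. Your symmetrization step (using $a_{ji}b_{ij}=\overline{a_{ij}b_{ji}}$ from self-adjointness together with the symmetry $m_{\tilde f}(\lambda_j,\lambda_i)=m_{\tilde f}(\lambda_i,\lambda_j)$) correctly justifies replacing $\sum_{i,j}m_{\tilde f}(\lambda_i,\lambda_j){\rm Re}\{a_{ij}b_{ji}\}$ by $\sum_{i,j}m_{\tilde f}(\lambda_i,\lambda_j)a_{ij}b_{ji}$, which is the one detail the quoted results leave implicit.
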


\begin{proposition}  {\rm \cite{GibiliscoImparatoIsola:2007}} \label{pure}
    If $\rho$ is pure, then
    $$
    {\rm Corr}^f_{\rho}(A,B)= {\rm Cov}_{\rho}(A,B) \qquad \qquad
    \forall f \in {\cal F}_{op}.
    $$
\end{proposition}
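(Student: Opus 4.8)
The plan is to isolate the single quantity by which the $f$-correlation and the covariance can differ, and to show that purity annihilates it. By Proposition \ref{gib} one has, for a faithful $\rho$,
$$
{\rm Corr}^f_\rho(A,B) = {\rm Cov}_\rho(A,B) - {\rm Tr}(m_{\tilde f}(L_\rho,R_\rho)(A_0)B_0),
$$
so the identity ${\rm Corr}^f_\rho(A,B)={\rm Cov}_\rho(A,B)$ is equivalent to the vanishing of the correction term ${\rm Tr}(m_{\tilde f}(L_\rho,R_\rho)(A_0)B_0)$ at a pure state. This vanishing is exactly Proposition \ref{puri}. Hence, once ${\rm Corr}^f_\rho$ at a pure state is read as the obvious extension of the definition, the proposition follows at once by combining Propositions \ref{gib} and \ref{puri}; no estimate is required.

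To keep the argument self-contained, and to expose the mechanism behind Proposition \ref{puri}, I would also check the vanishing directly in an eigenbasis of $\rho$. Using the expansion in Proposition \ref{gib},
$$
{\rm Tr}(m_{\tilde f}(L_\rho,R_\rho)(A_0)B_0) = \sum_{i,j} m_{\tilde f}(\lambda_i,\lambda_j)\, a_{ij} b_{ji},
$$
and for a pure state $\rho=|\varphi_1\rangle\langle\varphi_1|$ the spectrum is $\lambda_1=1$ and $\lambda_h=0$ for $h\geq 2$. Since $\tilde f\in{\cal F}_{op}^{\,n}$ has $\tilde f(0)=0$, the symmetric scalar mean $m_{\tilde f}(\lambda_i,\lambda_j)$ vanishes whenever $\lambda_i$ or $\lambda_j$ equals $0$, while $m_{\tilde f}(1,1)=\tilde f(1)=1$; thus only the term $i=j=1$ can survive. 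That term is $a_{11}b_{11}$, and $a_{11}=\langle A_0\varphi_1,\varphi_1\rangle=0$ directly from $A_0=A-{\rm Tr}(\rho A)I$ together with $\langle A\varphi_1,\varphi_1\rangle={\rm Tr}(\rho A)$. Hence the sum is zero.

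The only genuinely delicate point, and the place where any real work would hide, is the meaning of ${\rm Corr}^f_\rho$ on the boundary of the state space. For non-regular $f$ the quantum Fisher information $\langle i[\rho,A],i[\rho,B]\rangle_{\rho,f}$ can diverge at a pure $\rho$, so the product $\frac{f(0)}{2}\langle i[\rho,A],i[\rho,B]\rangle_{\rho,f}$ cannot be evaluated by naive substitution and must be understood as the extended quantity coming from the radial extension discussed earlier. With that convention fixed --- the same one under which Proposition \ref{puri} is stated --- the algebra above is immediate and the proposition is proved.
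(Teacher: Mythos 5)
Your proof is correct and takes essentially the same route as the paper, which obtains the statement by combining the decomposition of Proposition \ref{gib} with the vanishing of ${\rm Tr}(m_{\tilde f}(L_{\rho},R_{\rho})(A_0)B_0)$ on pure states given in Proposition \ref{puri}, under the same extended reading of ${\rm Corr}^f_{\rho}$ at the boundary that you flag. Your supplementary eigenbasis check, resting on $\tilde f(0)=0$ (so $m_{\tilde f}(\lambda_i,\lambda_j)=0$ whenever an eigenvalue vanishes) and $a_{11}=0$, is sound and matches the computation the paper itself carries out in Section \ref{ent}.
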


Consider, now, 
\[ 
\rho:=\frac12\left(
\begin{array}{cccc}
    1 & 0 & 0 & 0 \\
    0 & 0 & 0 & 0 \\
    0 & 0 & 0 & 0 \\
    0 & 0 & 0 & 1 \\
\end{array}
\right),\quad
\rho':=\frac12\left(
\begin{array}{cccc}
    1 & 0 & 0 & 1 \\
    0 & 0 & 0 & 0 \\
    0 & 0 & 0 & 0 \\
    1 & 0 & 0 & 1 \\
\end{array}
\right)
\]

Note that the first state is a mixture of two disentangled states
while the second is a Bell state which is maximally entangled (see
\cite{LuoZZhang:2004} \cite{LuoLuo:2003}).

Set
\[
A:=\frac12\left(
\begin{array}{cccc}
    1 & 0 & 0 & 0 \\
    0 & 1 & 0 & 0 \\
    0 & 0 & -1 & 0 \\
    0 & 0 & 0 & -1 \\
\end{array}
\right),\quad
B:=\frac12\left(
\begin{array}{cccc}
    1 & 0 & 0 & 1 \\
    0 & -1 & 0 & 0 \\
    0 & 0 & 1 & 0 \\
    1 & 0 & 0 & -1 \\
\end{array}
\right).
\]                           
                           
In \cite{LuoZZhang:2004} \cite{LuoLuo:2003} it is shown that
\begin{equation} \label{lu}
    {\rm Cov}_{\rho}(A,B) ={\rm Cov}_{\rho'}(A,B)=1,
\end{equation} 
while 
\[
{\rm Corr}^{f_{WY}}_{\rho}(A,B)=0,\quad\quad {\rm
Corr}^{f_{WY}}_{\rho'}(A,B)=1.  
\]

Due to Proposition \ref{gib}, this result holds more generally for
any $f$-correlation.

\begin{proposition} 
    For any $f\in\mathcal F_{op}$ one gets
    \[
    {\rm Corr}^f_{\rho}(A,B)= 0,\quad\quad {\rm Corr}^f_{\rho'}(A,B)=1.
    \]
\end{proposition}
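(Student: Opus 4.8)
The plan is to treat the two states separately, exploiting that they lie at opposite ends of the spectral picture: $\rho'$ is pure, whereas $\rho$ is diagonal with a doubly degenerate nonzero eigenvalue.

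First I would dispose of $\rho'$. A direct check gives $(\rho')^2=\rho'$, so $\rho'$ is a pure state (the rank-one Bell projector). Proposition \ref{pure} then yields ${\rm Corr}^f_{\rho'}(A,B)={\rm Cov}_{\rho'}(A,B)$ for every $f\in\mathcal F_{op}$, and by (\ref{lu}) the right-hand side equals $1$. This settles the second identity for all $f$ at once, with no computation beyond recognizing purity.

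For $\rho$ I would use the explicit formula of Proposition \ref{gib},
\[
{\rm Corr}^f_{\rho}(A,B)={\rm Cov}_\rho(A,B)-\sum_{i,j}m_{\tilde f}(\lambda_i,\lambda_j)\,a_{ij}b_{ji},
\]
which stays meaningful here even though $\rho$ is not faithful (cf. Remark \ref{portmanteau}$(v)$). Since $\rho$ is diagonal I take the eigenbasis $\{\varphi_i\}$ to be the standard basis, with spectrum $\lambda_1=\lambda_4=\tfrac12$ and $\lambda_2=\lambda_3=0$. One checks ${\rm Tr}(\rho A)={\rm Tr}(\rho B)=0$, so $A_0=A$, $B_0=B$ and the numbers $a_{ij},b_{ij}$ are just the matrix entries of $A,B$. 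The decisive observation is that $\tilde f\in\mathcal F^{\,n}_{op}$, that is $\tilde f(0)=0$: hence $m_{\tilde f}(\lambda_i,\lambda_j)=0$ whenever $\lambda_i$ or $\lambda_j$ vanishes, and every term of the sum touching an index in $\{2,3\}$ drops out. Only the block $i,j\in\{1,4\}$ survives, where $m_{\tilde f}(\tfrac12,\tfrac12)=\tfrac12\,\tilde f(1)=\tfrac12$ irrespective of $f$. I would then evaluate $\tfrac12\sum_{i,j\in\{1,4\}}a_{ij}b_{ji}$ and verify that it coincides with ${\rm Cov}_\rho(A,B)$, so that ${\rm Corr}^f_\rho(A,B)=0$.

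The point worth stressing is that the outcome is completely independent of $f$, and this is forced by the spectral structure rather than by any feature of a particular mean: on the support of $\rho$ all nonzero eigenvalues coincide, so $m_{\tilde f}$ there reduces to the universal value $\tfrac12\,\tilde f(1)=\tfrac12$, while off the support the vanishing $\tilde f(0)=0$ erases the remaining contributions. I do not expect a genuine obstacle; the only points needing care are justifying $m_{\tilde f}(\lambda,0)=0$ from $\tilde f(0)=0$ (via $m_{\tilde f}(x,0)=x\,\tilde f(0)$ together with symmetry of the mean) and handling the non-faithfulness of $\rho$ by reading Proposition \ref{gib} as an identity for its right-hand expression, which is well defined for arbitrary states.
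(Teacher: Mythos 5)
Your proof is correct and essentially the paper's own: $\rho'$ is dispatched by recognizing purity and invoking Proposition \ref{pure} together with (\ref{lu}), and $\rho$ by evaluating the sum in Proposition \ref{gib} in the canonical eigenbasis, both arguments reducing that sum to the same $f$-independent quantity $\sum_i\lambda_iA_{ii}B_{ii}={\rm Cov}_\rho(A,B)$, whence ${\rm Corr}^f_\rho(A,B)=0$. The only (harmless) difference is the mechanism you use for the $f$-independence: the paper first uses that $A$ is diagonal, so only the $i=j$ terms survive, and then applies the mean property $m_{\tilde f}(x,x)=x$, whereas you first erase every index meeting $\ker\rho$ via $m_{\tilde f}(\lambda,0)=\lambda\,\tilde f(0)=0$ (correctly justified from the scalar form of the mean and $\tilde f\in\mathcal F_{op}^{\,n}$) and then use the constant value $m_{\tilde f}\bigl(\tfrac12,\tfrac12\bigr)=\tfrac12$ on the surviving $\{1,4\}$ block, where the off-diagonal terms $a_{14}b_{41}$, $a_{41}b_{14}$ still vanish because $A$ is diagonal.
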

\begin{proof}
    Since $\rho'$ is a pure state, from Proposition \ref{pure} and due
    to (\ref{lu}) one has that ${\rm Corr}^f_{\rho'}(A,B)={\rm
    Cov}_{\rho'}(A,B)=1$.

    Consider, now, the state $\rho$ and let $\{e_1,e_2,e_3,e_4 \}$ be
    the canonical basis.  A direct computation shows that its
    eigenvalues are $\lambda_1=\lambda_4=\frac12$ and
    $\lambda_2=\lambda_3=0$, and the corresponding eigenvectors are
    $\{e_1,e_4\}$ and $\{e_2,e_3\}$, respectively.  Observe that $A$
    and $B$ are centered with respect to both the states $\rho$ and
    $\rho'$ (namely $A=A_0$, $B=B_0$).  Moreover, since the
    eigenvectors are the canonical basis one gets $A_{ij}=a_{ij}$ and
    $B_{ij}=b_{ij}$.

    This implies that
    \[
    \begin{array}{rcl}
	\displaystyle \sum_{i,j=1}^4m_{\tilde
	f}(\lambda_i,\lambda_j)a_{ij}b_{ji}&=&
	\displaystyle\sum_{i=1}^4m_{\tilde
	f}(\lambda_i,\lambda_i)A_{ii}B_{ii}\\[12pt]
	&=&\displaystyle\sum_{i=1}^4\lambda_iA_{ii}B_{ii}=1,
    \end{array}
    \]
    where we used the mean property that $m_f(x,x)=x$, for any non
    negative $x$.  Again from (\ref{lu}), one obtains ${\rm
    Corr}^f_{\rho}(A,B)= 0$.  
\end{proof}

\end{document}